\definecolor{ptblue}{RGB}{15,76,129} 
\definecolor{ptemerald}{HTML}{009473} 
\definecolor{bluegray}{rgb}{0.4, 0.6, 0.8}
\definecolor{ptilluminating}{HTML}{F5DF4D} 
\definecolor{ptgray}{HTML}{939597}
\definecolor{cobalt}{rgb}{0.0, 0.28, 0.67}
\newcommand{\angleb}[1]{\langle #1 \rangle}
\crefname{ALC@line}{line}{lines} 
\DeclareMathOperator*{\argmax}{arg\,max}
\DeclareMathOperator*{\argmin}{arg\,min}
\newtheorem{theorem}{Theorem}
\newtheorem{corollary}[theorem]{Corollary}
\newtheorem{definition}{Definition}
\newtheorem{lemma}[theorem]{Lemma}
\newtheorem{proposition}{Proposition}
\newtheorem{claim}{Claim}
\newcommand\numberthis{\addtocounter{equation}{1}\tag{\theequation}}
\newcommand{\alloc}{\mathcal{A}}
\newcommand{\allocb}{\mathcal{B}}
\newcommand{\EFEo}{\mathrm{IEF}1}
\newcommand{\IEFo}{\mathrm{IEF}1}
\newcommand{\XEFo}{\mathrm{XEF}1}
\newcommand{\EF}{\mathrm{EF}}
\newcommand{\PO}{\mathrm{PO}}
\newcommand{\fPO}{\mathrm{fPO}}
\newcommand{\EFo}{\mathrm{EF}1}
\newcommand{\OPT}{\mathrm{OPT}}
\newcommand{\SW}{\mathrm{SW}}
\newcommand{\Propo}{\mathrm{Prop}1}
\newcommand{\LP}{\textbf{\textrm{LP}}}
\newcommand{\Dual}{\textbf{\textrm{Dual-LP}}}
\newcommand{\vbar}{\widehat{v}}
\newcommand{\mbar}{\widehat{m}}
\newcommand{\barI}{\widehat{\mathcal{I}}}
\newcommand{\supp}{\mathrm{supp}}
\newcommand{\pr}{\mathrm{par}}
\newcommand{\ch}{\mathrm{ch}}
\newcommand{\cho}{\overline{\mathrm{ch}}}
\newcommand{\calI}{\mathcal{I}}
\newcommand{\calA}{\mathcal{A}}
\newcommand{\calD}{\mathcal{D}}
\newcommand{\calB}{\mathcal{B}}
\newcommand{\calX}{\mathcal{X}}
\newcommand{\calY}{\mathcal{Y}}
\newcommand{\calZ}{\mathcal{Z}}
\newcommand{\calF}{\mathcal{F}}
\newcommand{\calS}{\mathcal{S}}
\newcommand{\calQ}{\mathcal{Q}}
\newcommand{\calO}{\mathcal{O}}
\newcommand{\calM}{\mathcal{M}}
\newcommand{\calL}{\mathcal{L}}
\title{\bfseries Introspectively Envy-Free and Efficient Allocation of \\ Indivisible Mixed Manna}
\author{Siddharth Barman\thanks{Indian Institute of Science. {\tt barman@iisc.ac.in}} \and Paritosh Verma\thanks{Purdue University. {\tt paritoshverma97@gmail.com}}}
\date{}
\begin{document}

\maketitle

\begin{abstract}
The existence of allocations that are fair and efficient, simultaneously, is a central inquiry in fair division literature. A prominent result in discrete fair division shows that the complementary desiderata of fairness and efficiency can be achieved together when allocating indivisible items with nonnegative values; specifically, for indivisible goods and among agents with additive valuations, there always exists an allocation that is both envy-free up to one item ($\EFo$) and Pareto efficient ($\PO$). While a recent breakthrough extends the $\EFo$ and $\PO$ guarantee to indivisible chores (items with negative values), the question remains open for indivisible mixed manna, i.e., for indivisible items whose values can be positive, negative, or zero.  The current work makes notable progress in positively resolving this central question. 

For indivisible mixed manna and additive valuations, we establish the existence of allocations that are $\PO$ and \emph{introspectively envy-free up to one item} ($\IEFo$). In an $\IEFo$ allocation, each agent can eliminate its envy towards all the other agents by either adding an item or removing an item from its own bundle. The notion of $\IEFo$ coincides with $\EFo$ for indivisible chores, and hence, our result generalizes the aforementioned existence guarantee for chores. Our techniques can also be adopted to obtain an alternative proof for the existence of $\EFo$ and $\PO$ allocations of indivisible goods. Hence, along with the $\IEFo$ and $\PO$ result for mixed manna, we provide a unified approach for establishing the $\EFo$ and $\PO$ guarantee for indivisible goods and indivisible chores, respectively. In addition, we utilize our result for indivisible items, along with limit arguments, to develop a distinct proof of the noted $\EF$ and $\PO$ guarantee for divisible mixed manna. Our work highlights an interesting application of the Knaster-Kuratowski-Mazurkiewicz (KKM) Theorem in discrete fair division and develops multiple, novel structural insights and algorithmic ideas.  
\end{abstract}



\section{Introduction}
The existence of allocations that simultaneously achieve fairness and efficiency lies at the core of the fair division literature \cite{BT96fair, moulinHandbookComputational16, moulinFairDivision03}. This question has been extensively studied over the past several decades, particularly in the context of goods -- items with nonnegative values. Indeed, the classic result of Varian \cite{varianEquityenvy74} establishes that, for divisible goods, a fair and efficient allocation always exists. The result lays a normative foundation for competitive markets with equal incomes: for divisible goods, a competitive equilibrium allocation (under equal incomes) is both envy-free---i.e., no agent values another's bundle over their own \cite{foleyResourceallocation67}---and Pareto efficient. In fact, such an allocation can be found by maximizing the Nash social welfare (the geometric mean of agents' valuations); equivalently, by solving the Eisenberg-Gale convex program \cite{eisenberg1959consensus}. 

While the compatibility of fairness and efficiency is well-studied for goods, it remains less resolved in the equally well-motivated settings of chores (negatively valued items) and, more generally, mixed manna (items whose valuations may vary in sign across agents). Notably, in contrast to the classic result for divisible goods \cite{varianEquityenvy74}, an analogous guarantee of fairness and efficiency for divisible mixed manna was established only relatively recently \cite{Bogo2017}. The results in \cite{Bogo2017} underscore that the mixed manna setting is significantly more complex than goods or even chores alone: For divisible mixed manna, unlike in the case of divisible goods, the set of competitive equilibria may be non-convex and even disconnected. Furthermore, although a competitive equilibrium allocation for divisible goods can be efficiently computed via a convex program, finding such an allocation in the divisible mixed manna setting is {\rm PPAD}-hard \cite{chaudhury2021competitive}.

The current work addresses the compatibility between fairness and efficiency in the context of indivisible items, i.e., for resources and chores that cannot be fractionally assigned among the agents. For indivisible items as well, the separation between goods, chores, and mixed manna stands. For indivisible goods, a prominent result of Caragiannis et al.~\cite{caragiannis2019unreasonable} establishes that, under additive valuations, there always exists an allocation that is both Pareto efficient ($\PO$)
 and fair. Since an envy-free allocation is not guaranteed to exist for indivisible items, fairness in discrete fair division is captured through meaningful relaxations, most notably, envy-freeness up to one item ($\EFo$) \cite{budishCombinatorialAssignment11}. An allocation of indivisible mixed manna is said to be $\EFo$ if, existing envy that any agent $i$ might have towards another agent $j$, can be eliminated by either the (hypothetical) removal of a single good from $j$'s bundle or the (hypothetical) removal of a chore from $i$'s bundle.\footnote{When all the items are goods, as in \cite{caragiannis2019unreasonable}, this definition reduces to the requirement that any existing envy can be eliminated by removing a good from the other agent's bundle.} In particular, Caragiannis et al.~\cite{caragiannis2019unreasonable} showed that, under additive valuations, an allocation of indivisible goods that maximizes the Nash social welfare is both $\EFo$ and $\PO$. To compute $\EFo$ and $\PO$ allocations of indivisible goods, a pseudo-polynomial time algorithm---based on a Fisher market framework---was developed in \cite{BKV2018}. Furthermore, when the number of agents is fixed, such allocations of goods can be computed efficiently~\cite{mahara2024}. 
 
In contrast to the established results for goods, the existence of $\EFo$ and $\PO$ allocations for indivisible chores---and, more generally, for mixed manna---remained open  for nearly a decade. In fact, until the recent resolution by Mahara~\cite{mahara2025existence} for indivisible chores, the existence question was open even among four agents with additive valuations. Prior to~\cite{mahara2025existence}, results for indivisible chores addressed special cases. For instance, \cite{gargetalNewAlgorithms23} showed that $\EFo$ and $\PO$ allocations of indivisible chores exist among three agents. In addition, for chore division among $n$ agents, an $(n-1)$ multiplicative approximation of $\EFo$ was considered in \cite{GargConstantFactorSTOC25}. 

As mentioned, the work of Mahara~\cite{mahara2025existence} proves that an $\EFo$ and $\PO$ allocation of indivisible chores always exists among agents with additive valuations. Notably, however, the existence remains unresolved for indivisible mixed manna\footnote{This is, in fact, explicitly posed as an open problem in \cite{mahara2025existence}.} -- indivisible items whose values can be positive, negative, or zero.  

This work makes notable progress in positively resolving the fairness plus efficiency question for indivisible mixed manna. We establish that, under additive valuations, there always exists an allocation of indivisible mixed manna that is $\PO$ and \emph{introspectively envy-free up to one item} ($\IEFo$). In an $\IEFo$ allocation, each agent can eliminate its envy towards all the other agents by either adding an item or removing an item from its own bundle. That is, unlike $\EFo$---wherein envy is resolved by removing a good from the envied agent's bundle---$\IEFo$ is introspective: each agent hypothetically modifies its own bundle, adding a good or removing a chore, to eliminate  envy.

The notion of $\IEFo$ coincides with $\EFo$ when all the indivisible items are chores, and hence, our result generalizes the existence guarantee for chores \cite{mahara2025existence}. Our techniques can also be adopted to obtain alternate proof for the existence of $\EFo$ and $\PO$ allocations of indivisible goods. Hence, along with the $\IEFo$ and $\PO$ result for mixed manna, we provide a unified approach for establishing the $\EFo$ and $\PO$ guarantee for indivisible goods and indivisible chores, respectively. 

Fairness and efficiency for indivisible mixed manna among $n$ agents is also considered in~\cite{barman2025fair}. The envy-freeness guarantee in \cite{barman2025fair}, however, requires a reallocation of up to $(n-1)$ items. Both \cite{mahara2025existence} and \cite{barman2025fair} introduce novel applications of the Knaster-Kuratowski-Mazurkiewicz (KKM) theorem for the indivisible-items setting. Generalizing both of these results,\footnote{An $\IEFo$ and $\PO$ allocation is also $\mathrm{EFR}$-$(n-1)$ and $\PO$, where $\mathrm{EFR}$-$(n-1)$ is the notion studied in \cite{barman2025fair}.} the current work advances the KKM-based framework for discrete fair division and develops new combinatorial ideas; \Cref{subsection:techniques} highlights our distinct technical contributions.  

\subsection{Our Results and Techniques}
\label{subsection:techniques}
Our results are summarized next. 
\begin{itemize}
\item The current work establishes the existence of $\IEFo$ and $\PO$ allocations for indivisible mixed manna with additive valuations (\Cref{theorem:main_fair_efficient}). 
\item Furthermore, our framework can be adapted to establish a fairness guarantee in terms of a chiral variant of $\IEFo$. In particular, one can also consider allocations that are extrospectively envy free up to one item ($\XEFo$) -- each agent $i$ can eliminate its envy towards any other agent $j$ by either removing  a good from $j$'s bundle or adding a chore to $j$'s bundle. Our KKM-based framework can be modified to prove that $\XEFo$ and $\PO$ allocations always exist (\Cref{theorem:OutPO} in \Cref{section:argument-works-for-goods}).  

When all the indivisible items are goods, $\XEFo$ reduces to $\EFo$. Hence, our proof technique also leads to the $\EFo$ and $\PO$ guarantee for indivisible goods. 

\item In addition, we establish that $\IEFo$ and fractionally Pareto efficient ($\fPO$) allocations always exist for indivisible mixed manna under additive valuations (\Cref{theorem:addingFtoPO}). Note that an (integral) allocation is said to be fractionally Pareto efficient if it is not Pareto dominated even by fractional allocations of the items. 

\item Finally, we prove that our $\IEFo$ and $\fPO$ result (\Cref{theorem:addingFtoPO}), along with limit arguments, imply the existence of envy-free ($\EF$) and $\fPO$ allocations of divisible mixed manna (\Cref{theomem:divisibleEFnPO}). This result for the divisible case provides an alternative proof of the existential guarantee in \cite{Bogo2017} for additive valuations. 
\end{itemize}

The novel technical ideas and algorithmic techniques developed in this work are outlined next. 

For indivisible goods, $\EFo$ plus $\PO$ guarantees were obtained by either maximizing Nash social welfare \cite{caragiannis2019unreasonable} or via the Fisher market based approach \cite{BKV2018,mahara2025existence,garg2023computing}. It was shown in \cite{eckart2024fairness} that there does not exist a welfare function (analogous to Nash social welfare) that guarantees $\EFo$ for chores (and, hence, for mixed manna). In addition, the market based approach does not directly extend to the mixed manna setting.

A successful alternative approach, which extends beyond the goods setting, was identified recently in \cite{barman2025fair} and \cite{mahara2025existence}. These results are based on novel applications of the Knaster-Kuratowski-Mazurkiewicz (KKM) theorem for the indivisible-items setting along with combinatorial arguments. However, the $\EFo$ and $\PO$ guarantee in \cite{mahara2025existence} applies specifically to indivisible chores, with the mixed manna case explicitly stated as an open problem. The envy-freeness guarantee in \cite{barman2025fair} holds for mixed manna, however, it requires a reallocation of up to $(n-1)$ items among the $n$ agents. 

Establishing the $\IEFo$ and $\PO$ guarantee for mixed manna requires new ideas on both the KKM front and in the subsequent combinatorial arguments. We highlight these below. \\

\noindent
{\bf Invoking the KKM Theorem.} A key difficulty that arises in the mixed manna setting is the presence of items that are valued zero by some agents and positively or negatively by others. In the setting with only chores~\cite{mahara2025existence}, such items can be preprocessed: one can a priori remove such an item from consideration, identify a fair and efficient allocation in the remaining instance, and then subsequently assign the item to any agent that values it at zero. This fix maintains the fairness and the efficiency guarantee in the all-chores case. The KKM theorem in~\cite{mahara2025existence} is essentially used to find a (component-wise) positive weight vector for the agents. The final, fair and efficient allocation, is a weighted welfare-maximizing allocation with respect to these weights. In the presence of such zero-valued items, the covering condition required to invoke the KKM theorem breaks down. Moreover, the following example shows that, in the absence of an auxiliary fix, no positive weights can result in an allocation that satisfies a desired fairness requirement:  Consider an instance with two agents and $m$ items. Agent $1$ values each item at $1$ and agent $2$ values each item at $0$. Under any positive weights, $w_1, w_2>0$, any weighted welfare-maximizing allocation assigns all the items to agent $1$. That is, under any such efficient allocation, agent $2$ receives an empty bundle. Hence, we get that agent $2$ can never be envy-free when considering prices (dual variables), i.e., agent $2$ can never be price envy-free (contrary to the desired property in \Cref{theorem:KKM-application}). 
 
To handle the issue with the zero-valued items, we construct a non-degenerate instance and introduce an auxiliary item with an identical, positive value for all the agents. The presence of this auxiliary good ensures that each agent, under suitable weights, can always receive at least one item, particularly the auxiliary good. This is a subtle, but key, step towards ensuring that,  the covering condition for the KKM theorem (\Cref{lemma:KKM-covering}) holds, even in the mixed manna setting. That is, even for mixed manna, we are able to invoke the KKM Theorem (\Cref{theorem:KKM}) for establishing the existence of a welfare-maximizing (efficient) and fair allocation (\Cref{theorem:KKM-application}). 

In our $\XEFo$ and $\PO$ result (\Cref{theorem:OutPO}), the definition of the sets ($\overline{C}_i$s) involved in the KKM application is also novel. 

Furthermore, in contrast to the current work and \cite{mahara2025existence}, the definition of the KKM sets in \cite{barman2025fair} does not involve prices (dual variables) and price envy-freeness. Hence, our arguments to invoke KKM differ from the ones in \cite{barman2025fair}.

We also additively perturb the non-zero values in the given fair division instance to construct a non-degenerate one. These perturbations provide us with crucial non-degeneracy properties (Lemmas \ref{lemma:three-types} and \ref{lemma:non-degeneracy}). The construction of the non-degenerate instance relies on a collection of interdependent parameters that control the magnitude of the perturbations ($\varepsilon$), the value of the auxiliary item ($\lambda$), weight shifts for the agents ($\eta$), and other quantities (e.g., $\omega$). These parameters must be carefully defined, since the proofs delicately depend on their values and interrelationships. Hence, one of the technical contributions of this work is the construction of the non-degenerate instance with an auxiliary item (and carefully selected parameters) that addresses the challenges posed by zero-valued items and enables the application of the KKM theorem. \\

\noindent
{\bf Combinatorial Insights and the Augmenting Tree Algorithm.} 
Having identified, via the KKM theorem, a weight vector ($w^*$) and an associated collection of weighted welfare-maximizing allocations that are price envy-free for the $n$ agents, we proceed to constructively prove the existence of the desired fair and efficient allocation. \cite{mahara2025existence} follows a similar high-level approach, albeit without the above-mentioned complexities of mixed manna. However, the subsequent argument and the proofs in \cite{mahara2025existence} require all items to be chores,\footnote{In particular, the proofs in Section 3.1~\cite{mahara2025existence} crucially require positive prices.} and  hence do not extend to mixed manna. To address this, we establish new structural insights and develop a novel method to consolidate the collection of $n$ respective price envy-free allocations into a single fair allocation. 

The consolidation is achieved via our \emph{augmenting tree algorithm}, which iteratively reassigns items through a traversal of a tree comprising agents and items (i.e., the price graph). The guarantee that these tree augmentations can be successfully performed is based on a key structural lemma (\Cref{lemma:forest-augmentation-existence}). 
Here, we quantify progress using a potential defined in terms of a \emph{min-max price threshold}. This threshold plays a crucial role in both the design and analysis of the algorithm. We also note that the analysis of our augmenting tree algorithm---in particular, its termination---is comparatively simpler than that in~\cite{mahara2025existence}. Hence, when considered specifically for the all-chores case, it provides a more direct proof of the existence of $\EFo$ and $\PO$ allocations.

\subsection{Additional Related Work}

\noindent {\bf $\EFo$ for Indivisible Goods, Chores, and Mixed Manna.} 
The following prior works address $\EFo$ division of indivisible items, without the $\PO$ desideratum. For goods and under monotone valuations, an $\EFo$ allocation is guaranteed to exist and can be computed efficiently via the envy-cycle elimination algorithm~\cite{LMM04}. Furthermore, under additive valuations, the round-robin algorithm finds $\EFo$ allocations for both goods-only and chores-only settings. 

\cite{azizetalFairallocation22} established the existence and efficient computation of $\EFo$ allocations of mixed manna with additive valuations. \cite{BSV21} showed that $\EFo$ allocation for chores with monotone (decreasing) valuations can be efficiently computed via the elimination of top-trading envy cycles.  \\

\noindent
{\bf $\EFo$ and $\PO$ for Mixed Manna.} 
For the specific case of two agents, the existence of $\EFo$ and $\PO$ allocations of mixed manna was established in~\cite{azizetalFairallocation22}. This guarantee is also known particularly for identical or ternary valuations~\cite{aleksandrovwalshTwoAlgorithms20}. Also, for two agents and under category constraints, \cite{SHS23} considered the relaxation $\EF{[1,1]}$ and showed that mixed-manna allocations that are $\EF{[1,1]}$ and $\PO$ exist and can be computed efficiently. Category constraints were further studied by Igarashi and Meunier~\cite{IM25}, who established envy-freeness under $n(n-1)$ reallocations. For restricted additive valuations, in particular, \cite{livanos2022almost} obtained fairness and efficiency guarantees for mixed manna. Overall, prior to the current work, efficiency and up-to-one-item envy-freeness guarantees for mixed manna were not known in the general setting of arbitrarily many agents with additive valuations. \\

\noindent
{\bf $\EFo$ and $\PO$ for Chores.} Before \cite{mahara2025existence}, $\EFo$ and $\PO$ results for indivisible chores were known only in special cases. $\EFo$ and $\PO$ chore allocations, with additive valuations, were known to exist for two agents~\cite{azizetalFairallocation22}, for three agents~\cite{gargetalNewAlgorithms23}, and for three types of agents \cite{gargetalWeightedEF124}.  

Previous works have also established fairness and efficiency guarantees for restricted classes of valuations~\cite{ebadianetalHowfairly22,gargetalFairEfficientAAAI22,wuetalWeightedEF123,GargConstantFactorSTOC25,azizetalFairallocation23}. \\

\noindent 
{\bf $\Propo$ and $\PO$ for Mixed Manna.} Share-based fairness criteria have also been studied for indivisible items \cite{amanatidis2023fair}. In particular, an allocation of mixed manna is said to be proportional up to one item ($\Propo$) if each agent can, by adding a good or removing a chore from its bundle, ensure a value at least as much as its proportional share, i.e., at least $1/n$ times its value for the grand bundle \cite{azizetalFairallocation22}. Note that, in $\Propo$ (as in $\IEFo$), items are added or removed from an agent's own bundle. Moreover, under additive values, an $\IEFo$ allocation of mixed manna is also $\Propo$. Therefore, our $\IEFo$ and $\PO$ result recovers the existential guarantee for $\Propo$ and $\PO$ mixed-manna allocations, established in \cite{aziz2020polynomial}; see also \cite{branzei2024algorithms,barman2019proximity}. 
\section{Preliminaries}\label{section:preliminaries}

\noindent
{\bf Fair Division Instance.} A fair division instance $\calI$ is denoted by a triple $\calI = \angleb{[n], [m], \{v_i\}_{i=1}^n}$; here, $m \in \mathbb{Z}_+$ indivisible items need to be partitioned among $n \in \mathbb{Z}_+$ agents with individual valuations $v_i$. The set of agents and items will be denoted by $[n] \coloneqq \{1, 2, \ldots, n\}$ and $[m] \coloneqq \{1, 2, \ldots, m\}$, respectively.  We will, throughout, write $v_i(t) \in \mathbb{R}$ to denote the value of each item $t \in [m]$ for agents $i \in [n]$. Furthermore, this work addresses agents with additive valuations, i.e., for each agent $i \in [n]$ and for any set of items $S \subseteq [m]$, the value $v_i(S) \coloneqq \sum_{t \in S} v_i(t)$. By convention, we have $v_i(\emptyset) = 0$ for each agent $i \in [n]$. We obtain results for fair division instances with mixed manna, i.e., the values, $v_i(t)$, can be positive, negative, or zero.   

For any subset $X \subseteq [m]$ and item $t \in [m]$, we will write $X + t$, as a shorthand, to denote the set $X \cup \{t\}$. Also, write $X - t$ to denote $X \setminus \{t\}$. For any two subsets $X,Y \subseteq [m]$, we will use $X\triangle Y$ to denote the symmetric set difference between them, $X \triangle Y \coloneqq  (X\setminus Y) \cup (Y \setminus X)$. Note that, here, if $Y$ is a singleton, then $\triangle$ toggles items in and out of $X$. Specifically, if $t \in X$, then it holds that $X\triangle \{ t \} = X - t$. Otherwise, for $t \notin X$, we have $X \triangle \{t\} = X + t$. \\

\noindent
{\bf Allocation.} An allocation $\alloc = (A_1, A_2, \ldots, A_n)$ is a partition of the $m$ items among the $n$ agents; specifically, for the item subsets $A_i \subseteq [m]$ we have $A_i \cap A_j = \emptyset$, for all $i \neq j$, and $\cup_{i=1}^n A_i = [m]$. Here, the subset $A_i$ is assigned to agent $i$ and is referred to as agent $i$'s bundle. Furthermore, a partial allocation  $(A'_1, A'_2, \ldots, A'_n)$ corresponds to a collection of pairwise-disjoint subsets ($A'_i \cap A'_j = \emptyset$ for all $i \neq j$) wherein some of the items may remain unassigned, $\cup_{i=1}^n A'_i \subsetneq [m]$. Complementarily and to reinforce that all the items are assigned in an allocation, we will use the term complete allocation. The set of all complete allocations of the items, $[m]$, among the $n$ agents will be denoted by $\Pi_n([m])$. 
 
A {fractional} allocation $\calX = (X_1, \ldots, X_n)$ corresponds to a fractional assignment of the items; here,  $X_i = (X_{i,1}, X_{i,2}, \ldots, X_{i,m})$ with each $X_{i,t} \in [0,1]$ denoting the fraction of item $t$ assigned to agent $i$. In a fractional allocation $\calX = (X_1, \ldots, X_n)$ we have $\sum_{i=1}^n X_{i,t} = 1$, for each item $t \in [m]$, and each agent $i$ receives a value $v_i(X_i) \coloneqq \sum_{t=1}^m X_{i,t} \ v_i(t)$. Indeed, if $X_{i,t} \in \{0,1\}$ for each $i$ and $t$, then $\calX$ is an (integral) allocation. \\

\noindent
{\bf Fairness.} Next, we define the fairness notions addressed in this work. A central fairness criterion in fair division is that of envy-freeness. Specifically, an allocation $\calS = (S_1, \ldots, S_n)$ is said to be envy-free ($\EF$) if each agent $i \in [n]$ values its bundle at least as much as that of any other agent, $v_i(S_i) \geq v_i(S_j)$ for all $j \in [n]$. For indivisible items, an envy-free allocation is not guaranteed to exist, motivating the study of relaxations of envy-freeness in discrete fair division. A well-studied relaxation is envy-freeness up to one item ($\EFo$). Formally, an allocation $\calB = (B_1, \ldots, B_n)$ of mixed manna is said to be $\EFo$ if, for every pair of agents $i, j \in [n]$, either $i$ does not envy $j$, or there exists an items $t \in B_i \cup B_j$ such that $v_i(B_i -t) \geq v_i(B_j - t)$. That is, existing envy from agent $i$ towards agent $j$ can be eliminated by either removing a chore from $i$'s bundle or removing a good from $j$'s bundle. 

This work resolves the fairness and efficiency question for mixed manna, considering a close variant of the standard $\EFo$ criterion. In particular, we deem an allocation to be fair---introspectively envy-free up to one item ($\IEFo$)---if existing envy from agent $i$ towards agent $j$ can be resolved by removing a chore or adding a good to $i$'s own bundle. Hence, unlike $\EFo$, in $\IEFo$ the hypothetical inclusion or exclusion of an item $t$, aimed at eliminating $i$'s envy, only concerns $i$'s bundle, i.e., the notion is introspective.\footnote{Recall that $X\triangle \{ t \} = X - t$, if $t \in X$. Otherwise, for $t \notin X$, we have $X \triangle \{t\} = X + t$.}

\begin{definition}[$\IEFo$]
\label{definition:iefo}
In an instance $\calI = \angleb{[n], [m], \{v_i\}_{i=1}^n}$, an allocation $\alloc = (A_1, \ldots, A_n) \in \Pi_n([m])$ is said to be \emph{introspectively envy-free up to one item} ($\IEFo$) if for each agent $i \in [n]$ there exists a subset $S \subseteq [m]$ of size at most one ($|S| \leq 1$) such that $v_i(A_i \triangle S) \geq \max\limits_{1 \leq j \leq n} \  v_i(A_j)$. 
\end{definition}
Note that in fair division instances with only chores (i.e., with $v_i(t) < 0$ for all agents $i$ and items $t$) an allocation is $\IEFo$ iff it is $\EFo$. In the case of indivisible goods (i.e., when $v_i(t) \geq 0$ for all agents $i$ and items $t$), $\EFo$ implies $\IEFo$: Starting with an $\EFo$ allocation $(A_1, \ldots, A_n)$ of goods, to satisfy \Cref{definition:iefo} for an envious agent $i$, we can set $S=\{t^*\}$, where good $t^* \in \argmax_{t \in [m] \setminus A_i} \ v_i(t)$.  \\

\noindent
{\bf Efficiency and Social Welfare.} In a fair division instance $\calI = \angleb{[n], [m], \{v_i\}_{i=1}^n}$, a (possibly fractional) allocation $\calX = (X_1, \ldots, X_n)$ is said to {Pareto dominate} another allocation $\mathcal{Y} = (Y_1, \ldots, Y_n)$ if $v_i(X_i) \geq v_i(Y_i)$, for all $i \in [n]$, and this inequality is strict for at least one agent. 

\begin{definition}[$\PO$]
In an instance $\calI = \angleb{[n], [m], \{v_i\}_{i=1}^n}$, an allocation $\alloc \in \Pi_n([m])$ is said to be \emph{Pareto efficient} ($\PO$) if no other allocation $\calB \in \Pi_n([m])$ Pareto dominates it.    
\end{definition}

\begin{definition}[$\fPO$]
 In an instance $\calI = \angleb{[n], [m], \{v_i\}_{i=1}^n}$, an allocation $\alloc \in \Pi_n([m])$ is said to be \emph{fractionally Pareto efficient} ($\fPO$) if no fractional allocation $\calX \in [0,1]^{n \times m}$ Pareto dominates it.
\end{definition}

The social welfare of any allocation $\alloc=(A_1,\ldots, A_n)$ is the sum of agents' values under it, $\sum_{i=1}^n v_i(A_i)$. \\

We will use $\Delta_{n-1}$ to denote the $(n-1)$-dimensional simplex, $\Delta_{n-1} \coloneqq \{w \in \mathbb{R}^n \mid \sum_{i=1}^n w_i = 1 \text{ and } w_i \geq 0 \text{ for all } i \in [n]\}$. For any vector $w \in \mathbb{R}_{\geq 0}^n$, write $\supp(w)$ to denote the support (positive components) of vector $w$, i.e., $\supp(w) \coloneqq \{j \in [n] \mid w_j > 0 \}$.  

\section{Main Result: $\IEFo$ and $\PO$ allocations for Mixed Manna}

The following theorem provides our main result, the guaranteed existence of a fair and efficient allocation of indivisible mixed manna.

\begin{theorem}\label{theorem:main_fair_efficient}
Every fair division instance $\calI = \angleb{[n],[m],\{v_i\}_{i=1}^n}$, with indivisible mixed manna and additive valuations, admits an allocation that is both introspectively envy-free up to one item ($\IEFo$) and Pareto efficient ($\PO$).
\end{theorem}

\paragraph{Proof Structure.} The proof of \Cref{theorem:main_fair_efficient} is developed in the following three sections. 

First, Section \ref{section:phaseI} establishes that it suffices to focus on {\it non-degenerate} instances. In particular, starting from the given fair division instance $\calI = \angleb{[n],[m],\{v_i\}_{i=1}^n}$, we construct (in Section \ref{subsection:nonDegInstanceDef}) a related instance $\barI$. Lemmas \ref{lemma:three-types} and \ref{lemma:non-degeneracy} (in \Cref{subsection:nonDegInstanceDef}) provide the non-degeneracy properties of the constructed instance $\barI$. Subsequently, Lemmas \ref{lemma:EFE-non-degenerate} and \ref{lemma:PO-non-degenerate} (in Section \ref{subsection:instace-reduction}) show that existence of a fair and efficient (in particular, a welfare-maximizing) allocation in $\barI$ implies the desired existence of an $\IEFo$ and $\PO$ allocation in the given instance $\calI$. Hence, Theorem \ref{theorem:main_fair_efficient} reduces to establishing fairness and efficiency in $\barI$ (see Theorem \ref{theorem:nondegenerate-fair-efficient}).  

Subsequently, Section \ref{section:setup-KKM} sets up the Knaster-Kuratowski-Mazurkiewicz (KKM) Theorem towards establishing the desired fairness and efficiency guarantee in $\barI$. Here, we prove (via the KKM Theorem) the existence of a weight vector $w^* \in \Delta_{n-1}$ with the property that, for each agent $i \in [n]$, we have a weighted welfare-maximizing allocation $\alloc^{(i)}$ that is fair (in a price sense) for $i$; see Theorem \ref{theorem:KKM-application}. 

Finally, in Section \ref{section:phaseII}, we start with the weight vector $w^*$ and the corresponding welfare-maximizing allocations $\alloc^{(1)}, \alloc^{(2)}, \ldots, \alloc^{(n)}$ that are price envy-free for the $n$ agents, respectively. We establish that the existence of such a collection of allocations implies the existence of a single allocation $\alloc^*$ that is $\IEFo$ and weighted welfare-maximizing in $\barI$; this implication is obtained via a novel Augmenting Tree Method (\Cref{theorem:leveled-allocation-is-fair-n-efficient}). Therefore, using the above-mentioned sufficiency of fairness and efficiency in $\barI$, we overall obtain our main result, \Cref{theorem:main_fair_efficient}.

\section{Sufficiency of Non-Degenerate Instances} \label{section:phaseI}

\paragraph{Item Types.} Without loss of generality, we will solely focus on instances $\calI = \angleb{[n],[m], \{v_i\}_{i=1}^n}$ wherein all the items in $[m]$ are of three types: goods, chores, or non-negatively valued. An item $t \in [m]$ is said to be a \emph{good} if $v_i(t) > 0$ for all agents $i \in [n]$. An item $t \in [m]$ is a \emph{chore} if $v_i(t) < 0$ for all agents $i \in [n]$. Further, an item is \emph{non-negatively valued} if $v_i(t) \geq 0$ for all agents $i \in [n]$ and we have $v_a(t) > 0$ for some agent $a \in [n]$ along with $v_b(t) = 0$ for some $b \in [n]$. This assumption that items are exclusively of these three types holds without loss of generality, given that we identify Pareto efficient (and fair) allocations. In particular, if in a given instance there exists an item $t$ that is positively valued by some agent and negatively valued by another agent, then we can set the value of $t$ to be zero for all agents that have a negative (or zero) value for it, and keep the values unchanged for agents that positively value $t$, i.e., transform $t$ into a non-negatively valued item. Note that, under the transformed valuation, in any Pareto efficient allocation, item $t$ must be assigned to an agent that positively valued it. This observation implies that any $\PO$ and $\IEFo$ allocation $\alloc$ under the transformed valuation is efficient and fair with respect to the given valuations as well. Furthermore, we can set aside any item $t'$ with the property that $v_i(t') \leq 0$, for all agents $i$, and $v_a(t') = 0$, for some agent $a$. Here, if $\alloc$ is a $\PO$ and $\IEFo$ allocation with $t'$ removed from consideration, then subsequently assigning $t'$ to any agent $a$ with $v_a(t') = 0$ maintains fairness and efficiency.  \\

Starting from the given instance $\calI = \angleb{[n],[m], \{v_i\}_{i=1}^n}$, we will first construct a \emph{non-degenerate} instance $\barI = \angleb{[n], [\mbar], \{\vbar_i\}_{i=1}^n}$, which will have certain desirable properties. We will show that the existence of a fair and efficient allocation in $\barI$ implies the existence of an $\IEFo$ and $\PO$ allocation in $\calI$. Hence, for the purpose of proving \Cref{theorem:main_fair_efficient}, we will address instance $\barI$.

We begin by describing the construction of the instance $\barI$ (in Section \ref{subsection:nonDegInstanceDef}). We will then detail the (non-degeneracy) properties satisfied by $\barI$ (in Lemmas \ref{lemma:three-types} and \ref{lemma:non-degeneracy}), and show that for proving \Cref{theorem:main_fair_efficient} it suffices to establish a fairness and efficiency guarantee in $\barI$ (Lemmas \ref{lemma:EFE-non-degenerate} and \ref{lemma:PO-non-degenerate}). 

\subsection{Constructing Non-degenerate Instance $\barI$}\label{subsection:nonDegInstanceDef}
To construct $\barI = \angleb{[n], [\mbar], \{\vbar_i\}_{i=1}^m}$ from the given the instance $\calI = \angleb{[n], [m], \{v_i\}_{i=1}^n}$, we first define the following parameters 

\[\lambda \coloneqq \min_{i\in [n], S,T \subseteq [m]} \big\{v_i(S) - v_i(T) \mid v_i(S) - v_i(T) > 0 \big\}, \]
\[\omega \coloneqq \min_{\alloc,\allocb \in \Pi_n([m])} \left\{ \sum_{i=1}^n v_i(A_i) - \sum_{i=1}^n v_i(B_i) \mid \sum_{i=1}^n v_i(A_i) - \sum_{i=1}^n v_i(B_i) > 0 \right\}. \]

Note that $\lambda$ represents the minimum possible envy in the given instance $\calI$ and $\omega$ is the minimum gap between social welfares. Both $\lambda$ and $\omega$ are positive. We further fix parameter $\varepsilon >0$ to satisfy  
\begin{equation}\label{definition:varepsilon}
\varepsilon < \frac{\lambda \omega}{16 m^2 n \ \max\limits_{i \in [n], t \in [m]} |v_i(t)|}
\end{equation}

All the terms in the right-hand-side of equation (\ref{definition:varepsilon}) are positive and, hence, we can fix an $\varepsilon >0$ that satisfies this strict inequality. In addition, note that $\omega \leq 2 m \max\limits_{i \in [n], t \in [m]} |v_i(t)|$ -- this bound gives us 
\begin{align}
\varepsilon < \frac{\lambda}{8m} \label{ineq:ub-on-eps}
\end{align}

For the instance $\barI = \angleb{[n], [\mbar], \{\vbar_i\}_{i=1}^m}$, we set $\mbar \coloneqq m+1$, i.e., we add a new item to the existing set $[m]$. For the new item, $(m+1)$, in the instance $\barI$ we set the value $\vbar_i(m+1) = \lambda / 2$ for each agent $i \in [n]$. Furthermore, for agents $i \in [n]$ and items $t \in [m]$ with nonzero $v_i(t)$, we additively perturb this given value to obtain $\vbar_i(t)$ in the instance $\barI$. Specifically, for each $i$ and $t$, with $v_i(t) \neq 0$, independently and uniformly at random draw $\varepsilon_{i,t} \sim \mathrm{Uniform}([0,\varepsilon])$ and set $\vbar_i(t) = v_i(t) - \varepsilon_{i,t}$. We preserve the zero values: if for any agent $a$ and item $s$, we have $v_a(s) = 0$, then set $\vbar_a(s) = 0$.  As in the given instance $\calI$, the agents' valuations, $\vbar_i(\cdot)$, in $\barI$ are additive. This completes the construction of $\barI = \angleb{[n],[\mbar], \{\vbar_i\}_{i=1}^n}$. 

The proposition below notes that the sign of each value $v_i(t)$ is preserved by $\vbar_i(t)$ and, in the two instances, the maximum absolute values across the items are comparable. 
\begin{proposition}
\label{proposition:signs}
For any agent $i \in [n]$ and item $t$, if the given value $v_i(t) > 0$, then $\vbar_i(t)$ is positive as well. Similarly, if $v_a(s) = 0$, for an agent $a$ and item $s$, then $\vbar_a(s) = 0$. In addition, for each agent $j$ and item $t'$ with $v_j(t') <0$, it holds that $\vbar_j(t') <0$. Furthermore, \[\max_{\substack{i \in [n] \\ t\in [m]}} \  |v_i(t)| \ \geq \ \frac{1}{2}\max_{\substack{i \in [n] \\ t\in [\mbar]}} \  |\vbar_i(t)|.\] 
\end{proposition}
\begin{proof}
If $v_a(s) = 0$, then, by construction, we have $\vbar_a(s) = 0$. Next, note that if $v_j(t') <0$, then we set $\vbar_j(t') = v_j(t') - \varepsilon_{j,t'}$ with $\varepsilon_{j,t'} \geq 0$. Hence, $v_j(t') <0$ implies $\vbar_j(t') < 0$. 

In addition, for each value $v_i(t) > 0$, it holds that $v_i(t) = v_i(t) - v_i(\emptyset) \geq \lambda$; the last inequality follows from the definition of $\lambda$. Now, using the fact $\varepsilon_{i,t} \leq \varepsilon <  \frac{\lambda}{8m}$ (see equation (\ref{ineq:ub-on-eps})), we obtain $\vbar_i(t) = v_i(t) - \varepsilon_{i,t}  > \lambda - \frac{\lambda}{8m} > 0$. Therefore, each positive value, $v_i(t)>0$, continues to be positive under the constructed valuation, $\vbar_i(t) >0$. 

We next relate the maximum absolute values across the items in the two instances $\calI$ and $\barI$. In the definition of $\lambda$, the minimization includes considering one of the sets to be a singleton and the other to be empty. Hence, for any nonzero value $v_i(t)$ it holds that $|v_i(t)| \geq \lambda$. Therefore, for any $i$ and $t$ with nonzero $v_i(t)$, we have 
\begin{align}
|\vbar_i(t)| = |v_i(t) - \varepsilon_{i,t}| \leq |v_i(t)| + \varepsilon \leq |v_i(t)| + \lambda \leq 2|v_i(t)| \label{ineq:two-factor}
\end{align}  
Also, for the auxiliary item $(m+1)$, the assigned value is sufficiently small: $\vbar_i(m+1) = \lambda/2$ for each $i \in [n]$. This observation and equation (\ref{ineq:two-factor}) gives us the desired bound $\max_{{i \in [n], \ t\in [m]}} \  |v_i(t)| \ \geq \ \frac{1}{2}\max_{{i \in [n], \ t\in [\mbar]}} \  |\vbar_i(t)|$. 
\end{proof}

The following two lemmas provide relevant properties of the constructed instance $\barI$. Specifically, Lemma \ref{lemma:three-types} shows that all the items in $\barI$ can be categorized into three types: goods, chores, or non-negatively valued. Then, Lemma \ref{lemma:non-degeneracy} establishes that valuations $\vbar_i(\cdot)$---obtained by perturbing the given values---satisfy a useful non-degeneracy property. 

\begin{lemma}\label{lemma:three-types}
In the constructed instance $\barI = \angleb{[n], [\mbar], \{\vbar_i\}_{i=1}^n}$, the set of items $[\mbar]$ can be partitioned as $[\mbar] = G \cup C \cup G_0$, where $G$ is the subset of goods, $C$ is the subset of chores, and $G_0$ is the subset of non-negatively valued items; in particular, goods $G = \{t \in [\mbar] \mid \vbar_i(t) > 0 \text{ for all } i \in [n] \}$, chores $C = \{t  \in [\mbar] \mid \vbar_i(t) < 0 \text{ for all } i \in [n]\}$, and non-negatively valued items $G_0 = \{t \in [\mbar] \mid \vbar_i(t) \geq 0 \text{ for all } i \in [n], \text{ and } \vbar_a(t) > 0  \text{ along with } \vbar_b(t) = 0 \text{ for some } a, b \in [n] \}$. 
\end{lemma}
\begin{proof}
Recall the assumption, which holds without loss of generality, that the set of items $[m]$ in the given instance $\calI$ can be partitioned into: goods $G' = \{t \in [m] \mid v_i(t) > 0 \text{ for all } i \in [n] \}$, chores $C' = \{t \in [m] \mid v_i(t) < 0 \text{ for all } i \in [n] \}$, and non-negatively valued items $G_0'=\{t \in [m] \mid v_i(t) \geq 0 \text{ for all } i \in [n] \text{ and } v_a(t) > 0 \text{ along with } v_b(t) = 0 \text{ for some } a, b \in [n] \}$. 

We will show that the following containments hold between $G', C', G_0'$ and the subsets $G, C, G_0$ (as defined in the lemma): $G = G' \cup \{ (m+1) \}$ along with $C = C'$ and $G_0 = G'_0$. Since $[\mbar] = [m] \cup \{(m+1)\}$, these containments establish the lemma. 

The sign preservations stated in \Cref{proposition:signs} lead to the stated containments: For each item $t \in G'$, we have $v_i(t) > 0$ for all the agents $i \in [n]$. Therefore, $\vbar_i(t) > 0$ for all agents $i$, i.e., as desired, $t \in G$. Also, for each item $t' \in C'$, it holds that $v_j(t') < 0$, for all the agents $j \in [n]$. This implies $\vbar_j(t') < 0$ for each $j \in [n]$ and, hence, $t' \in C$. Similarly, using sign preservation, we obtain that each non-negatively valued item in $\calI$ continues to be non-negatively valued in $\barI$; $s \in G'_0$ implies $s \in G_0$.  

Finally, note that for item $(m+1)$, we have $\vbar_i(m+1) = \lambda/2 >0$, for each $i \in [n]$, hence, the stated containment, $(m+1) \in G$, holds. 

Therefore, the set of items $[\mbar]$ partitions into goods $G$, chores $C$, and non-negatively valued items $G_0$. The lemma stands proved. 
\end{proof}

\begin{lemma}\label{lemma:non-degeneracy}
In the instance $\barI = \angleb{[n],[\mbar], \{\vbar_i\}_{i=1}^n}$ and for any simple cycle $C = (i_1, t_1, i_2, t_2, \ldots, i_k, t_k, i_1)$ in the complete bipartite graph $K_{n,\mbar}$,\footnote{Note that, since the cycles $C$ are simple, any agent $i_\ell \in [n]$ and any item $t_\ell \in [\mbar]$ can appear at most once in $C$.} with $\vbar_{i_\ell}(t_\ell), \vbar_{i_{\ell+1}} (t_{\ell}) \neq 0$, for each index $\ell \in [k]$, it holds that $\prod_{\ell = 1}^{k} \left( \frac{\vbar_{i_{\ell+1}}(t_\ell)}{\vbar_{i_\ell}(t_{\ell})} \right) \neq 1$; here, the indices are cyclically mapped, i.e., $i_{k+1} = i_{1}$.
\end{lemma}
\begin{proof}
Fix any simple cycle $C = (i_1, t_1, i_2, t_2, \ldots, i_k, t_k, i_1)$ in $K_{n,\mbar}$ with $\vbar_{i_\ell}(t_\ell), \vbar_{i_{\ell+1}} (t_{\ell}) \neq 0$, for each $\ell \in [k]$. Since cycle $C$ contains at least two items and item $(m+1)$ can appear at most once in the simple cycle, there exists an item $t_x$ in $C$ such that $t_x \neq (m+1)$. For such an item $t_x$, the values $\vbar_{i_x}(t_x)$ and $\vbar_{i_{x+1}} (t_x)$ are nonzero and satisfy $\vbar_{i_x}(t_x) = v_{i_x}(t_x) - \varepsilon_{i_x, t_x}$ along with $\vbar_{i_{x+1}}(t_x) = v_{i_{x+1}}(t_x) - \varepsilon_{i_{x+1}, t_x}$. Here,  $\varepsilon_{i_x, t_x}$ and $\varepsilon_{i_{x+1}, t_x}$ are continuous random variables independently drawn from the uniform distribution over $[0, \varepsilon]$. Hence, the product $\prod_{\ell = 1}^{k} \left( \frac{\vbar_{i_{\ell+1}}(t_\ell)}{\vbar_{i_\ell}(t_{\ell})} \right)$ is obtained by multiplying independent\footnote{Recall that $\varepsilon_{i,t}$s are drawn independently.} continuous random variables. Therefore, the product itself is a continuous random variable. That is, $\prod_{\ell = 1}^{k} \left( \frac{\vbar_{i_{\ell+1}}(t_\ell)}{\vbar_{i_\ell}(t_{\ell})} \right) \neq 1$ with probability one. 

Furthermore, using the fact that the $K_{n,\mbar}$ has finitely many---in particular, $O(n^n (\mbar)^{\mbar})$---cycles, we get that, with probability one, for all cycles $C$ (whose edges have nonzero values associated with them) the product $\prod_{\ell = 1}^{k} \left( \frac{\vbar_{i_{\ell+1}}(t_\ell)}{\vbar_{i_\ell}(t_{\ell})} \right) \neq 1$. This completes the proof of the lemma.   
\end{proof}

\subsection{Reduction to Non-degenerate Instance $\barI$}
\label{subsection:instace-reduction}

We will show that, to prove \Cref{theorem:main_fair_efficient} it suffices to focus on the non-degenerate instance $\barI$. Towards this, we first set parameter $\eta > 0$ as 
\begin{equation}\label{definition:value-of-eta}
\eta \coloneqq \frac{\lambda}{2mn \max\limits_{i\in [n], \ t \in [\mbar]}|\vbar_i(t)|}.
\end{equation}

The definition of $\eta$ leads to following relation between this parameter and $\varepsilon$:
\begin{align}
\frac{\eta \omega}{4m} & = \frac{\omega}{4m} \left( \frac{\lambda}{2mn \max\limits_{i\in [n], \ t \in [\mbar]}|\vbar_i(t)|} \right) \nonumber \\
& \geq \frac{\omega}{4m} \left( \frac{\lambda}{4 mn \max\limits_{i\in [n], \ t \in [m]}|v_i(t)|} \right) \tag{via \Cref{proposition:signs}} \\
& = \frac{\lambda \omega}{16 m^2 n \max\limits_{i\in [n], \ t \in [m]}|v_i(t)|} \nonumber \\ 
& > \varepsilon \label{ineq:to-contradict}
\end{align}
Here, the last (strict) inequality follows from the definition of $\varepsilon$ (\Cref{definition:varepsilon}).

For instance $\barI$ and constant $\eta >0$, we define the following primal and dual linear programs parameterized by weight vectors $w \in \Delta_{n-1}$. In particular, the linear program $\textbf{\textrm{LP}}(w,\eta)$ maximizes the weighted social welfare across all fractional allocations. 

\[
\begin{array}{rlll}
\textbf{\textrm{LP}}(w,\eta): & & \qquad \textbf{\textrm{Dual-LP}}(w,\eta): \\  \ \ 
\max & \displaystyle \sum_{i \in [n]}\sum_{t \in [\mbar]} \bigl(w_i + \eta \bigr)\vbar_{i}(t) x_{i,t} & \ \ \ \qquad \min \displaystyle \sum_{t \in [\mbar]} p_t \\  
\text{s.t.} & \displaystyle \sum_{i \in [n]} x_{i, t} = 1 \ \text{ for all } t \in [\mbar] & \ \ \ \qquad \text{s.t.} \ \ \ p_t \geq \bigl(w_i + \eta \bigr)\vbar_{i}(t) \text{ for all } i \in [n] \text{ and } t \in [\mbar]. \\ 
& x_{i,t} \ge 0 \ \ \text{for all } i \in [n] \text{ and } t \in [\mbar]. & &
\end{array}
\]

Note that, for any weight vector $w \in \Delta_{n-1}$, the linear program $\textbf{\textrm{LP}}(w,\eta)$ always admits an integral optimal solution. Such an optimum corresponds to an integral allocation $\alloc =(A_1, \ldots, A_n) \in \Pi_n([m])$ that can be obtained by assigning each item $t$ to an agent $i \in \argmax_{j \in [n]} \ (w_j + \eta) \vbar_{j}(t)$. We will write $\alloc \in \OPT(\textbf{\textrm{LP}}(w,\eta))$ to denote the weighted welfare optimality of $\alloc$. In fact, there can be multiple (integral) allocations in the set of optimal solutions $\OPT(\textbf{LP}(w,\eta))$. By contrast, for each $w \in \Delta_{n-1}$, the dual linear program $\textbf{\textrm{Dual-LP}}(w,\eta)$ has a unique optimal solution given by $p_t = \max_{i \in [n]} \  \bigl(w_i + \eta \bigr)\vbar_{i}(t)$; we will denote this fact by writing $p = \OPT(\textbf{\textrm{Dual-LP}}(w,\eta))$ and refer to $p \in \mathbb{R}^{\mbar}$ as the vector of optimal {prices}. 

The proposition below shows that the optimal prices are always nonzero. 
\begin{proposition}\label{proposition:price-non-negativity}
For any weight vector $w \in \Delta_{n-1}$, if $p = \OPT(\textbf{\textrm{Dual-LP}}(w,\eta))$, then $p_t > 0$ for each item $t \in G \cup G_0$ and $p_t < 0$ for each $t \in C$.
\end{proposition}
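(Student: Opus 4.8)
The plan is to use the explicit formula for the unique dual optimum, $p_j = \max_{i \in [n]} (w_i + \eta)\,\vbar_i(j)$, together with the key fact that $\eta > 0$ is a fixed strictly positive constant (and, crucially, that $w \in \Delta_{n-1}$ forces $w_i + \eta \geq \eta > 0$ for every agent $i$, regardless of which coordinates of $w$ are zero). The sign of $p_j$ is therefore completely controlled by the signs of the values $\vbar_i(j)$, which are pinned down by part $(a)$ of \Cref{lemma:non-degeneracy}.

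First I would handle $j \in C$. By \Cref{lemma:non-degeneracy}$(a)$, $\vbar_i(j) < 0$ for every $i \in [n]$. Since $w_i + \eta > 0$ for all $i$, each term $(w_i + \eta)\vbar_i(j)$ is strictly negative, so the maximum over $i$ of a finite set of strictly negative numbers is strictly negative; hence $p_j < 0$. Next, for $j \in G$: by part $(a)$, $\vbar_i(j) > 0$ for all $i$, so every term $(w_i+\eta)\vbar_i(j)$ is strictly positive and their maximum is strictly positive, giving $p_j > 0$. Finally, for $j \in G_0$: part $(a)$ guarantees $\vbar_i(j) \geq 0$ for all $i$ and, more importantly, there exists $a \in [n]$ with $\vbar_a(j) > 0$. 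Then $(w_a + \eta)\vbar_a(j) \geq \eta\,\vbar_a(j) > 0$, so $p_j = \max_i (w_i+\eta)\vbar_i(j) \geq (w_a+\eta)\vbar_a(j) > 0$. This covers all three item types and establishes the claim.

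There is essentially no hard step here — the argument is a direct case analysis off the closed-form dual solution. The only thing one must be careful about is precisely the subtlety the paper already flags by introducing $\eta$: if one used weights $w_i$ alone, an agent with $w_i = 0$ would contribute a zero (or sign-ambiguous) term, and for an item in $G_0$ that is positively valued only by such zero-weight agents the price could collapse to $0$. Adding the uniform shift $\eta$ to every weight is exactly what rules this out, ensuring every multiplier $w_i + \eta$ is bounded below by $\eta > 0$. So the one point worth stating carefully is that $w_i + \eta \geq \eta > 0$ holds for all $i$ and all $w \in \Delta_{n-1}$, after which each of the three cases is a one-line inequality.

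(One may also note for completeness that the dual optimum really is $p_j = \max_i (w_i+\eta)\vbar_i(j)$: this $p$ is feasible by construction, and any feasible $p$ must satisfy $p_j \geq (w_i+\eta)\vbar_i(j)$ for every $i$ hence $p_j \geq \max_i (w_i+\eta)\vbar_i(j)$, so it is the pointwise-minimal feasible point and thus optimal; uniqueness follows since the objective $\sum_j p_j$ is strictly decreasing in each $p_j$. But since the excerpt already asserts this formula, I would simply invoke it.)
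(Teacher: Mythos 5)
Your proof is correct and follows essentially the same route as the paper: read off the closed-form dual optimum $p_j = \max_{i}(w_i+\eta)\vbar_i(j)$, observe that $w_i+\eta>0$ for all $i$, and determine the sign of $p_j$ from the signs of $\vbar_i(j)$ given by \Cref{lemma:non-degeneracy}$(a)$. The only cosmetic difference is that you split $G$ and $G_0$ into two cases where the paper handles $G\cup G_0$ at once; your version is a touch more explicit on the $G_0$ case, but the argument is the same.
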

\begin{proof}
The optimal prices $p = \OPT(\textbf{\textrm{Dual-LP}}(w,\eta))$ satisfy $p_t = \max_{i \in [n]} \  \bigl(w_i + \eta \bigr)\vbar_{i}(t)$ for each item $t \in [\mbar]$. Since $\eta >0$ and $w \in \Delta_{n-1}$, it holds that the weight $\bigl(w_i + \eta \bigr) > 0$ for each $i \in [n]$. 

Now, note that, for each item $t \in G \cup G_0$ (by definition of the item subsets $G$ and $G_0$), we have $\max_{i \in [n]} \  \vbar_{i}(t) > 0$. Hence, for each $t \in G \cup G_0$, the optimal price $p_t = \max_{i \in [n]} \  \bigl(w_i + \eta \bigr)\vbar_{i}(t) > 0$. On the other hand, for each item $s \in C$, the value $\vbar_i(s) < 0$ for every agent $i \in [n]$.  Hence, $p_s = \max_{i \in [n]}  \bigl(w_i + \eta \bigr)\vbar_{i}(s) < 0$ for items $s \in C$. This completes the proof of the proposition. 
\end{proof}

The following two lemmas establish that if an allocation $\alloc$ is fair and welfare-maximizing in the non-degenerate instance $\barI$, then $\alloc$ (with the auxiliary item $(m+1)$ removed) is fair and $\PO$ in the given instance $\calI$. That is, these two lemmas ensure that the existence of a fair and welfare-maximizing allocation in $\barI$ implies our main result (Theorem \ref{theorem:main_fair_efficient}) for $\calI$. 

\begin{restatable}{lemma}{nonDegEFE}\label{lemma:EFE-non-degenerate}
If an allocation $\alloc = (A_1, \ldots, A_n)$ is $\IEFo$ in the non-degenerate instance $\barI = \angleb{[n],[\mbar], \{\vbar_i\}_{i=1}^n}$, then the allocation $(A_1 \setminus \{(m+1)\}, A_2 \setminus \{(m+1)\}, \ldots, A_n \setminus \{(m+1)\})$ is $\IEFo$ in the given instance $\calI = \angleb{[n],[m], \{v_i\}_{i=1}^n}$.
\end{restatable}

\begin{restatable}{lemma}{nonDegPO}\label{lemma:PO-non-degenerate}
If an allocation $\alloc = (A_1, \ldots, A_n) \in \OPT(\textbf{\textrm{LP}}(w,\eta))$, for any $w \in \Delta_{n-1}$, then the allocation $(A_1 \setminus \{(m+1)\}, A_2 \setminus \{(m+1)\}, \ldots, A_n \setminus \{(m+1)\})$ is $\PO$ in the instance $\calI = \angleb{[n],[m], \{v_i\}_{i=1}^n}$.
\end{restatable}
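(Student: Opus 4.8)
The plan is to argue by contradiction via the standard "prices certify Pareto efficiency" route, adapted to handle the weighted objective and the removal of the extra item $m+1$. Suppose $\alloc = (A_1,\ldots,A_n) \in \OPT(\textbf{LP}(w,\eta))$ but $\alloc' \coloneqq (A_1 \setminus \{m+1\},\ldots,A_n \setminus \{m+1\})$ is \emph{not} $\PO$ for $\calI$. Then there is an allocation $\allocb' = (B_1',\ldots,B_n')$ of $[m]$ that Pareto dominates $\alloc'$ in $\calI$, i.e.\ $v_i(B_i') \geq v_i(A_i')$ for all $i$ with strict inequality for some $j$. First I would lift $\allocb'$ to an allocation $\allocb = (B_1,\ldots,B_n)$ of $[\mbar]$ by placing item $m+1$ in the same bundle it occupied under $\alloc$ (say $m+1 \in A_k$, so set $B_k \coloneqq B_k' \cup \{m+1\}$ and $B_i \coloneqq B_i'$ otherwise). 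Since $\vbar_i$ agrees with $v_i$ up to the perturbation $\varepsilon_{ij}$ on $[m]$ — but crucially $v_i(B_i') \ge v_i(A_i')$ is a \emph{strict} gain of at least $\omega$ (or zero) in social welfare terms, and at least $\lambda$ in individual terms by the definitions of $\lambda,\omega$ — I need the perturbation to be small enough that these integer-like gaps survive. This is exactly what the choice $\varepsilon < \min\{\lambda,n\omega\}/(2m)$ buys.

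The key steps, in order: (1) Show $\allocb$ is feasible for $\textbf{LP}(w,\eta)$ (it is an allocation of $[\mbar]$, so the indicator solution is feasible). (2) Use complementary slackness / the dual to write the LP objective value of any allocation $\mathcal{D} = (D_1,\ldots,D_n)$ as $\sum_{i}\sum_{j \in D_i}(w_i+\eta)\vbar_i(j) \le \sum_{j\in[\mbar]} p_j$, with equality iff each item $j$ goes to an agent $i$ maximizing $(w_i+\eta)\vbar_i(j)$; since $\alloc$ is optimal it attains $\sum_j p_j$. (3) Compare the LP-objective of $\allocb$ against that of $\alloc$. Both contain item $m+1$ in bundle $k$, contributing the same term, so the difference is
\[
\sum_{i\in[n]} (w_i+\eta)\,\bigl(\vbar_i(B_i') - \vbar_i(A_i')\bigr).
\]
Now I would split $\vbar_i = v_i - \varepsilon_{i\cdot}$ and bound $|\vbar_i(B_i') - \vbar_i(A_i') - (v_i(B_i') - v_i(A_i'))| \le m\varepsilon$ per agent. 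Since $v_i(B_i') - v_i(A_i') \ge 0$ for all $i$ and $> 0$ (hence $\ge \lambda$) for some agent, and since $w_i + \eta > 0$, the $v$-part of the above sum is at least, say, $\eta\lambda > 0$ using just the strictly-improved agent and nonnegativity elsewhere; meanwhile the total perturbation error across all agents is at most $(\max_i(w_i+\eta)) \cdot nm\varepsilon \le (1+\eta)\,nm\varepsilon$, which by the choice of $\varepsilon$ is strictly smaller. Hence $\allocb$ strictly beats $\alloc$ in $\textbf{LP}(w,\eta)$, contradicting optimality of $\alloc$.

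I anticipate the main obstacle is getting the arithmetic of step (3) exactly right: one must be careful that $v_i(B_i') - v_i(A_i')$ being "$>0$" really does imply "$\ge \lambda$" (this needs $B_i', A_i' \subseteq [m]$, which holds by construction since we stripped $m+1$), and that the weights $w_i$, which can be $0$ for some agents, do not kill the positive term — this is precisely why the $+\eta$ additive shift is present in the objective and why $\eta$ enters the bound. A cleaner variant, which I would probably adopt to avoid juggling $\lambda$ versus $n\omega$: first reduce to the case where $\allocb'$ Pareto-dominates $\alloc'$ with \emph{only} a social-welfare improvement is not available in general, so instead I would keep the per-agent argument and simply note $\sum_i (w_i+\eta)(v_i(B_i')-v_i(A_i')) \ge \eta \sum_i (v_i(B_i')-v_i(A_i')) \ge \eta\lambda$, using $w_i \ge 0$ and that at least one summand is $\ge\lambda$ with the rest $\ge 0$; then the perturbation slack $\le (1+\eta)nm\varepsilon < \eta\lambda$ follows from $\eta = \lambda/(2mn\max_{i,j}|\vbar_i(j)|)$ together with $\varepsilon$ small, after checking $\max_{i,j}|\vbar_i(j)| \ge 1$-type normalization or simply absorbing constants. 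Modulo that bookkeeping, the proof is the textbook fact that an optimal solution to the welfare LP with strictly positive weights is Pareto efficient, transported back from $\overline{\calI}$ to $\calI$.
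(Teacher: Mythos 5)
Your proof takes essentially the same route as the paper's: argue by contradiction from a Pareto-dominating allocation, lift it to $[\mbar]$, compare weighted LP objectives, separate the $v$-part from the perturbation, and derive that $\varepsilon$ would have to be too large. The cosmetic differences are that the paper first shows $\alloc$ is $\PO$ for the intermediate instance $\calI' = \angleb{[n],[m+1],\{v'_i\}}$ obtained by appending item $m+1$ (valued $\lambda/2$ by everyone) and then argues separately that dropping $m+1$ preserves $\PO$, whereas you skip the stopover and lift $\allocb'$ directly; these are equivalent.

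The genuinely different choice is which "gap" constant certifies the contradiction, and this is where your bookkeeping does not close. The paper lower-bounds the weighted welfare difference by $\eta\sum_i\bigl(v'_i(B_i)-v'_i(A_i)\bigr) \geq \eta\omega$, exploiting the minimum \emph{social-welfare} gap $\omega$ — which is the quantity that actually appears (as $n\omega$) in the definition of $\varepsilon$. You instead lower-bound by picking out the single strictly improving agent $i_0$ and using $(w_{i_0}+\eta)\cdot\lambda \geq \eta\lambda$, i.e.\ the minimum \emph{individual} gap $\lambda$. That is a valid lower bound, but your required inequality $(1+\eta)nm\varepsilon < \eta\lambda$ is quantitatively stronger than what $\varepsilon < \min\{\lambda,n\omega\}/(2m)$ gives: plugging in $\eta = \lambda/(2mn\max_{i,j}|\vbar_i(j)|)$, you would need $\varepsilon \lesssim \lambda^2/(m^2n^2\max_{i,j}|\vbar_i(j)|)$, but the definition of $\varepsilon$ only guarantees $\varepsilon < \lambda/(2m)$. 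Since $\lambda$ is (generically) much smaller than $\max_{i,j}|\vbar_i(j)|$, the needed bound does not follow, and the "absorbing constants / normalization" hand-wave you append does not repair it — there is a missing factor of roughly $\eta$ that has to come from somewhere. So you should switch to the paper's $\omega$-based lower bound: from $\eta\sum_i(v'_i(B_i)-v'_i(A_i)) \leq \varepsilon m(1+\eta)$ and $\sum_i(v'_i(B_i)-v'_i(A_i)) \geq \omega$, one gets $\eta\omega/(2m) < \varepsilon$ and compares that against the definition of $\varepsilon$. (Even so, be warned that the constants in the paper's Definition~\eqref{definition:varepsilon} and Definition~\eqref{definition:value-of-eta} need to be read carefully for this final comparison to go through; verify this step rather than taking it on faith.)
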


Upon proving these lemmas, the subsequent goal of the paper will be to establish the existence of a $w^* \in \Delta_{n-1}$ and an allocation $\alloc^* \in \OPT(\textbf{LP}(w^*,\eta))$ that is $\IEFo$ in the non-degenerate instance $\barI$. As mentioned previously, \Cref{lemma:EFE-non-degenerate} and \ref{lemma:PO-non-degenerate} will then imply that $(A^*_1\setminus \{(m+1)\}, A^*_2\setminus \{(m+1)\}, \ldots, A^*_n \setminus \{(m+1)\})$ is $\IEFo$ and $\PO$ for the given instance $\calI$, thereby proving \Cref{theorem:main_fair_efficient}.

\begin{proof}[Proof of \Cref{lemma:EFE-non-degenerate}]
Allocation $\alloc = (A_1, \ldots, A_n)$ is $\IEFo$ in instance $\barI$ and, hence, for each agent $i\in [n]$ there exists a subset $S_i \subseteq [m]$, with $|S_i| \leq 1$ and the property that $\vbar_i(A_i \triangle S_i) \geq \max_{j \in [n]} \ \vbar_i(A_j)$. We will show that the inequality $\vbar_i(A_i \triangle S_i) \geq \max_{j \in [n]} \vbar_i(A_j)$ implies the following bound (under the given valuation $v_i$s)
\begin{equation}\label{equation:efe-wlog-proof-ineq-1}
v_i \Big((A_i \setminus \{m+1\}) \triangle (S_i \setminus \{m+1\}) \Big) \allowbreak \geq \max_{j \in [n]} v_i \big(A_j \setminus \{m+1\} \big).
\end{equation}
Since $|S_i \setminus \{m+1\}| \leq 1$, this would imply that the allocation $(A_1 \setminus \{m+1\}, A_2 \setminus \{m+1\}, \ldots, A_n \setminus \{m+1\})$ is $\IEFo$ for the instance $\calI = \angleb{[n],[m], \{v_i\}_{i=1}^n}$, thereby establishing the lemma. To prove equation (\ref{equation:efe-wlog-proof-ineq-1}), we start with inequality $\vbar_i(A_i \triangle S_i) \geq \max_{j \in [n]} \vbar_i(A_j)$, and, for each $j \in [n]$, obtain:  
\begin{align*}
0 & \leq \vbar_i(A_i \triangle S_i) - \vbar_i(A_j) \\
& \leq \vbar_i\big((A_i \triangle S_i ) \setminus \{m+1\}\big) - \vbar_i(A_j \setminus \{m+1\}) + \vbar_i(m+1) \tag{given that $\vbar_i(m+1) > 0$}\\
& \leq v_i \big((A_i \triangle S_i ) \setminus \{m+1\} \big) - v_i(A_j \setminus \{m+1\}) + \varepsilon |A_j \setminus \{m+1\}| + \vbar_i(m+1) \tag{since $v_i(t) - \varepsilon \leq \vbar_i(t) \leq v_i(t)$, for each $t$} \\
& \leq v_i\big((A_i \triangle S_i ) \setminus \{m+1\}\big) - v_i(A_j \setminus \{m+1\}) + \varepsilon m + \frac{\lambda}{2} \tag{$\vbar_i(m+1) = \lambda/2$} \\
& < v_i\big((A_i \triangle S_i ) \setminus \{m+1\}\big) - v_i(A_j \setminus \{m+1\}) + \frac{\lambda}{8m} \ m + \frac{\lambda}{2} \tag{$\varepsilon < \frac{\lambda}{8m}$; \Cref{ineq:ub-on-eps}}\\
& \leq v_i \Big((A_i \setminus \{m+1\}) \triangle (S_i \setminus \{m+1\}) \Big) - v_i(A_j \setminus \{m+1\}) + \lambda
\end{align*}
Hence, $v_i(A_j \setminus \{m+1\}) - v_i\Big((A_i \setminus \{m+1\}) \triangle (S_i \setminus \{m+1\}) \Big) < \lambda$. This strict inequality implies that, in fact, $v_i(A_j \setminus \{m+1\}) - v_i\Big((A_i \setminus \{m+1\}) \triangle (S_i \setminus \{m+1\}) \Big) \leq 0$; otherwise, the bound  $0 < v_i(A_j \setminus \{m+1\}) - v_i\Big((A_i \setminus \{m+1\}) \triangle (S_i \setminus \{m+1\}) \Big) < \lambda$
would contradict the definition of $\lambda$. Using $v_i(A_j \setminus \{m+1\}) - v_i\Big((A_i \setminus \{m+1\}) \triangle (S_i \setminus \{m+1\}) \Big) \leq 0$ for all agents $j \in [n]$, we obtain equation (\ref{equation:efe-wlog-proof-ineq-1}). This completes the proof of the lemma. 
\end{proof}

Next, we establish \Cref{lemma:PO-non-degenerate}.
\begin{proof}[Proof of \Cref{lemma:PO-non-degenerate}]
We begin by defining an intermediate instance $\calI' = \angleb{[n], [m+1], \{v'_i\}_{i=1}^n}$ wherein $v'_i(t) = v_i(t)$, for all $i \in [n]$ and $t \in [m]$, and $v'_i(m+1) = \lambda/2$ for each $i \in [n]$. That is, we construct instance $\calI'$ by adding item $(m+1)$ to $\calI$ and keeping the valuations of the given $m$ items unchanged. Instance $\calI'$ is defined for the purposes of analysis: We will first show that $\alloc$ is $\PO$ in $\calI'$. Then, we will conclude the proof by showing that $(A_1 \setminus \{m+1\}, A_2 \setminus \{m+1\}, \ldots, A_n \setminus \{m+1\})$ is $\PO$ in $\calI$.

Assume, towards a contradiction, that there exists an allocation $\allocb = (B_1,\ldots, B_n)$ that Pareto dominates $\alloc$ in $\calI'$. That is, $v'_i(B_i) \geq v'_i(A_i)$, for all $i \in [n]$, and $v'_a(B_a) > v'_a(A_a)$ for some $a \in [n]$. These inequalities imply that
\begin{align*}
0 & < \sum_{i=1}^n v'_i(B_i) - \sum_{i=1}^n v'_i(A_i)\\
& =  \frac{\lambda}{2} + \sum_{i=1}^n v'_i(B_i \setminus \{m+1\}) -  \frac{\lambda}{2} - \sum_{i=1}^n v'_i(A_i\setminus \{m+1\}) \tag{$v'_j(m+1) = \lambda/2$ for all $j$}\\
& = \sum_{i=1}^n v_i(B_i \setminus \{m+1\}) - \sum_{i=1}^n v_i(A_i\setminus \{m+1\}) \label{equation:nonDegPOProof1}\numberthis
\end{align*}
That is, $\sum_{i=1}^n v_i(B_i \setminus \{m+1\}) - \sum_{i=1}^n v_i(A_i\setminus \{m+1\}) > 0$. Using this strict inequality and the definition of $\omega$, we obtain 
\[\omega \leq \sum_{i=1}^n v_i(B_i \setminus \{m+1\}) - \sum_{i=1}^n v_i(A_i\setminus \{m+1\}) = \sum_{i=1}^n v'_i(B_i) - \sum_{i=1}^n v'_i(A_i) \label{equation:nonDegPOProof2} \numberthis \]
For the last equality, we use the fact that $v'_j(m+1) = \lambda/2$ for all $j$.
 
 Next, we use the welfare optimality of $\alloc$ in $\barI$ (and under the weight vector $w$) to arrive at a contradiction. In particular, $\alloc \in \OPT(\textbf{\textrm{LP}}(w,\eta))$ implies that $\sum_{i=1}^n (w_i + \eta) \vbar_i(B_i) \leq \sum_{i=1}^n (w_i + \eta) \vbar_i(A_i)$. For each subset $S \subseteq [\mbar]$ and each agent $i$, it holds that $v'_i(S) - \varepsilon |S| \leq \vbar_i(S) \leq v'_i(S)$. Therefore, we have $\sum_{i=1}^n (w_i + \eta) (v'_i(B_i) - \varepsilon |B_i|) \leq \sum_{i=1}^n (w_i + \eta) v'_i(A_i)$. This inequality reduces to 
 \begin{align}
\sum_{i=1}^n (w_i + \eta) (v'_i(B_i) - v'_i(A_i)) & \leq \varepsilon \sum_{i=1}^n (w_i + \eta)  |B_i| \nonumber \\
& = \varepsilon \sum_{i=1}^n w_i |B_i|  \ + \  \varepsilon \eta \sum_{i=1}^n |B_i| \nonumber  \\ 
& \leq \varepsilon (m+1) \sum_{i=1}^n w_i    \ + \  \varepsilon \eta \sum_{i=1}^n |B_i| \tag{since $w_i \geq 0$ and $|B_i| \leq m +1 $} \\
 & = \varepsilon (m+1)  + \varepsilon \eta (m+1) \tag{since $\sum_i w_i = 1$ and $\sum_i |B_i| = m + 1$} \\ 
 & = \varepsilon (m+1) (1 + \eta) \nonumber  \\
& \leq 2 \varepsilon (m+1) \tag{since $1 + \eta < 2$} \\
 & \leq 4 \varepsilon m \label{ineq:extra-two}
\end{align}
We can lower bound the left-hand-side of inequality (\ref{ineq:extra-two}) as follows 
\begin{align*}
\sum_{i=1}^n (w_i + \eta) (v'_i(B_i) - v'_i(A_i)) & \geq \eta \sum_{i=1}^n  (v'_i(B_i) - v'_i(A_i)) \tag{since $v'_i(B_i) \geq v'_i(A_i)$ and $w_i \geq 0$, for each $i$}\\
& \geq \eta \omega \tag{via (\ref{equation:nonDegPOProof2})}
\end{align*}
The last inequality and equation (\ref{ineq:extra-two}) imply $\eta \omega \leq 4 \varepsilon m$, i.e., $\frac{\eta \omega}{4m} \leq \varepsilon$. This bound, however, contradicts the strict inequality (\ref{ineq:to-contradict}). Therefore, by way of contradiction, we obtain that no allocation $\calB$ Pareto dominates $\alloc \in \OPT(\textbf{\textrm{LP}}(w,\eta))$ in the instance $\calI'$, i.e., $\alloc$ is $\PO$ in $\calI'$. 

Finally, we will use the Pareto optimality of $\alloc$ in $\calI'$ to show that the allocation $(A_1 \setminus \{m+1\},\ldots, A_n \setminus \{m+1\})$ is $\PO$ in the given instance $\calI = \angleb{[n], [m], \{v_i\}_{i=1}^n}$. Assume, towards a contradiction, that allocation $(D_1, \ldots, D_n)$ Pareto dominates $(A_1 \setminus \{m+1\},\ldots, A_n \setminus \{m+1\})$ in $\calI$. Write $a \in [n]$ to denote the agent that receives the auxiliary item $(m+1)$ in allocation $\alloc$, i.e., $m+1 \in A_a$. Note that the allocation $(A_1 \setminus \{m+1\},\ldots, A_a \setminus \{m+1\}, \ldots, A_n \setminus \{m+1\})$ is same as $(A_1,\ldots, A_a \setminus \{m+1\}, \ldots, A_n)$. The fact that $(D_1, \ldots, D_n)$ Pareto dominates $(A_1,\ldots, A_a \setminus \{m+1\}, \ldots, A_n)$ implies that $(D_1, \ldots, D_a \cup \{ m+1 \}, \ldots, D_n)$ Pareto dominates $\alloc$ in instance $\calI'$. Since this contradicts the already established Pareto optimality of $\alloc$ in $\calI'$, we obtain that $(A_1 \setminus \{m+1\}, \ldots, A_n \setminus \{m+1\})$ is $\PO$ in the instance $\calI$. The lemma stands proved.  
\end{proof}

In light of Lemmas \ref{lemma:EFE-non-degenerate} and \ref{lemma:PO-non-degenerate}, our goal (in Sections \ref{section:setup-KKM} and \ref{section:phaseII}) will be to prove the following theorem.

\begin{restatable}{theorem}{nonDegFairEfficient}\label{theorem:nondegenerate-fair-efficient}
There exists a vector $w^* \in \Delta_{n-1}$ and an associated allocation $\alloc^* \in \OPT(\textbf{\textrm{LP}}(w^*,\eta))$ with the property that $\alloc^*$ is $\IEFo$ in the constructed instance $\barI$.
\end{restatable}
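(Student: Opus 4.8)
The plan is to use a fixed-point / KKM-type argument over the weight simplex $\Delta_{n-1}$, exactly in the spirit of~\cite{mahara2025existence,barman2025fair}. For each weight vector $w \in \Delta_{n-1}$, the linear program $\textbf{\textrm{LP}}(w,\eta)$ has a (generically unique, thanks to the non-degeneracy in \Cref{lemma:non-degeneracy}(b)) optimal allocation $\alloc(w)$, together with the unique price vector $p(w) = \OPT(\textbf{\textrm{Dual-LP}}(w,\eta))$. Complementary slackness says each agent $i$ only receives items $j$ on which $p_j = (w_i+\eta)\vbar_i(j)$, i.e.\ $\alloc(w)$ is a competitive/MBB-type allocation at prices $p(w)$ with budgets/spendings $\sum_{j \in A_i} p_j = (w_i+\eta) \vbar_i(A_i)$. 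The hope is that, for a suitable $w^*$, the resulting equilibrium-like allocation balances the "spendings" so that it becomes $\EFEo$. Concretely, I would first translate $\EFEo$ for $\overline{\calI}$ into a statement about prices: if every agent's payment $p(A_i) := \sum_{j\in A_i} p_j$ is, up to adding/removing the price of one item, at least as large as the maximum payment $\max_j p(A_j)$ — and if the optimal-LP (MBB) structure lets us convert a price comparison into a value comparison — then the allocation is $\EFEo$. For goods this is the classical "price envy-freeness up to one good implies $\EFo$" argument; for chores and mixed manna one needs the introspective twist (removing a chore from $A_i$ or adding a good to $A_i$), which is precisely why $\EFEo$ rather than $\EFo$ shows up.

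**Next I would** set up the covering of $\Delta_{n-1}$. For each agent $i$, define a closed set $\calF_i \subseteq \Delta_{n-1}$ consisting of those $w$ for which there is an optimal allocation in $\OPT(\textbf{\textrm{LP}}(w,\eta))$ under which agent $i$ is "not the uniquely most-over-served" agent — roughly, $w_i = 0$ or agent $i$'s payment can be matched by some other agent after the one-item adjustment. One wants: (i) each $\calF_i$ is closed (using that $\OPT(\textbf{\textrm{LP}}(w,\eta))$ and $p(w)$ vary upper-semicontinuously in $w$, and that the finitely many candidate allocations make the relevant sets finite unions of closed polyhedral pieces); (ii) the boundary face $\{w : w_i = 0\}$ is contained in $\calF_i$ (an agent with zero weight contributes $0$ to the LP objective, so it can be "starved" — assigned a cheap bundle — without affecting optimality, making it under-served rather than over-served); (iii) $\bigcup_i \calF_i = \Delta_{n-1}$ (at any $w$, the agent with the largest payment is "matched by itself" trivially... so one has to be careful to define $\calF_i$ so that genuinely every $w$ is covered — likely $\calF_i$ should be "agent $i$ is envied by nobody after adjustment", and the covering comes from the observation that at any $w$ SOME agent's over-service is small). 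Then the KKM theorem yields a point $w^* \in \bigcap_{i=1}^n \calF_i$, and the non-degeneracy (\Cref{lemma:non-degeneracy}(b)) is used to argue that at $w^*$ a single allocation $\alloc^*$ simultaneously witnesses membership in all the $\calF_i$, i.e.\ $\alloc^* \in \OPT(\textbf{\textrm{LP}}(w^*,\eta))$ is $\EFEo$.

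**The main obstacle** I anticipate is getting the definition of the covering sets $\calF_i$ exactly right so that all three KKM hypotheses hold simultaneously — in particular reconciling (ii) and (iii): the boundary-inclusion property wants $\calF_i$ to capture "agent $i$ is under-served", while the covering property wants $\calF_i$ to capture "removing/adding one item from $A_i$ makes $i$ envy-free", and these must be the same set. The resolution, I expect, is to phase the argument: first handle $w^*$ in the interior (all $w_i > 0$) where complementary slackness gives $p(A_i) = (w_i+\eta)\vbar_i(A_i)$ and an exchange/path argument along MBB edges (using non-degeneracy so no product of value-ratios around a cycle equals $1$) shows that the equal-per-weight-share structure forces the one-item envy bound; and separately argue the KKM point cannot lie on a proper face, or if it does, a limiting/perturbation argument recovers the conclusion. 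A secondary obstacle is the discontinuity of $\alloc(w)$ as a set-valued map — this is exactly where \Cref{lemma:non-degeneracy}(b) earns its keep, ruling out the knife-edge coincidences that would let the optimal allocation jump, and ensuring the "selected" allocation at $w^*$ behaves well in the limit.

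**In summary**, the steps in order are: (1) record the competitive-equilibrium / MBB characterization of $\OPT(\textbf{\textrm{LP}}(w,\eta))$ via LP duality and complementary slackness; (2) prove a "price-$\EFEo$ implies $\EFEo$" lemma reducing the value-based fairness goal to a statement purely about the payments $p(A_i)$; (3) define the closed covering sets $\calF_i \subseteq \Delta_{n-1}$ encoding that agent $i$ meets the price-$\EFEo$ condition against everyone, verify closedness, boundary-face inclusion, and that they cover $\Delta_{n-1}$; (4) apply the KKM theorem to obtain $w^* \in \bigcap_i \calF_i$; (5) use \Cref{lemma:non-degeneracy}(b) to upgrade the $n$ witnesses into a single allocation $\alloc^* \in \OPT(\textbf{\textrm{LP}}(w^*,\eta))$ that is $\EFEo$ for $\overline{\calI}$, which is \Cref{theorem:nondegenerate-fair-efficient}.
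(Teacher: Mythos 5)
Your proposal correctly identifies the KKM fixed-point scaffolding and the MBB/price interpretation of $\OPT(\textbf{\textrm{LP}}(w,\eta))$, and step (2) (reducing value-$\EFEo$ to a statement about payments $p(A_i)$) is indeed exactly how the final wrap-up works in the paper. But there are two points where the proposal diverges from what actually happens, and one of them is a genuine gap.

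First, the covering sets. The paper does not define $C_i$ as "agent $i$ is not the uniquely most-over-served" or "agent $i$ meets price-$\EFEo$ against everyone." It defines $C_i$ as $\{w : i \in \argmax_j p(A_j) \text{ for some } \alloc \in \OPT(\textbf{\textrm{LP}}(w,\eta))\}$, i.e., $i$ is \emph{a} maximum-spending agent under \emph{some} optimal allocation. This resolves the tension you flag between boundary-inclusion and covering: the KKM condition needs $w \in C_i$ for some $i \in \supp(w)$, and the paper proves (Lemma~\ref{lemma:KKM-covering}) that an agent $i \notin \supp(w)$ can never be the argmax because its spending is bounded above by $m\eta\max_{i,j}|\vbar_i(j)| = \lambda/2n$, while the dummy item $m+1$ forces the argmax-in-$L$ agent's spending to be at least $\lambda/2n$. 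Your proposed $\calF_i$ is closer to "the property we ultimately want," but that set is exactly the one for which the covering property would be hard to verify; the paper's $C_i$ is the weaker, more tractable set.

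Second, and more fundamentally, the conclusion of the KKM step is \emph{not} a single allocation. It gives a $w^*$ such that for each $i$ there is a (generally different) allocation $\alloc^{(i)} \in \OPT(\textbf{\textrm{LP}}(w^*,\eta))$ in which $i$ has a maximum-priced bundle (Theorem~\ref{theorem:KKM-application}). Your step (5) — "use non-degeneracy to upgrade the $n$ witnesses into a single allocation" — is where the gap lies: non-degeneracy alone does not do this. Non-degeneracy (Lemma~\ref{lemma:non-degeneracy}(b)) gives you that the MBB graph $H$ is a forest (Lemma~\ref{lemma:H-is-forest}); it does not collapse $\OPT(\textbf{\textrm{LP}}(w^*,\eta))$ to a single allocation (which would be needed for the $n$ witnesses to coincide), nor does it directly say anything about envy. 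What the paper actually does is an entire second phase: it characterizes $\OPT(\textbf{\textrm{LP}}(w^*,\eta))$ as the set $O^*$ of allocations consistent with the forest (Proposition~\ref{proposition:O-star-is-opt}), defines the threshold $\tau = \min_{\allocb\in O^*}\max_j p(B_j)$ and the notion of a \emph{leveled} allocation, and then proves via the \textsc{AugmentingForestProcedure} (Algorithm~\ref{algo:augmenting-forest}, Lemmas~\ref{lemma:invariants}--\ref{lemma:guarantees-of-the-updated-bundles}, Theorem~\ref{theorem:leveled-allocation-is-fair-n-efficient}) that in a leveled allocation every agent satisfies $p^+(S_i)\geq\tau$. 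The key use of the $n$ KKM witnesses is inside Lemma~\ref{lemma:forest-augmentation-existence}: when agent $i$'s bundle is under-threshold, the witness $\alloc^{(i)}$ (where $i$ is maximum-spending) guarantees the existence of a beneficial swap set $X\subseteq\Gamma(i)\setminus\{\Pi(i)\}$ along the forest edges. Your mention of "an exchange/path argument along MBB edges" is the right instinct, but the actual procedure (rooting a tree, transferring items between a node and its children, tracking the $\tau$-threshold invariants) is a substantial combinatorial argument that your proposal does not supply. As written, step (5) would not go through.

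A minor third point: you anticipate uniqueness of the optimal allocation "generically," but the paper deliberately works with the full optimal face $O^*$, which is typically not a singleton; the non-degeneracy is used only to make the MBB graph acyclic, not to force uniqueness.
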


As stated previously, our main result, \Cref{theorem:main_fair_efficient}, follows directly from \Cref{theorem:nondegenerate-fair-efficient}. For completeness, we provide a proof here. 

\begin{proof}[Proof of \Cref{theorem:main_fair_efficient}]
Let $\alloc^* = (A^*_1, \ldots, A^*_n)$ be the allocation whose existence is guaranteed by \Cref{theorem:nondegenerate-fair-efficient}. By Lemmas \ref{lemma:EFE-non-degenerate} and \ref{lemma:PO-non-degenerate}, the allocation $(A^*_1 \setminus\{m+1\}, A^*_2\setminus\{m+1\}, \ldots, A^*_n\setminus\{m+1\})$ is $\IEFo$ and $\PO$ in the given instance $\calI$. Therefore, we obtain the stated fairness and efficiency guarantee for indivisible mixed manna. 
\end{proof}

\section{Applying the KKM Theorem}
\label{section:setup-KKM}
To prove \Cref{theorem:nondegenerate-fair-efficient}, we will invoke the Knaster-Kuratowski-Mazurkiewicz (KKM) Theorem. The theorem asserts that any $n$ closed subsets of the $(n-1)$-dimensional simplex $\Delta_{n-1}$ that satisfy a covering condition necessarily have a nonempty intersection. 

\begin{theorem}{\cite{knaster1929beweis}}\label{theorem:KKM}
Let $C_1, C_2, \ldots, C_n \subseteq \Delta_{n-1}$ be closed subsets of $\Delta_{n-1}$ with the property that for each vector $w \in \Delta_{n-1}$ there is an index $i \in \supp(w)$ satisfying $w \in C_i$. 
Then, there exists a vector $w^* \in \cap_{i=1}^n C_i$.
\end{theorem}
To apply \Cref{theorem:KKM}, we will appropriately define the subsets $C_i \subseteq \Delta_{n-1}$ and show that the vector $w^*$ guaranteed in the KKM Theorem satisfies \Cref{theorem:nondegenerate-fair-efficient}. 

We define $C_i$s considering the previously-mentioned primal and dual linear programs, $\LP(w,\eta)$ and $\Dual(w,\eta)$, which are parameterized by weight vectors $w \in \Delta_{n-1}$. Recall that 
$p = \OPT(\Dual(w, \eta))$ denotes the optimal price vector and $\OPT(\LP(w, \eta))$ contains the weighted welfare-maximizing (integral) allocations. For any subset of items $S \subseteq [\mbar]$, we will write $p(S)$ to denote the total price of the items in $S$, i.e., $p(S) = \sum_{t \in S} p_t$. 

A key construct here is \emph{price envy-freeness}. Specifically, we say that an agent $i$ is price envy-free in an allocation $\calB=(B_1, \ldots, B_n)$ and under prices $p' \in \mathbb{R}^{\mbar}$ if $i$'s bundle has a price at least as much as that of any other agent: $p'(B_i) \geq p'(B_j)$ for all $j \in [n]$. For each agent $i \in [n]$, we define $C_i$ to be the set of weights $w \in \Delta_{n-1}$ for which there is an optimal allocation $\alloc \in \OPT(\textbf{\textrm{LP}}(w,\eta))$ wherein $i$ is price envy-free, under the optimal prices $p = \OPT(\Dual(w, \eta))$. Formally, 

\begin{equation}\label{definition:KKM-sets-a}
  C_i \coloneqq \left\{w \in \Delta_{n-1} \mid \text{ there exists } \alloc \in \OPT(\textbf{\textrm{LP}}(w,\eta)) \text{ such that } i \in \argmax_{j \in [n]}{p(A_j)} \text{ for the optimal prices } p \right\}
\end{equation}

The following two lemmas show that the sets $C_1, \ldots, C_n$ (defined in equation (\ref{definition:KKM-sets-a})) satisfy the conditions of the KKM Theorem.

\begin{lemma}[Closedness]\label{lemma:KKM-closedness}
For each $i \in [n]$, the set $C_i \subseteq \Delta_{n-1}$ is closed.
\end{lemma}
\begin{proof}
Fix any agent $i \in [n]$. Consider any convergent sequence $\{w^k \}_{k \in \mathbb{N}}$ in $C_i$ and $\overline{w} \in \Delta_{n-1}$ be its limit point. We will show that $\overline{w} \in C_i$ and, hence, obtain that the $C_i$ is closed. 

Since the total number of allocations are finite, there must be an infinite subsequence $\{ \widetilde{w}^k \}_{k \in \mathbb{N}}$ and a particular allocation $\widetilde{\alloc} = (\widetilde{A}_1,\ldots, \widetilde{A}_n)$ such that,  for each $k \in \mathbb{N}$, allocation $\widetilde{\alloc} \in \OPT(\textbf{\textrm{LP}}(\widetilde{w}^k,\eta))$ and $i$ is price envy-free here with optimal prices $\widetilde{p}^k = \OPT(\textbf{\textrm{Dual-LP}}(\widetilde{w}^k,\eta))$.

We first note that $\widetilde{\alloc}$ is a welfare-maximizing allocation even for the limit point $\overline{w}$. 
\begin{claim}
\label{claim:opt-convergence}
$\widetilde{\alloc} \in \OPT(\LP(\overline{w},\eta))$.
\end{claim}
\begin{proof}
Write $\SW(\calB, w)$ to denote the weighted social welfare of allocation $\calB=(B_1, \ldots, B_n)$,\footnote{Recall that $\eta$ is a fixed parameter.} i.e., $\SW(\calB, w) \coloneqq \sum_{i=1}^n (w_i + \eta) \vbar_i(B_i)$. In addition, write $\SW^*(w)$ to denote that optimal value of the linear program $\LP(w,\eta)$. Note that the optimal value satisfies $\SW^*(w) = \max_{\calB \in \Pi_n([m])} \ \SW(\calB, w)$. 

Furthermore, for any fixed allocation $\calB \in \Pi_n([m])$, the function $\SW(\calB, w)$ is continuous in $w$. Hence, $\SW^*(w)$---a maximum of continuous functions---is also continuous. Now, using the sequential criterion of continuity, we obtain that $\SW(\widetilde{\alloc}, \widetilde{w}^k) \to \SW(\widetilde{\alloc}, \overline{w})$ as $k \to \infty$. In addition, $\SW^*(\widetilde{w}^k) \to \SW^*(\overline{w})$ as $k \to \infty$. The two sequences here, $\{\SW(\widetilde{\alloc}, \widetilde{w}^k)\}_k$ and $\{\SW^*(\widetilde{w}^k)\}_k$, are the same, since $\widetilde{\alloc} \in \OPT(\textbf{\textrm{LP}}(\widetilde{w}^k,\eta))$ for each $k$. Therefore, $\SW(\widetilde{\alloc}, \overline{w})  = \SW^*(\overline{w})$, i.e., allocation $\widetilde{\alloc}$ is welfare-maximizing at $\overline{w}$ as well: $\widetilde{\alloc} \in \OPT(\LP(\overline{w},\eta))$. This completes the proof of the claim. 
\end{proof}

Write $\overline{p}= \OPT(\Dual(\overline{w},\eta))$ to denote the optimal price vector at the limit point $\overline{w}$. The claim below shows that the allocation $\widetilde{\alloc} = (\widetilde{A}_1,\ldots, \widetilde{A}_n)$ ensures price envy-freeness for the agent $i$ under $\overline{p}$. 
\begin{claim}
\label{claim:price-convergence}
Agent $i \in  \argmax_{j \in [n]} \ \overline{p} (\widetilde{A}_j)$. 
\end{claim}
\begin{proof}
Recall that for any weight vector $w'$, the optimal prices $p' = \OPT(\Dual(w, \eta))$ satisfy $p'_t = \max_{i \in [n]} \ (w'_i + \eta) \vbar_i(t)$ for each item $t \in [\mbar]$. That is, for each item $t$, the optimal price is a continuous function of the weight. Hence, using the sequential criterion of continuity again, we obtain that the optimal price vectors $\widetilde{p}^k = \OPT(\textbf{\textrm{Dual-LP}}(\widetilde{w}^k,\eta))$ converge (component-wise) to $\overline{p}$. Furthermore, for each bundle $\widetilde{A}_j$ in the allocation $\widetilde{\alloc} = (\widetilde{A}_1,\ldots, \widetilde{A}_n)$, we have $\widetilde{p}^k \left(\widetilde{A}_j \right) \to \overline{p} \left(\widetilde{A}_j \right)$ as $k \to \infty$.  

Next, note that, for each $k \in \mathbb{N}$, since agent $i$ is price-envy free in $\widetilde{\alloc}$ (under $\widetilde{p}^k$), we have $\widetilde{p}^k \left(\widetilde{A}_i \right) \geq \widetilde{p}^k \left(\widetilde{A}_j \right)$ for every $j \in [n]$. Therefore, the order-preserving property of limits gives us $\overline{p} \left(\widetilde{A}_i \right) \geq \overline{p} \left(\widetilde{A}_j \right)$ for each $j$. The claim stands proved. 
\end{proof}
Claims \ref{claim:opt-convergence} and \ref{claim:price-convergence} imply that, for the limit point $\overline{w}$, the allocation $\widetilde{\alloc} \in \OPT(\LP(\overline{w},\eta))$ yields price envy-freeness for agent $i$ under $\overline{p}$. Therefore, the limit point $\overline{w} \in C_i$ (see equation (\ref{definition:KKM-sets-a})) and, as desired, we obtain that $C_i$ is closed. 
\end{proof}

\begin{lemma}[KKM Covering]\label{lemma:KKM-covering}
For each $w \in \Delta_{n-1}$ there exists an $i \in \supp(w)$ such that $w \in C_i$.
\end{lemma}
\begin{proof}
Fix any weight vector $w \in \Delta_{n-1}$. We consider two cases, based on whether $\supp(w) = [n]$ or $\supp(w) \subsetneq [n]$. \\

\noindent
{\it Case {\rm I}: $\supp(w) = [n]$.} Consider any allocation $\alloc \in \OPT(\textbf{\textrm{LP}}(w,\eta))$ and write $p = \OPT(\textbf{\textrm{Dual-LP}}(w,\eta))$. Also, write $i' \in \argmax_{j \in [n]} p(A_j)$. In the current case $\supp(w) = [n]$ and, hence, $i' \in \supp(w)$. Additionally, we have $w \in C_{i'}$, since $i' \in \argmax_{j \in [n]} p(A_j)$  (see equation (\ref{definition:KKM-sets-a})). Hence, for vectors $w$ with full support, the KKM covering condition holds. \\

\noindent
{\it Case {\rm II}: $\supp(w) \subsetneq [n]$.} In this case $[n] \setminus \supp(w) \neq \emptyset$. Write $p = \OPT(\textbf{\textrm{Dual-LP}}(w,\eta))$ and consider any allocation $\alloc \in \OPT(\textbf{\textrm{LP}}(w,\eta))$. We will show that there exists an $i' \in \supp(w)$ with the property that $i' \in \argmax_{j \in [n]} \ p(A_j)$, i.e., $i'$ (with $w_{i'} >0$) is price envy-free. For such an agent $i'$, we have $w \in C_{i'}$, and, hence, we will obtain KKM covering in the current case as well. 

Write $H \coloneqq \{i \in [n] \mid w_i \geq 1/n\}$. This set $H \neq \emptyset$, since $\sum_{i=1}^n w_i = 1$. Also, the set of agents $[n]$ partitions into three disjoint subsets: $[n] \setminus \supp(w)$, $H$, and $\supp(w) \setminus H$. 

Furthermore, note that in the welfare-maximizing allocation $\alloc \in \OPT(\textbf{\textrm{LP}}(w,\eta))$, each item $t \in [\mbar]$ is assigned to an agent in $\argmax_{i \in [n]} \ (w_i + \eta) \vbar_i(t)$. In particular, the auxiliary item $(m+1)$ must be assigned (in $\alloc$) to an agent in $\argmax_{i \in [n]} w_i$; recall that, by construction, item $(m+1)$ has the same value $\vbar_i(m+1) = \lambda/2 > 0$ for all the agents $i$. Write $h \in \argmax_{i \in [n]} w_i$ to denote the agent that receives $(m+1)$ in $\alloc$. That is, $h$ has a highest-valued component in $w$ and item $(m+1) \in A_h$. Also, we have $h \in H \subseteq \supp(w)$. 

To establish the existence of price envy-free agent $i' \in \supp(w)$, we will show that $\max\limits_{j \in [n] \setminus \supp(w)} p(A_j) \leq p(A_h)$. Since $h \in \supp(w)$, the last inequality extends to  
\begin{align}
\max\limits_{j \in [n] \setminus \supp(w)} p(A_j) \leq p(A_h) \leq \max\limits_{i \in \supp(w)} p(A_i) \label{ineq:sandwich}
\end{align}
Note that inequality (\ref{ineq:sandwich}) implies that there exists an agent $i' \in \supp(w)$ with a maximum-price bundle, $i' \in \argmax_{i \in [n]} p(A_i)$, i.e., as desired, there exists a price envy-free agent $i' \in \supp(w)$. 

Hence, to complete the analysis of the current case, it remains to prove that $\max\limits_{j \in [n] \setminus \supp(w)} p(A_j) \leq p(A_h)$. Towards this, we will establish the following two inequalities
\begin{align}  
\max\limits_{j \in [n] \setminus \supp(w)} p(A_j) & \leq \frac{\lambda}{2n} \label{ineq:zero-upper} \\ 
\frac{\lambda}{2n} & \leq p(A_h) \label{ineq:h-lower}
\end{align}
Indeed, inequalities (\ref{ineq:zero-upper}) and (\ref{ineq:h-lower}) imply the stated bound: $\max\limits_{j \in [n] \setminus \supp(w)} p(A_j) \leq \frac{\lambda}{2n} \leq p(A_h)$.

\noindent
\emph{Establishing Inequality (\ref{ineq:zero-upper}).} As mentioned above, in the welfare-maximizing allocation $\alloc \in \OPT(\textbf{\textrm{LP}}(w,\eta))$, each item $t \in [\mbar]$ is assigned to an agent in $\argmax_{i \in [n]} \ (w_i + \eta) \vbar_i(t)$. In particular, for each agent $b \in [n] \setminus \supp(w)$ and each item $s \in A_b$, it holds that $p_s  = \max_{i \in [n]} (w_i + \eta)\vbar_i(t) = \eta \ \vbar_b(s)$; recall that $w_b = 0$, since $b \in [n] 
\setminus \supp(w)$. Since the auxiliary item $(m+1)$ is assigned to the agent $h \notin [n] \setminus \supp(w)$, for each $b \in [n] \setminus \supp(w)$, we have $|A_b| \leq m$. These observations give us 
\begin{align}
p(A_b) = \sum_{s \in A_b} p_s  = \sum_{s \in A_b} \eta \ \vbar_b(s) \leq  |A_b| \ \eta \ \max_{i \in [n], \ t \in [\mbar]} |\vbar_i(t)| \leq m \eta \ \max_{i \in [n], \ t \in [\mbar]} |\vbar_i(t)| = \frac{\lambda}{2n} 
\label{ineq:price-low}
\end{align}
The last equality follows from the definition of $\eta$ (equation (\ref{definition:value-of-eta})). Since equation (\ref{ineq:price-low}) holds for each $b \in [n] \setminus \supp(w)$, inequality (\ref{ineq:zero-upper}) follows. \\

\noindent
\emph{Establishing Inequality (\ref{ineq:h-lower}).} To lower bound $p(A_h)$, we first note that the price of item $(m+1) \in A_h$ satisfies $p_{m+1} = \frac{\lambda}{2}  \ \max\limits_{i \in [n]} (w_i + \eta) \geq \frac{\lambda}{2n}$; here, the last inequality follows from the fact that $\max_{i \in [n]} w_i \geq 1/n$. 

Now, recall that in the instance $\barI = \angleb{[n], [\mbar], \{\vbar_i \}_{i=1}^n}$ the set of items $[\mbar]$ can be partitioned into three disjoint sets: goods $G$, chores $C$, and non-negatively valued items $G_0$ (\Cref{lemma:three-types}). We will next show that, in allocation $\alloc$, no chore is assigned to agent $h$. Since only chores have negative prices (\Cref{proposition:price-non-negativity}), this will give us inequality (\ref{ineq:h-lower}): $p(A_h) \geq p_{m+1} \geq \frac{\lambda}{2n}$. 

Specifically, for any chore $c \in C$ and agent $b \in [n] \setminus \supp(w)$, we will prove 
\begin{align}
\eta \vbar_b(c) > (w_h + \eta) \vbar_h(c) \label{ineq:chore-h}
\end{align}
 Since, in allocation $\alloc$, chore $c$ must be assigned to an agent in $\argmax_{i \in [n]} (w_i + \eta) \vbar_i(c)$, inequality (\ref{ineq:chore-h}) implies that $c$ is not assigned to $h$ in $\alloc$. 
 
Here, since $c$ is a chore, $\vbar_h(c) < 0$ and $v_h(c) < 0$ (see \Cref{proposition:signs}). Moreover, $|\vbar_h(c)| = \left|v_h(c) - \varepsilon_{h,c} \right| \geq \left| v_h(c)\right|$; the last inequality holds since $v_h(c) < 0$ and $\varepsilon_{h,c} \geq 0$. Further, using the definition of $\lambda$ we get $|\vbar_h(c)|  \geq |v_h(c)| \geq \lambda$. This bound implies 
\begin{align}
(w_h + \eta) |\vbar_h(c)| & \geq (w_h + \eta) \lambda \nonumber \\
& \geq \frac{\lambda}{n} \tag{$w_h \geq 1/n$ and $\eta >0$} \\
& >  \eta \max_{i \in [n], \ t \in [\mbar]} | \vbar_i(t)| \tag{equation (\ref{definition:value-of-eta})} \\
& \geq \eta |\vbar_b(c)| \label{ineq:abs-reverse}
\end{align}  
Since $c$ is a chore, both $\vbar_b(c) <0$ and $\vbar_h(c)<0$. Hence, equation (\ref{ineq:abs-reverse}) reduces to the stated strict inequality (\ref{ineq:chore-h}). That is, no chore $c$ is assigned to agent $h$ and, as stated, we obtain inequality (\ref{ineq:h-lower}): $p(A_h) \geq p_{m+1} \geq \frac{\lambda}{2n}$. \\

In summary, we have inequalities (\ref{ineq:zero-upper}) and (\ref{ineq:h-lower}). As mentioned above, these bounds imply that there exists an agent $i' \in \supp(w)$ with a maximum-price bundle, $i' \in \argmax_{i \in [n]} p(A_i)$. Overall, for this agent $i' \in \supp(w)$, we have $w \in C_{i'}$ and, hence, the KKM covering condition holds in the current case as well.  The completes the proof of the lemma. 
\end{proof}

Lemmas \ref{lemma:KKM-closedness} and \ref{lemma:KKM-covering} enable us to invoke the KKM Theorem with sets $C_i$s as defined in \Cref{definition:KKM-sets-a}. This invocation proves that there necessarily exists a weight vector $w^*$ with the property that, for each agent $i \in [n]$, we have a weighted welfare-maximizing allocation $\alloc^{(i)} =(A^{(i)}_1, \ldots, A^{(i)}_n)$ wherein agent $i$ is price envy-free. Formally, 
\begin{theorem}\label{theorem:KKM-application}
There exists a vector $w^* \in \Delta_{n-1}$ with the property that, for each $i \in [n]$, we have an allocation $\alloc^{(i)} \in \OPT(\textbf{\textrm{LP}}(w^*,\eta))$ wherein agent $i \in \argmax\limits_{j \in [n]}  \ p^*\left(A^{(i)}_j \right)$; here $p^* = \OPT(\textbf{\textrm{Dual-LP}}(w^*,\eta))$.
\end{theorem}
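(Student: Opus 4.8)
The plan is to invoke the Knaster--Kuratowski--Mazurkiewicz theorem (\Cref{theorem:KKM}) on the family of sets $C_1, \dots, C_n \subseteq \Delta_{n-1}$ defined in \eqref{definition:KKM-sets}. Both hypotheses of \Cref{theorem:KKM} have already been established for this family: \Cref{lemma:KKM-closedness} shows that each $C_i$ is closed, and \Cref{lemma:KKM-covering} shows that for every $w \in \Delta_{n-1}$ there is some $i \in \supp(w)$ with $w \in C_i$, which is exactly the KKM covering condition. Hence \Cref{theorem:KKM} applies and yields a point $w^* \in \bigcap_{i=1}^n C_i$.

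It then remains only to translate the statement ``$w^* \in C_i$'' for each $i \in [n]$ into the conclusion of the theorem. Unwinding the definition in \eqref{definition:KKM-sets}, the membership $w^* \in C_i$ means precisely that there exists an allocation $\alloc^{(i)} = (A^{(i)}_1, \dots, A^{(i)}_n) \in \OPT(\textbf{\textrm{LP}}(w^*,\eta))$ for which $i \in \argmax_{j\in[n]} p(A^{(i)}_j)$, where $p = \OPT(\textbf{\textrm{Dual-LP}}(w^*,\eta))$. Since $\textbf{\textrm{Dual-LP}}(w^*,\eta)$ has the unique optimal solution $p_j = \max_{i'=1}^n (w^*_{i'}+\eta)\vbar_{i'}(j)$, this price vector $p$ is one and the same for every $i \in [n]$; only the witnessing allocation $\alloc^{(i)}$ need depend on $i$. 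Collecting these allocations over all $i \in [n]$ yields exactly the point $w^*$, the allocations $\alloc^{(1)}, \dots, \alloc^{(n)}$, and the single price vector $p$ required by the statement.

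The substantive work behind this theorem was already done in \Cref{lemma:KKM-covering} --- in particular the case analysis on $\supp(w)$ and the pricing estimates $(i)$ and $(ii)$, which hinge on the specific choice of $\eta$ in \eqref{definition:value-of-eta}. Given those lemmas, the present argument is a direct, essentially bookkeeping-level, application of KKM, and I do not anticipate any further obstacle. The only point that warrants a moment's care is matching the ``there exists an allocation'' (existential over $\alloc$) structure in the definition of $C_i$ with the ``for each $i$ there is an allocation $\alloc^{(i)}$'' phrasing of the statement, together with the observation that the dual optimum $p$ is common to all $i$ because it is unique.
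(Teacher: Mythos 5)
Your proposal is correct and follows essentially the same route as the paper: invoke \Cref{theorem:KKM} using \Cref{lemma:KKM-closedness} and \Cref{lemma:KKM-covering} to obtain $w^* \in \bigcap_{i=1}^n C_i$, then unwind \eqref{definition:KKM-sets}, noting that $p$ is the unique dual optimum and hence shared across all $i$. The paper's proof is identical in substance; your remark about the uniqueness of $p$ is a slight elaboration that the paper leaves implicit.
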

\begin{proof}
Consider the sets $C_1, \ldots, C_n \subseteq \Delta_{n-1}$ defined in \Cref{definition:KKM-sets-a}. \Cref{lemma:KKM-closedness} ensures that $C_1, \ldots, C_n$ are closed. In addition, via \Cref{lemma:KKM-covering}, we have that these sets satisfy the KKM covering condition. Hence, invoking the KKM Theorem (\Cref{theorem:KKM}), we obtain that there a exists a vector $w^* \in \cap_{i=1}^n C_i$. Write $p^* = \OPT(\textbf{\textrm{Dual-LP}}(w^*,\eta))$. For each $i \in [n]$, we have $w^* \in C_i$. Hence, for each $i \in [n]$, by the definition of $C_i$, we have an allocation $\alloc^{(i)} \in \OPT(\textbf{\textrm{LP}}(w^*,\eta))$ wherein agent $i \in \argmax_{j \in [n]}{p^* \left(A^{(i)}_j \right)}$. 
The theorem stands proved. 
\end{proof}

\section{The Augmenting Tree Algorithm}\label{section:phaseII}
This section proves the fairness and efficiency guarantee, as stated in \Cref{theorem:nondegenerate-fair-efficient}, for the non-degenerate instance $\barI$. Towards this, we start with weight vector $w^*$ and the corresponding allocations $\alloc^{(1)}, \ldots, \alloc^{(n)}$, whose existence was established in \Cref{theorem:KKM-application}. Then, we constructively prove that the existence of such a collection of allocations (which are price envy-free for the $n$ agents, respectively) implies the desired existence of a single allocation $\alloc^*$ that is $\IEFo$ and welfare-maximizing in $\barI$. Here, our constructive proof relies on a novel Augmenting Tree Algorithm.  

\subsection{Price Graph}
Our subsequent analysis is based on a bipartite subgraph $H$ which we define considering the weight vector $w^* \in \Delta_{n-1}$, identified in \Cref{theorem:KKM-application}, and the associated optimal price vector $p^* =  \OPT(\textbf{\textrm{Dual-LP}}(w^*,\eta))$. 

We first consider, for each agent $i \in [n]$, the items $s$ for which $i$ is the unique agent in $\argmax_{j \in [n]} \ (w^*_j + \eta) \vbar_j(s)$. In particular, write bundle $L_i \coloneqq \left\{ s \in [\mbar] \mid (w^*_i + \eta) \vbar_i(s) >  (w^*_j + \eta) \vbar_j(s) \text { for all } j \neq i \right\}$, for each $i \in [n]$.\footnote{The set $L_i$ can be empty.} Write $\mathcal{L} = (L_1, \ldots, L_n)$ to denote the partial allocation obtained via these bundles. 

Recall that in any welfare-maximizing allocation $\alloc \in \OPT(\LP(w^*, \eta))$, each item $t \in [m]$ is assigned to an agent in $\argmax_{j \in [n]} \ (w^*_j + \eta) \vbar_j(t)$. Hence, in any welfare-maximizing allocation $\alloc =(A_1, \ldots, A_n)$ the containment $L_i \subseteq A_i$ holds for each $i \in [n]$. 

We define the bipartite graph $H = ([n] \cup U, E)$ with the left part as the set of agents, $[n]$, and the right part $U \coloneqq [\mbar] \setminus \left( \cup_{i=1}^n L_i\right)$. The definition of $L_i$s implies that the right part $U$ consists of the items $t$ with $\left| \argmax_{j \in [n]} \ (w^*_j + \eta) \vbar_j(t)\right| \geq 2$. Moreover, we include $(i,t) \in [n] \times U$ in the edge set $E$ iff $i \in \argmax_{j \in [n]} (w^*_j + \eta) \vbar_j(t)$. 

We will use $\Gamma(x)$ to denote the vertices adjacent to each $x \in [n] \cup U$ in $H = ([n] \cup U, E)$. Note that the construction of the edge set $E$ and the definition of $U$ ensure that each item $t \in U$ has degree at least two in the bipartite graph, $|\Gamma(t)| \geq 2$. 

The lemma below provides a key property of $H$: it is acyclic.

\begin{lemma}\label{lemma:H-is-forest}
The bipartite graph $H = ([n] \cup U, E)$ is a forest.
\end{lemma}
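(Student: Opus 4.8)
The plan is to argue by contradiction: suppose $H$ contains a cycle. Since $H$ is bipartite with one side being agents $[n]$ and the other being items $U$, any cycle in $H$ must be an even simple cycle, which (after relabeling) we can write as $S = (j_1, i_1, j_2, i_2, \ldots, j_k, i_k, j_1)$ where the $j_\ell \in U$ are distinct items, the $i_\ell \in [n]$ are distinct agents, and each of the edges $(i_\ell, j_\ell)$ and $(i_\ell, j_{\ell+1})$ lies in $E_G$ (indices on $j$ taken cyclically). The key point is that membership of an edge $(i,j)$ in $E_G$ means exactly $p_j = (w^*_i + \eta)\vbar_i(j)$, so every edge of the cycle gives us an equation relating a price to a valuation.

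The next step is to multiply these equations around the cycle. For each $\ell \in [k]$, the edge $(i_\ell, j_\ell)$ gives $p_{j_\ell} = (w^*_{i_\ell} + \eta)\vbar_{i_\ell}(j_\ell)$ and the edge $(i_\ell, j_{\ell+1})$ gives $p_{j_{\ell+1}} = (w^*_{i_\ell} + \eta)\vbar_{i_\ell}(j_{\ell+1})$. Dividing, we get $\dfrac{p_{j_\ell}}{p_{j_{\ell+1}}} = \dfrac{\vbar_{i_\ell}(j_\ell)}{\vbar_{i_\ell}(j_{\ell+1})}$ for each $\ell$; all these quotients are well-defined and nonzero because prices are nonzero by \Cref{proposition:price-non-negativity} and the relevant valuations are nonzero by \Cref{proposition:value-non-negativity}. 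Taking the product over $\ell = 1, \ldots, k$, the left-hand side telescopes to $1$ (each $p_{j_\ell}$ appears once in a numerator and once in a denominator), yielding
\[
\prod_{\ell=1}^{k} \frac{\vbar_{i_\ell}(j_\ell)}{\vbar_{i_\ell}(j_{\ell+1})} = 1.
\]
But this directly contradicts property~$(b)$ of \Cref{lemma:non-degeneracy}, which guarantees that for every simple cycle in $K_{n,\mbar}$ with nonzero valuations along its edges, this product is $\emph{not}$ equal to $1$. Hence $H$ has no cycle, and being a graph with no cycle it is a forest.

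I expect the argument to be essentially routine once the telescoping identity is set up; the only mild subtlety is bookkeeping, namely making sure the cycle is written as a genuine simple cycle (distinct items and distinct agents) so that \Cref{lemma:non-degeneracy}$(b)$ applies verbatim, and checking that all divisions are legitimate, which is exactly what Propositions~\ref{proposition:price-non-negativity} and~\ref{proposition:value-non-negativity} supply. There is no real obstacle here — this lemma is the clean payoff of having built the non-degenerate instance $\overline{\calI}$ with property~$(b)$.
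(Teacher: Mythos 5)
Your proof is correct and takes essentially the same approach as the paper: a contradiction via telescoping the per-edge identities $p_{j_\ell}/p_{j_{\ell+1}} = \vbar_{i_\ell}(j_\ell)/\vbar_{i_\ell}(j_{\ell+1})$ around a supposed cycle, contradicting Lemma~\ref{lemma:non-degeneracy}(b), with Propositions~\ref{proposition:price-non-negativity} and~\ref{proposition:value-non-negativity} justifying the divisions. The only cosmetic difference is that the paper proves the ambient graph $G$ is a forest and then observes $H$ is a subgraph, while you run the argument directly on a cycle in $H$; the two are interchangeable.
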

\begin{proof}
First, we note that in the bipartite graph $H = ([n] \cup U, E)$, for every edge $(i,t) \in E$ we have $\vbar_i(t) \neq 0$: The optimal prices $p^*_t \neq 0$ for each item $t \in [\mbar]$ (\Cref{proposition:price-non-negativity}). In addition, by construction of the edge set $E$, for every $(i,t) \in E$, we have $p^*_t = (w^*_i + \eta) \vbar_i(t) \neq 0$. Since $(w^*_i + \eta) > 0$, it holds that $\vbar_i(t) \neq 0$.

 Assume, towards a contradiction, that $H$ admits a simple cycle $C = (i_1, t_1, i_2, t_2, \ldots, i_k, t_k, i_1)$, with $i_1, \ldots, i_k \in [n]$ and $t_1, \ldots, t_k \in U$.
 Here, each $(i_\ell, t_\ell), (t_\ell, i_{\ell+1}) \in E$ and, hence, as noted above, the values $\vbar_{i_\ell}(t_\ell), \vbar_{i_{\ell +1}} (t_\ell) \neq 0$, for each $\ell \in [k]$.\footnote{By convention we set $i_{k+1} = i_{1}$.} Therefore, the product $\prod_{\ell = 1}^{k} \left( \frac{\vbar_{i_{\ell+1}}(t_\ell)}{\vbar_{i_\ell}(t_{\ell})} \right)$ is nonzero and satisfies 
 \begin{align}
\prod_{\ell = 1}^{k} \left( \frac{\vbar_{i_{\ell+1}}(t_\ell)}{\vbar_{i_\ell}(t_{\ell})} \right) = \prod_{\ell = 1}^{k} \left( \frac{(w^*_{i_{\ell+1}} + \eta) \vbar_{i_{\ell+1}}(t_\ell)}{ (w^*_{i_\ell} + \eta) \vbar_{i_\ell}(t_{\ell})}  \right) \label{eq:tele}
\end{align}
 The last equality is obtained by multiplying and dividing by the positive terms $\{ (w^*_{i_\ell} + \eta)\}_{\ell \in [k]}$  to the left-hand-side of the equation, and a telescoping shift. 
 
 Since both $(i_\ell, t_\ell), (t_\ell, i_{\ell+1}) \in E$, we have $p^*_{t_\ell} = (w^*_{i_{\ell+1}} + \eta) \vbar_{i_{\ell+1}}(t_\ell) = (w^*_{i_\ell} + \eta) \vbar_{i_\ell}(t_{\ell})$. 
That is, each ratio $ \left( \frac{(w^*_{i_{\ell+1}} + \eta) \vbar_{i_{\ell+1}}(t_\ell)}{ (w^*_{i_\ell} + \eta) \vbar_{i_\ell}(t_{\ell})}  \right) = 1$ and, hence, equation (\ref{eq:tele}) reduces to $\prod_{\ell = 1}^{k} \left( \frac{\vbar_{i_{\ell+1}}(t_\ell)}{\vbar_{i_\ell}(t_{\ell})} \right) = 1$. 
 
This, however, contradicts \Cref{lemma:non-degeneracy}, since $C$ is a cycle in the complete bipartite graph $K_{n, \mbar}$ as well. Hence, by way of contradiction, we obtain that $H$ is acyclic, i.e., a forest. The lemma stands proved. 
\end{proof}

The following lemma provides another useful property of the bipartite graph $H$: Any completion of the partial allocation $\mathcal{L} = (L_1, \ldots, L_n)$ by assigning the items in $U = [\mbar] \setminus \left(\cup_{i \in [n]} \  L_i \right) $ along the edges in $H$ provides a welfare-maximizing allocation.  

Specifically, define the collection of (complete) allocations 
\begin{align}
\calO^* \coloneqq \left\{ (B_1, \ldots, B_n) \in \Pi_n([\mbar]) \mid  L_i \subseteq B_i \subseteq L_i \cup \Gamma(i) \text{ for each } i \in [n] \right\} \label{eqn:defn-ostar}
\end{align}
We next show that $\calO^*$ is same as $\OPT(\textbf{\textrm{LP}}(w^*,\eta))$, the set of welfare-maximizing allocations. 

\begin{lemma}\label{lemma:O-star-is-opt}
 Any allocation $\calB \in \calO^*$ if and only if $\calB \in \OPT(\textbf{\textrm{LP}}(w^*,\eta))$. 
\end{lemma}
\begin{proof}
If $\calB = (B_1, \ldots, B_n) \in \calO^*$, then, by definition, for all $i \in [n]$, we have $L_i \subseteq B_i \subseteq L_i \cup \Gamma(i)$. Hence, if item $t \in B_i$, then $(i,t) \in E$, i.e., $p^*_t = (w^*_i + \eta) \vbar_i(t)$. This implies that the objective value of $\textbf{\textrm{LP}}(w^*,\eta)$ for the allocation $\calB$ is $\sum_{i \in [n]} \sum_{t \in B_i} (w^*_i + \eta) \vbar_i(t) = \sum_{t \in [\mbar]} p^*_t$. Note that $\sum_{t \in [\mbar]} p^*_t$ is also the objective value of $\textbf{\textrm{Dual-LP}}(w^*,\eta)$. Hence, by strong duality, we get that $\calB \in \OPT(\textbf{\textrm{LP}}(w^*,\eta))$.

To prove the other direction, consider any $\calB' \in \OPT(\textbf{\textrm{LP}}(w^*,\eta))$. Optimality of $\calB' = (B'_1, \ldots, B'_n)$ implies that if $t \in B'_i$, then $(w^*_i + \eta) \vbar_i(t) = \max_{j \in [n]} \ (w^*_j + \eta) \vbar_j(t)$. 

For item $t$, if $i$ is the unique agent in $\argmax_{j \in [n]} \ (w^*_j + \eta) \vbar_j(t)$, then $t \in L_i$. Otherwise, by construction of the edge set $E$ in the bipartite graph $H$, we have $(i,t) \in E$. Hence, for each $i \in [n]$, we have $L_i \subseteq A_i \subseteq L_i \cup \Gamma(i)$, implying that $\calB' \in \calO^*$. This completes the proof of the lemma. 
\end{proof}

\subsection{Min-Max Allocations}
\label{subsection:min-max-allocations}
A key construct in our analysis is the min-max price threshold $\tau \in \mathbb{R}$, defined as 
\begin{align}
\tau \coloneqq \min_{(B_1, \ldots, B_n) \in \calO^*} \ \ \max_{j \in [n]} \ p^*(B_j) \label{eqn:defn-tau}
\end{align}

In addition, write $\calM^*$ to denote all the min-max allocations 
\begin{align}
\calM^* = \left\{ (B_1, \ldots, B_n) \in \calO^* \mid  \max_{j \in [n]} \ p^*(B_j) = \tau \right\} \label{eqn:defn-mstar}
\end{align}
By definition, $\calM^* \subseteq \calO^*$. 

Additionally, for each agent $i \in [n]$ and any subset of items $S$, we define $p^+_i(S)$ as the maximum price that can be achieved by toggling (adding or removing) at most one item from $\Gamma(i)$ within $S$.\footnote{Recall that $\Gamma(i)$ denotes the set of neighbors of $i$ in the bipartite graph $H$. Also, if item $t \in S$, then it holds that $S\triangle \{ t \} = S \setminus \{ t \}$. Otherwise, for $t \notin S$, we have $S \triangle \{t\} =S \cup \{t\}$.}
\begin{align}
p^+_i(S) \coloneqq \max\limits_{\substack{T \subseteq \Gamma(i), \\ |T| \leq 1}} \ p^*(S \triangle T)
\end{align} 
Note that while the item prices, $p^*_t$s, are the same for all the agents, in defining $p^+_i(\cdot)$ we add or remove an item specifically from $\Gamma(i)$. Also, for any agent $i$ and subset $S$, it holds that $p^+_i(S) \geq p^*(S)$. 

We will work with a potential $\varphi^+(\cdot)$ that, for any allocation $\calB = (B_1, \ldots, B_n)$, counts the number of agents whose bundle prices can be toggled to at least $\tau$.
\begin{align}
\varphi^+(\calB) = \Big| \left\{ i \in [n] \mid p^+_i(B_i) \geq \tau \right\} \Big| \label{eqn:defn-potential}
\end{align}

For any min-max allocation $\calB \in \calM^*$ (since $\max_{j \in [n]} p^*(B_j) = \tau$), we have $\varphi^+(\calB) \geq 1$. 

The following theorem is a key result of this section, and it shows that there exists a min-max allocation $\calA^*$ wherein the bundle prices of all the agents can be toggled to $\tau$, or above. We provide a constructive proof of \Cref{theorem:leveled-allocation-is-fair-n-efficient} in Section \ref{subsection:toggle-theorem}. 

Before that, we restate and prove \Cref{theorem:nondegenerate-fair-efficient} using this result. This connection essentially follows from the fact that, for each agent $i$, a price guarantee up to one exchange (i.e., under $p^+_i(\cdot))$ translates into a valuation $\vbar_i(\cdot)$ guarantee up to one exchange. 

\begin{theorem}\label{theorem:leveled-allocation-is-fair-n-efficient}
There exists an allocation $\calA^* \in \calM^*$ such that $\varphi^+(\calA^*) = n$. 
\end{theorem}

\subsection{Proof of \Cref{theorem:nondegenerate-fair-efficient}}
\label{subsection:leveled-allocation-works}

\nonDegFairEfficient*
\begin{proof}
Let $w^* \in \Delta_{n-1}$ be the weight vector whose existence is guaranteed by \Cref{theorem:KKM-application}, and $p^* = \OPT(\Dual(w^*, \eta))$ be the associated optimal prices. Further, the collection of allocations $\calO^*$ and $\calM^*$ are as defined in equations (\ref{eqn:defn-ostar}) and (\ref{eqn:defn-mstar}), respectively. We will consider the allocation $\alloc^* \in \calM^*$ provided in \Cref{theorem:leveled-allocation-is-fair-n-efficient}. 

Since allocation $\calA^* \in \calM^* \subseteq \calO^*$, \Cref{lemma:O-star-is-opt} implies that, as desired, allocation $\alloc^*$ is welfare maximizing, $\alloc^* \in \OPT(\textbf{\textrm{LP}}(w^*,\eta))$.

We will complete the proof by proving next that $\alloc^*=(A^*_1, \ldots, A^*_n)$ is $\EFEo$ in the instance $\barI$. For allocation $\alloc^*$ we have $\varphi^+(\alloc^*)= n$. Hence, for each agent $i \in [n]$, it holds that 
\begin{align}
p^+_i(A^*_i) \geq \tau = \max_{j \in [n]} p^*(A^*_j) \label{ineq:plus-max}
\end{align} 
The last inequality follows from the fact that $\alloc^* \in \calM^*$. Furthermore, using the definition of $p^+_i(\cdot)$ we get that there exists a set $T \subseteq \Gamma(i)$---of size at most one---such that $p^+_i(A^*_i) = p^*(A^*_i \triangle T)$. Hence, inequality (\ref{ineq:plus-max}) reduces to the following bound for agent $i \in [n]$ and each $j \in [n]$: 
\begin{align}
p^*(A^*_i \triangle T) \geq p^*(A^*_j) \label{ineq:plus-max-a}
\end{align}
Since $\alloc^* \in \calO^*$, we have $L_i \subseteq A^*_i \subseteq L_i \cup \Gamma(i)$; see equation (\ref{eqn:defn-ostar}). Also, $T \subseteq \Gamma(i)$. Hence, we have $A^*_i \triangle T \subseteq L_i \cup \Gamma(i)$. Note that for each item $t \in L_i \cup \Gamma(i)$ and, hence, for each $t \in A^*_i \triangle T$ it holds that $(w^*_i + \eta) \ \vbar_i(t) = p^*_t$. Using this equality for all the items in $A^*_i \triangle T$, we extend equation (\ref{ineq:plus-max-a}) to 
\begin{align}
(w^*_i + \eta) \ \vbar_i(A^*_i \triangle T) = p^*(A^*_i \triangle T) \geq p^*(A^*_j) \geq (w^*_i + \eta) \ \vbar_i(A_j^*) \label{ineq:i-to-j}
\end{align} 
The last inequality follows from the fact that, for each item $s \in [\mbar]$, the optimal price $p^*_s = \max_{k \in [n]} \ (w^*_k + \eta) \vbar_k(s) \geq (w^*_i + \eta) \vbar_i(s)$. 

Since $(w^*_i + \eta) > 0$, equation (\ref{ineq:i-to-j}) gives us $\vbar_i(A^*_i \triangle T) \geq \vbar_i(A^*_j)$, for each $j \in [n]$. This bound implies that $\alloc^*$ is $\EFEo$ in $\barI$, since for each agent $i \in [n]$, there exists a subset $T \subseteq [\mbar]$---of size at most one---such that $\vbar_i(A^*_i \triangle T) \geq \max_{j \in [n]} \vbar_i(A^*_j)$. The theorem stands proved. 
\end{proof}

\subsection{Proof of \Cref{theorem:leveled-allocation-is-fair-n-efficient}}
\label{subsection:toggle-theorem}

In this section, to establish \Cref{theorem:leveled-allocation-is-fair-n-efficient}, we develop an algorithm (\Cref{algo:augmenting-forest}) that takes as input any allocation $\calS \in \calM^*$, with $\varphi^+(\calS) < n$, and successfully returns another allocation $\widehat{\calS} \in \calM^*$ with a strictly higher $\varphi^+(\widehat{\calS})$. Since the integer-valued potential $\varphi^+(\cdot)$ is at most $n$ (see equation (\ref{eqn:defn-potential})), starting with an arbitrary allocation $\alloc \in \calM^* \neq \emptyset$ and applying the algorithm at most $n$ times yields the desired allocation $\alloc^* \in \calM^*$, with $\varphi^+(\alloc^*) = n$.

We begin by describing the \textsc{AugmentTree Algorithm} (\Cref{algo:augmenting-forest}). \textsc{AugmentTree Algorithm} starts with input allocation $\calS = (S_1, \ldots, S_n) \in \calM^*$ for which $\varphi^+(\calS) < n$. Hence, there exists an agent $r \in [n]$ such that $p^+_r(S_r) < \tau$. In the bipartite $H=([n] \cup U, E)$, we consider the connected component that contains $r$. Given that $H$ is a forest (\Cref{lemma:H-is-forest}), the identified connect component is a tree, and we root this tree at $r$ itself. We will, throughout, write $R$ to denote this rooted tree. 

Also, for each node $x$ in the rooted tree $R$, write $\pr(x) \in \Gamma(x)$ to denote the (unique) parent of $x$. Since $H$ is bipartite between agents and items, for each agent $i \in [n]$ in the tree, the parent $\pr(i)$ is an item. Conversely, for any item $t$ in the tree, $\pr(t)$ is an agent. Also, for the root $r$ we set $\pr(r) = \emptyset$.

Furthermore, let $\ch(x) = \Gamma(x) \setminus \{ \pr(x) \}$ denote the children of node $x$ in the rooted tree $R$. Also, as mentioned previously, each item $t$ in the bipartite graph $H$ (and, hence, in $R$) has degree at least two. Therefore, for each item $t$ in the tree, we have $\ch(t) \neq \emptyset$. 

Finally, for each agent $i$ in the rooted tree, we will use $\cho(i)$ to denote the agents that are immediate descendants of $i$, i.e., $\cho(i) \coloneqq \bigcup_{t \in \ch(i)} \ \ch(t)$.\footnote{If, for any agent $i$, the set $\ch(i)$ is empty, then so is $\cho(i)$.}

The following proposition will be used in the analysis of the algorithm -- it directly follows from the fact that, in a rooted tree, no two distinct nodes can have a common child.\footnote{Indeed, two distinct nodes can have a common parent.}
\begin{proposition}
\label{prop:no-child}
Let $i \in [n]$ be an agent in the rooted tree $R$, and let item $t \in \ch(i)$. If $t$ is a neighbor of an agent $a \in [n]$ (i.e., $t \in \Gamma(a)$), then necessarily $\pr(a) = t$ and $a \in \cho(i)$. 
\end{proposition}

We start the \textsc{AugmentTree Algorithm} by initializing the bundles $\widehat{S}_j = S_j$ for the agents $j$. Then, in the main while-loop of the \textsc{AugmentTree Algorithm}, we execute a tree traversal using a set $\calQ$, which is initialized to $\{r\}$ (\Cref{line:root-node-in-queue}). In each iteration of the while-loop, we process an arbitrary agent $i \in \calQ$. At the beginning of the while-loop iteration we have $p^+_i(\widehat{S}_i) < \tau$. Though, by the end of the loop iteration (by Line \ref{line:i-bulk-update}) we ensure that, $\widehat{S}_i$, the updated bundle of agent $i$, satisfies $p^+_i(\widehat{S}_i) \geq \tau$. To achieve this increase, we exchange items between agent $i$ and the agents in $\cho(i)$, i.e., between $i$ and its  immediate descendants (see Line \ref{line:definition-of-X}). In our key lemma (\Cref{lemma:forest-augmentation-existence}), we prove that such an $p^+_i(\cdot)$-improving exchange of items between $i$ and agents in $\cho(i)$ is always possible.

If, after this exchange, an agent $a \in \cho(i)$ is left with a bundle satisfying $p^+_{a} (\widehat{S}_{a} ) < \tau$, then we add $a$ to $\calQ$ (Lines \ref{line:add-to-Q}), in order to process it subsequently and ensure that the final bundle of $a$ satisfies $p^+_{a} (\widehat{S}_{a} ) \geq \tau$. Proceeding in this manner, the algorithm iterates as long as $\calQ \neq \emptyset$. We prove that the algorithm terminates and always returns an allocation $\widehat{\calS}$ such that $\varphi^+(\widehat{\calS}) > \varphi^+(\calS)$. In addition, the algorithm maintains throughout that $\max_{j \in [n]} p^*(\widehat{S}_j) \leq \tau$, i.e., even the returned allocation $\widehat{\calS} \in \calM^*$. As mentioned previously, this guaranteed success of the algorithm proves \Cref{theorem:leveled-allocation-is-fair-n-efficient}.\\

\begin{algorithm}[h]
   \caption{\textsc{AugmentTree Algorithm}}\label{algo:augmenting-forest}
   
   \SetKwInOut{Input}{Input}
   \SetKwInOut{Output}{Output}
   \SetKwFor{While}{while}{do}{end}
   \SetKwFor{ForAll}{for all}{do}{end}
   \SetKwIF{If}{ElseIf}{Else}{if}{then}{else if}{else}{end}
   
   \SetAlgoNlRelativeSize{-1}  
   \DontPrintSemicolon          

   \Input{Allocation $\calS = (S_1, \dots, S_n) \in \calM^*$  (i.e., $\max_j p^*(S_j)=\tau$) with $\varphi^+(\calS) < n$.}
   \Output{Allocation $\widehat{\calS} = (\widehat{S}_1, \dots, \widehat{S}_n) \in \calM^*$ satisfying $\varphi^+(\widehat{\calS})  > \varphi^+(\calS)$.}
  
Let $r \in [n]$ to be any agent with $p^+_r(S_r) < \tau$. Initialize set $\calQ = \{r\}$.\label{line:root-node-in-queue}\;

As defined above, $R$ is the tree rooted at $r$ and, for each node $x$ in the tree, $\pr(x)$ denotes the parent of $x$. In addition, $\ch(x) = \Gamma(x) \setminus \{ \pr(x) \}$ are the children of $x$. Also, for each agent $i$ in the tree, $\cho(i) = \cup_{t \in \ch(i)} \ \ch(t)$ denotes the agents that are immediate descendants of $i$.\label{line:rooting-the-tree}\; 
Initialize $\widehat{S}_j = S_j$ for all agents $j \in [n]$.\;
   \BlankLine
   \While{$\calQ \neq \emptyset$}{
       Let $i \in \calQ$ be any agent in $\calQ$. Update $\calQ \gets \calQ \setminus \{i\}$. \label{loopInvariant:i-needs-a-fix}\; 
       \BlankLine
       Set $X \subseteq \ch(i)$ to be an item subset that satisfies: (i) $p^+_i( \widehat{S}_i \triangle X) \geq \tau$,  (ii) $p^*(\widehat{S}_i \triangle X) \leq \tau$, and (iii) $p^*(\widehat{S}_i \triangle \{t\}) > p^*( \widehat{S}_i)$ for each $t \in X$. 
       {\color{teal}\{Existence of $X$ is guaranteed via \Cref{lemma:forest-augmentation-existence}.\}}\label{line:definition-of-X}\;
       \BlankLine
       \ForAll{items $t \in X$}{
        If item $t \in \widehat{S}_i$, then set $a$ to be any agent in $\ch(t) \neq \emptyset$. Otherwise, if $t \notin \widehat{S}_i$, then let $a \in \ch(t)$ be the agent such that $t \in \widehat{S}_a$. 
      {\color{teal}{\{We prove that such an agent $a \in \cho(i)$ exists.\}}}\label{line:agent-a}\;
      Update $\widehat{S}_{a} \gets \widehat{S}_{a} \triangle \{t\}$.\label{line:update-a}\; 
           \BlankLine
      If $p^+_{a}(\widehat{S}_{a}) < \tau$, then include $a$ for processing: $\calQ \gets \calQ \cup \{a\}$. {\color{teal}{\{Otherwise, if $p^+_{a}(\widehat{S}_{a}) \geq \tau$, then we do not include $a$ in $\calQ$.\}}}\label{line:add-to-Q}\;  
       } 
       \BlankLine
       Update $\widehat{S}_i \gets \widehat{S}_i \triangle X$\label{line:i-bulk-update}\;
   }
   \Return{Allocation $\widehat{\calS} = (\widehat{S}_1,\ldots, \widehat{S}_n)$.} \; 
   \end{algorithm}

Before analyzing the algorithm, we prove a key lemma that encapsulates the existential guarantee of \Cref{theorem:KKM-application}. 

 \begin{lemma}\label{lemma:forest-augmentation-existence}
 For any agent $i \in [n]$ in the rooted tree, let $S$ be an item subset such that $L_i \subseteq S \subseteq L_i \cup \Gamma(i)$ and $p^+_i(S) < \tau$. Then, there exists subset $X \subseteq \ch(i)$ satisfying the following properties
    \begin{enumerate}
       \item[(i)] $p^+_i(S \triangle X) \geq \tau$,
       \item[(ii)] $p^*(S \triangle X) \leq \tau$, and
       \item[(iii)] $p^*(S \triangle \{t\}) > p^*(S)$ for each $t \in X$.
    \end{enumerate}
 \end{lemma}
 \begin{proof}
For the agent $i \in [n]$, consider the allocation $\alloc^{(i)} \in \OPT(\textbf{\textrm{LP}}(w^*,\eta))$ from \Cref{theorem:KKM-application}, in which agent $i$ is price envy-free; in particular, $p^*(A^{(i)}_{i}) = \max_{j \in [n]} \ p^*(A^{(i)}_j)$. 
 
\Cref{lemma:O-star-is-opt} implies $\alloc^{(i)} \in \calO^*$ and, hence, we have $L_i \subseteq A^{(i)}_{i} \subseteq L_i \cup \Gamma(i)$; see equation (\ref{eqn:defn-ostar}). Furthermore, the definition of $\tau$ (equation (\ref{eqn:defn-tau})) gives us $p^*(A^{(i)}_{i}) = \max_{j \in [n]} \ p^*(A^{(i)}_j) \geq \tau$. 

Now, consider the set
\begin{align}
K \coloneqq \left\{t \in A^{(i)}_{i} \triangle S  \mid p^*(S \triangle \{t\}) > p^*(S) \right\} \label{eqn:defn-K}
\end{align}
The set $K$ consists of the chores in $S \setminus A^{(i)}_{i}$ and the goods (along with non-negatively valued items) in $A^{(i)}_{i} \setminus S$. 
Moreover, note that the bound $p^*\left(A^{(i)}_{i} \right) \geq \tau$ implies that for the set $K$ we have 
\begin{align}
p^*(S \triangle K) \geq \tau \label{ineq:imp-toggle}
\end{align}
Further, by the lemma assumption, the given set $S$ satisfies $p^+_i(S) < \tau$ and, hence, $p^*(S) < \tau$. This bound and inequality (\ref{ineq:imp-toggle}) 
ensure that $K \neq \emptyset$. We define $\widetilde{K} \subseteq K$ to be a minimal subset of $K$ with the property that $p^*\left(S \triangle \widetilde{K} \right) \geq \tau$. As in the case of $K$, it holds that $\widetilde{K} \neq \emptyset$. Finally, we obtain the desired subset $X$ as follows
 \begin{equation}\label{definition:set-X-existence}
       X \coloneqq \begin{cases}
          \widetilde{K} \setminus \{\pr(i)\}, & \quad \text{if } \pr(i) \in \widetilde{K},\\
          \widetilde{K} \setminus \{\kappa \}  \text{ for any } \kappa \in \widetilde{K} & \quad \text{otherwise, if } \pr(i) \notin \widetilde{K}.
       \end{cases}
    \end{equation}
We first show that $X \subseteq \ch(i)$. Towards this, note the containments $L_i \subseteq A^{(i)}_{i} \subseteq L_i \cup \Gamma(i)$ along with $L_i \subseteq S \subseteq L_i \cup \Gamma(i)$. Hence, $\Gamma(i) \supseteq A^{(i)}_{i} \triangle S \supseteq K \supseteq \widetilde{K}$. By construction of $X$, we have $\pr(i) \notin X$. Therefore, $X \subseteq \Gamma(i) \setminus \{ \pr(i) \} = \ch(i)$.
    
We will complete the proof by establishing that $X$ satisfies the three properties stated in the lemma. First, by the minimality of $\widetilde{K}$, we have $p^*(S \triangle X) < \tau$; this establishes property (ii) from the lemma. 

Next, note that $X$ is obtained by removing one item from $\widetilde{K} \subseteq \Gamma(i)$ and, hence, $p^+_i(S \triangle X) \geq p^*\left(S  \triangle \widetilde{K}\right) \geq \tau$. That is, property (i) holds as well. 

For property (iii), recall that, by definition, each item $t \in K$ satisfies $p^*(S \triangle \{t\}) > p^*(S)$. Since $X \subseteq \widetilde{K} \subseteq K$, the property follows. This completes the proof of the lemma. 
 \end{proof}

We analyze \textsc{AugmentTree Algorithm} by showing that each iteration of its while-loop executes successfully (\Cref{lemma:loop-succeeds}) and that the while-loop necessarily terminates (\Cref{corollary:termination}). Finally, we establish \Cref{theorem:leveled-allocation-is-fair-n-efficient} by proving that the algorithm always returns a complete allocation $\widehat{\calS} \in \calM^*$ with an increased $\varphi^+(\cdot)$ value. 

\begin{lemma}
\label{lemma:loop-succeeds}
Each iteration of the while-loop in the \textsc{AugmentTree Algorithm} executes successfully and maintains an allocation $\widehat{\calS} \in \calM^*$. Furthermore, at the end of the iteration (Line~\ref{line:i-bulk-update}), for the agent $i$ selected in that iteration, it holds that $p^+_i(\widehat{S}_i) \geq \tau$.
\end{lemma}
\begin{proof}
Fix any iteration of the while-loop, and let $i \in \calQ$ be the agent under consideration. At the beginning of this iteration, for the current allocation $\widehat{\calS}=(\widehat{S}_1, \ldots, \widehat{S}_n)$, we have $p^+_i(\widehat{S}_i) < \tau$. This follows from the fact that any agent $i$ (including the root $r$) is added to $\calQ$ only when $p^+_i(\widehat{S}_i) < \tau$. In addition, inductively, we have that the current allocation $\widehat{\calS} \in \calM^* \subseteq \calO^*$. Hence, $L_i \subseteq \widehat{S}_i \subseteq L_i \cup \Gamma(i)$. These observations allow us to apply \Cref{lemma:forest-augmentation-existence} with $S = \widehat{S}_i$ and obtain an item subset $X \subseteq \ch(i)$ that satisfies the requirements specified in Line~\ref{line:definition-of-X}.

Next, we observe that, for each item $t \in X \subseteq \ch(i)$, the desired agent $a$ (Line \ref{line:agent-a}) exists: Recall that $\ch(s) \neq \emptyset$ for any item $s$.  
Also, the current allocation $\widehat{\calS} \in \calM^* \subseteq \calO^*$ and, hence, item $t$ must to assigned (in $\widehat{\calS}$) to an agent in $\Gamma(t)$. 
Therefore, if $t \in \widehat{S}_i$, then (as stated in Line \ref{line:agent-a}) we can select any agent $a \in \ch(t)$. Otherwise, if $t \notin \widehat{S}_i$, it must be the case that $t \in \widehat{S}_a$ for an agent $a \in \ch(t)$ that satisfies $\pr(a) = t$ and $a \in \cho(i)$; see Proposition \ref{prop:no-child}. 

The existence of subset $X$ and agents $a$ ensure that the iteration executes successfully. Further, for the selected subset $X$ (\Cref{lemma:forest-augmentation-existence}) we have $p^+_i(\widehat{S}_i \triangle X) \geq \tau$. Hence, at the end of the iteration, i.e., after the update $\widehat{S}_i \gets \widehat{S}_i \triangle X$ (in Line \ref{line:i-bulk-update}), we have $p^+_i(\widehat{S}_i) \geq \tau$, as stated. 

It remains to show that, even after the updates in Lines \ref{line:update-a} and \ref{line:i-bulk-update}, the allocation $\widehat{\calS} \in \calM^*$. Note that the algorithm maintains the invariant that each item $t$ is assigned to an agent in $\Gamma(t)$. That is, the containments $L_j \subseteq \widehat{S}_j \subseteq L_j \cup \Gamma(j)$ continue to hold and, hence, the updated allocation satisfies $\widehat{\calS} \in \calO^*$. Here, for agent $i$, property (ii) of subset $X$ ensures that for the updated bundle $\widehat{S}_i$ we have $p^*(\widehat{S}_i) \leq \tau$. In addition, property (iii) ensures that for agents $a$ considered in the iteration $p^*(\widehat{S}_a)$ in fact decreases after the update in Line \ref{line:update-a}: $p^*(\widehat{S}_i) + p^*(\widehat{S}_a) =   p^*(\widehat{S}_i \triangle \{t\}) + p^*(\widehat{S}_a \triangle \{t \})$ and $p^*(\widehat{S}_i \triangle \{t\}) > p^*(\widehat{S}_i)$. Therefore, for the allocation $\widehat{\calS}$ maintained at the end of iteration, we have $p^*(\widehat{S}_j) \leq \tau$, for each agent $j$, i.e., as desired  $\widehat{\calS} \in \calM^*$. This completes the proof of the lemma. 
\end{proof}

\begin{lemma}
\label{lemma:each-agent-once}
Any agent $i \in [n]$ is included at most once in the set $\calQ$ throughout the execution of \textsc{AugmentTree Algorithm}.
\end{lemma}
\begin{proof}
The root agent $r$ is included in $\calQ$ at initialization. Any other agent $j \in [n]$ is included in $\calQ$ only when its immediate ancestor is being processed in the while-loop iteration. In particular, if $j$ is added to $\calQ$ in some iteration, then the agent $i$ being processed at that time must satisfy $j \in \cho(i)$ (see Lines~\ref{line:update-a} and~\ref{line:add-to-Q}). Since the root $r$ has no ancestors, it is never included in $\calQ$---and, hence, never processed---a second time.

By inductively applying this observation down the levels of the rooted tree, we obtain that no agent is added to $\calQ$ more than once. The lemma stands proved. 
 \end{proof}

\begin{corollary}
\label{corollary:termination}
The \textsc{AugmentTree Algorithm} (\Cref{algo:augmenting-forest}) always terminates and returns a complete allocation $\widehat{\calS} = (\widehat{S}_1, \ldots, \widehat{S}_n) \in \calM^*$.
\end{corollary}
\begin{proof}
Since each iteration of the while-loop executes successfully (\Cref{lemma:loop-succeeds}) and the while-loop can iterate at most $n$ times (\Cref{lemma:each-agent-once}), it follows that the algorithm necessarily terminates. Furthermore, given that the input allocation is contained in $\calM^*$ and the algorithm maintains this containment, $\widehat{\calS} \in \calM^*$ (\Cref{lemma:loop-succeeds}), we obtain that the returned allocation belongs to $\calM^*$ as well. This completes the proof of the corollary. 
\end{proof}

We now establish \Cref{theorem:leveled-allocation-is-fair-n-efficient}.

\begin{proof}[Proof of \Cref{theorem:leveled-allocation-is-fair-n-efficient}]
As mentioned previously, to prove the theorem it suffices to show that, for any input allocation $\calS \in \calM^*$, \textsc{AugmentTree Algorithm} returns another allocation $\widehat{\calS} \in \calM^*$ with a strictly higher $\varphi^+(\cdot)$ value. In particular, by repeatedly applying the algorithm, we obtain the desired allocation $\calA^* \in \calM^*$ with $\varphi^+(\calA^*) = n$, thereby establishing the theorem.

\Cref{corollary:termination} ensures that the algorithm successfully finds $\widehat{\calS} \in \calM^*$. We now complete the proof by showing that $\varphi^+(\widehat{\calS}) > \varphi^+(\calS)$.

Note that, at the end of the first iteration, the root agent $r$ receives a bundle with $p^+_r(\cdot)$ value at least $\tau$ (\Cref{lemma:loop-succeeds}). Since $r$’s bundle is not modified thereafter (\Cref{lemma:each-agent-once}), we have $p^+_r(\widehat{S}_r) \geq \tau$ in the returned allocation as well.
 
 Moreover, for any agent $j \in [n]$, if in the input allocation $\calS$ we have $p^+_j(S_j) \geq \tau$, then $p^+_j(\widehat{S}_j) \geq \tau$ in the returned allocation 
 $\widehat{\calS}$ as well: If $j$’s bundle was not updated during the algorithm, the assertion holds trivially, since $\widehat{S}_j = S_j$. Otherwise, if $j$’s bundle was updated and its $p^+_j(\cdot)$ value fell below $\tau$, then, by $j$’s inclusion in $\calQ$ and the subsequent updates, the algorithm ensures that $p^+_j(\widehat{S}_j) \geq \tau$ at termination; recall that $\calQ = \emptyset$ when the algorithm terminates. 
 
This observation and the inclusion of $r$ (as an agent which satisfies $p^+_r(\widehat{S}_r) \geq \tau$) imply that $\varphi^+(\widehat{\calS}) > \varphi^+(\calS)$. The theorem stands proved. 
\end{proof}

In summary, through the chain of implications illustrated in \Cref{figure:theorems}, we establish the existence of fair and efficient allocations for indivisible mixed manna under additive valuations.
\begin{figure}[h]
\begin{center}
\includegraphics[scale=0.72]{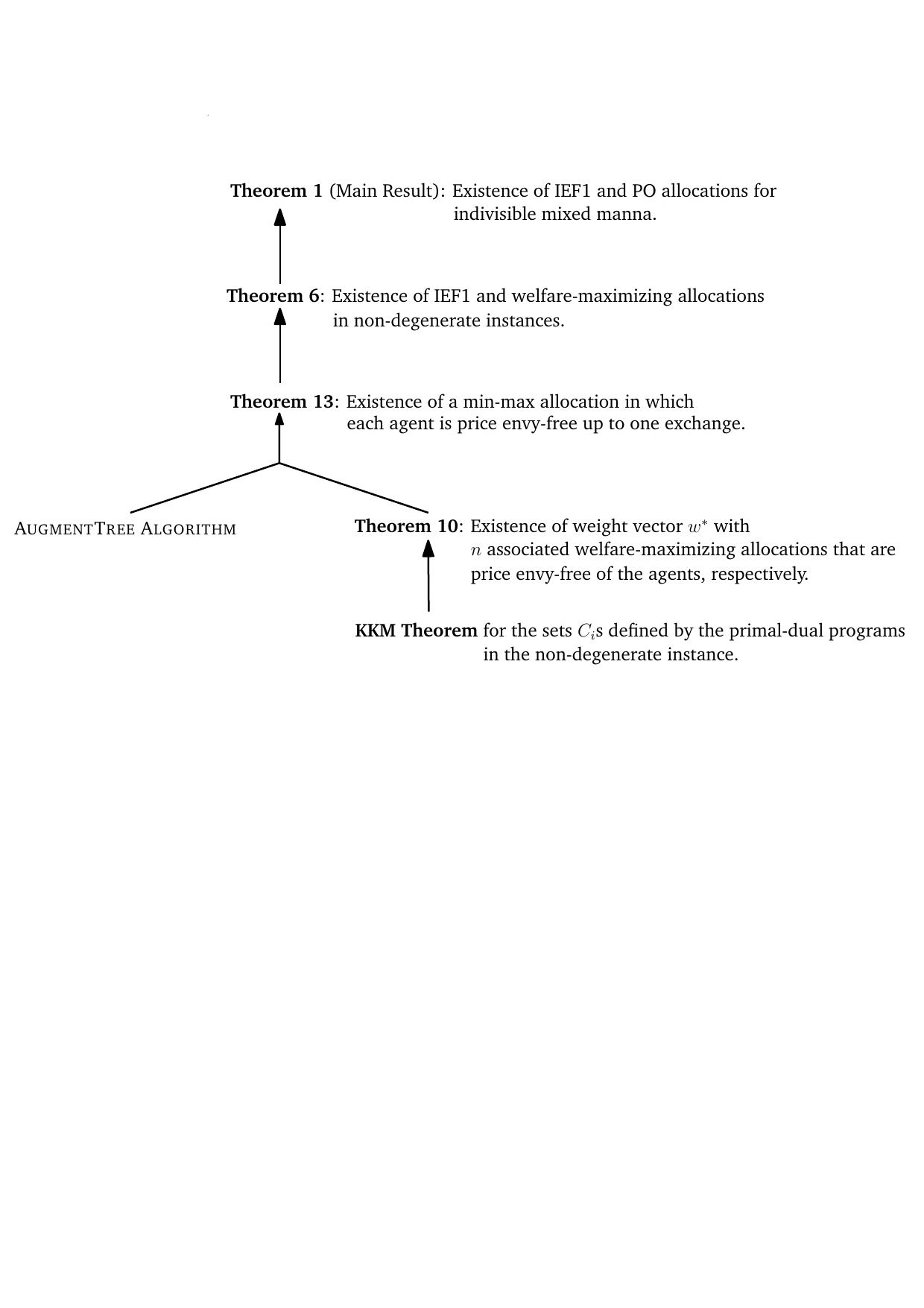}
	\end{center}
	\caption{Chain of implications established in Sections  \ref{section:phaseI}, \ref{section:setup-KKM}, and \ref{section:phaseII}.}
\label{figure:theorems}
\end{figure}

\section{$\EFEo$ and $\fPO$ for Mixed Manna}
This section strengthens \Cref{theorem:main_fair_efficient} on the efficiency front. In particular, we show that, for indivisible mixed manna with additive valuations, there always exists an (integral) allocation that is both $\EFEo$ and fractionally Pareto efficient ($\fPO$). The proof for this result relies on LP duality and limit arguments.

\begin{theorem}\label{theorem:addingFtoPO}
Every fair division instance $\calI = \angleb{[n],[m],\{v_i\}_{i=1}^n}$, with indivisible mixed manna and additive valuations, admits an $\IEFo$ and $\fPO$ allocation. 
\end{theorem}
\begin{proof}
For the given instance $\calI$, consider the non-degenerate instance $\widehat{\calI}$ as constructed in \Cref{subsection:nonDegInstanceDef}. Recall that the construction of $\widehat{\calI}$ is applicable as long as the chosen positive parameter $\varepsilon > 0$ satisfies the strict inequality in  equation (\ref{definition:varepsilon}). We will denote $\widehat{\calI}$ as $\widehat{\calI}_1$. 

Furthermore, we consider an infinite sequence of instances $\widehat{\calI}_{1}, \widehat{\calI}_{2}, \ldots, \widehat{\calI}_{k}, \ldots$ such that the instance $\widehat{\calI}_{k}$ is defined with parameter value equal to $\varepsilon/k$.  \Cref{theorem:nondegenerate-fair-efficient} ensures that, for each instance $\widehat{\calI}_{k}$, there exists a weight vector $w_k \in \Delta_{n-1}$ and an associated $\IEFo$ allocation $\alloc_k \in \OPT(\textbf{\textrm{LP}}(w_k,\eta, \widehat{\calI}_{k}))$; for clarity, in the proof here we will explicitly specify the instance $\widehat{\calI}_k$ in the LP. Since $\alloc_k$ lies in $\OPT(\textbf{\textrm{LP}}(w_k,\eta, \widehat{\calI}_{k}))$---i.e., it maximizes weighted social welfare among all fractional allocations---$\alloc_k$ is $\fPO$ in $\widehat{\calI}_{k}$. 

The claim below identifies a useful subsequence of the instances $\left\{ \widehat{\calI}_{k} \right\}_{k \in \mathbb{N}}$ and the corresponding weight vectors $\left\{ w_k \in \Delta_{n-1} \right\}_{k\in \mathbb{N}}$.    
   
\begin{claim}\label{claim:POtoFPOProof}
There exists a subsequence $\widehat{\calI}_{{\ell_1}}, \widehat{\calI}_{{\ell_2}}, \widehat{\calI}_{{\ell_3}}, \ldots$ of the infinite sequence $\left\{ \widehat{\calI}_{k} \right\}_{k \in \mathbb{N}}$ and an allocation $\alloc^* \in \Pi_n([\mbar])$ with the properties that 
 \begin{enumerate}
      \item[(1)] For all $k \in \mathbb{N}$, allocation $\alloc^*$ is $\EFEo$ in $\widehat{\calI}_{\ell_k}$ and $\alloc^* \in \OPT(\textbf{\textrm{LP}}(w_{\ell_k},\eta, \widehat{\calI}_{\ell_k}))$, where $w_{\ell_k}$ is the weight vector for $\widehat{\calI}_{\ell_k}$. 
      \item[(2)] The sequence $\left\{w_{\ell_k} \right\}_{k \in \mathbb{N}}$ converges to some $w^* \in \Delta_{n-1}$, and
      \item[(3)] The sequence of optimal price vectors $\left\{ p^{\ell_k} = \OPT({\textbf{\textrm{Dual-LP}}}(w_{{\ell_k}},\eta,\widehat{\calI}_{{{\ell}_k}})) \right\}_{k \in \mathbb{N}}$ also converges to a point $p^* \in \mathbb{R}^{\mbar}$.
   \end{enumerate}
   \end{claim}
   \begin{proof}
Consider the infinite sequence of instances $\left\{ \widehat{\calI}_{k} \right\}_{k \in \mathbb{N}}$ and the corresponding allocations $\left\{ \alloc_k \right\}_{k \in \mathbb{N}}$ that satisfy $\alloc_k \in \OPT(\textbf{\textrm{LP}}(w_k,\eta, \widehat{\calI}_{k}))$. Here, $\alloc_k$ is $\EFEo$ in $\widehat{\calI}_{k}$, for each $k$. Since the number of allocations is finite, there exists an allocation $\alloc^* \in \Pi_n([m])$ and an infinite subsequence $\left\{ \widehat{\calI}_{a_k} \right\}_{k \in \mathbb{N}}$ such that  $\alloc_{a_k}  = \alloc^*$ for each $k$.  

Now, for the subsequence $\left\{ \widehat{\calI}_{a_k} \right\}_{k \in \mathbb{N}}$, consider the associated weights $\left\{ w_{a_k} \right\}_{k \in \mathbb{N}}$. Since these weights lie in the bounded set $\Delta_{n-1}$, we obtain, via the Bolzano-Weierstrass theorem, a further subsequence $\{ w_{\ell_k} \}_{k \in \mathbb{N}}$ that converges to a point 
$w^* \in \Delta_{n-1}$; here, the weight vector $w_{\ell_k}$ is associated with instance $\widehat{\calI}_{\ell_k}$. This establishes property (2) in the lemma statement.  

Note that, for each $k \in \mathbb{N}$, we have $\alloc^* \in \OPT(\textbf{\textrm{LP}}(w_{\ell_k},\eta, \widehat{\calI}_{\ell_k}))$ and $\alloc^*$ is $\EFEo$ in $\widehat{\calI}_{\ell_k}$. This gives us property (1).

Finally, for property (3), we use the fact that the optimal prices $p^{\ell_k} = \OPT({\textbf{\textrm{Dual-LP}}}(w_{{\ell_k}},\eta,\widehat{\calI}_{{{\ell}_k}}))$ satisfy $p^{\ell_k}_t = \max\limits_{i\in [n]} \ \big( (w_{\ell_k})_i+ \eta \big) \vbar^{\ell_k}_i(t)$ for each item  $t \in [\mbar]$; here $\vbar^{\ell_k}_i(t)$ is the value of $t$ for agent $i$ in $\widehat{\calI}_{{{\ell}_k}}$. As $k$ tends to infinity, $w_{\ell_k}$ tends to $w^*$ and $\vbar^{\ell_k}_i(t)$ tends to $v_i(t)$, for each $i$ and $t \in [m]$. Also, $\vbar^{\ell_k}_i(m+1) = \lambda/2$ for each $i \in [n]$. Since the prices are continuous functions of the converging variables, by the sequential criterion of continuity, we obtain that $\lim_{k \to \infty} p^{\ell_k}_t = \max_{i \in [n]} \ (w^*_i + \eta)v_i(t)$, for each $t \in [m]$, and $\lim_{k \to \infty} p^{\ell_k}_{\mbar} = \max_{i \in [n]} \ (w^*_i + \eta)\lambda/2$, implying that $\left\{ p^{\ell_k} \right\}_{k \in \mathbb{N}}$ converges. Write the limit point as $p^* \in \mathbb{R}^{\mbar}$. This establishes property (3) and completes the proof of the claim.
\end{proof}

Using \Cref{claim:POtoFPOProof} and for the given instance $\calI$, we will next prove that $(A^*_1 \setminus \{m+1\}, \ldots, A^*_n \setminus \{m+1 \}) \in \OPT(\textbf{\textrm{LP}}(w^*,\eta, \calI))$ and $(A^*_1 \setminus \{m+1\}, \ldots, A^*_n \setminus \{m+1 \})$ is $\EFEo$ in $\calI$. This will imply that $\alloc^*$ is $\fPO$ and $\EFEo$ in given instance $\calI$, completing the proof of the theorem. 

\Cref{claim:POtoFPOProof} gives us that $\alloc^*$ is $\EFEo$ in $\widehat{\calI}_{\ell_k}$, for each $k \in \mathbb{N}$. Using this observation (for any $k$) and \Cref{lemma:EFE-non-degenerate}, we obtain that $(A^*_1 \setminus \{m+1\}, \ldots, A^*_n \setminus \{m+1 \})$ is $\EFEo$ in the instance $\calI$ as well.

It remains to prove that $(A^*_1 \setminus \{m+1\}, \ldots, A^*_n \setminus \{m+1 \}) \in \OPT(\textbf{\textrm{LP}}(w^*,\eta, \calI))$. Towards this, consider the instance $\calI'$ obtained by including item $(m+1)$ into $\calI$ and keeping the values, $v_i(t)$s, of all the other items unchanged. We first show that the tuple $(\alloc^*, p^*)$ satisfies the complementary slackness condition for the primal-dual linear programs $\textbf{\textrm{LP}}(w^*,\eta, \calI')$ and $\textbf{\textrm{Dual-LP}}(w^*,\eta, \calI')$. Together with the fact that $\alloc^*$ and $p^*$ are feasible for the linear programs $\textbf{\textrm{LP}}(w^*,\eta, \calI')$ and $\textbf{\textrm{Dual-LP}}(w^*,\eta, \calI')$, respectively, we obtain that $\alloc^* \in \OPT(\textbf{\textrm{LP}}(w^*,\eta, \calI'))$. 

To prove that complementary slackness holds, consider any agent $i \in [n]$ and item $t \in [m]$. If we have $t \notin A^*_i$, then complementary slackness holds trivially. Otherwise, if $t \in A^*_i$, then using the fact that $(\alloc^*,p^{\ell_k})$ are an optimal primal-dual pair (\Cref{claim:POtoFPOProof}), we have $p^{\ell_k}_t = \big((w_{\ell_k})_i+\eta\big) \vbar^{\ell_k}_i(t)$, where $\vbar^{\ell_k}_i(t)$ is the value of item $t$ for agent $i$ in the instance $\widehat{\calI}_{\ell_k}$. Taking the limit on both sides, we get 
 $$p^*_t = \lim_{k \to \infty} p^{\ell_k}_t = \lim_{k \to \infty} \big((w_{\ell_k})_i+\eta\big) \vbar^{\ell_k}_i(t) = (w^*_i + \eta) v_i(t).$$
A similar argument shows that the complementary slackness holds for the item $(m+1)$ as well. Hence, $(\alloc^*,p^*)$ satisfies complementary slackness, implying that 
$\alloc^* \in \OPT(\textbf{\textrm{LP}}(w^*,\eta, \calI'))$. 

Finally, note that, for all agents $i \in [n]$ and all items $t \in [m]$, the valuations $v_i(t)$s are the same between $\calI$ and $\calI'$. In addition, the optimality of $\alloc^*$ in $\calI'$ implies that each item $t \in [m]$ is assigned to an agent in $\argmax_{j \in [n]} \ (w^*_i + \eta) v_i(t)$. This per-item guarantee implies that  $(A^*_1 \setminus \{m+1\}, \ldots, A^*_n \setminus \{m+1 \}) \in \OPT(\textbf{\textrm{LP}}(w^*,\eta, \calI))$. Hence, $(A^*_1 \setminus \{m+1\}, \ldots, A^*_n \setminus \{m+1 \})$ is $\fPO$ in $\calI$. 

Overall, we have that $(A^*_1 \setminus \{m+1\}, \ldots, A^*_n \setminus \{m+1 \})$ is both $\EFEo$ and $\fPO$ in the given instance $\calI$. The theorem stands proved. 
\end{proof}

\section{$\EF$ and $\fPO$ Allocations for Divisible Mixed Manna}
 This section shows that our result for indivisible mixed manna (specifically, \Cref{theorem:addingFtoPO}) implies the envy-free ($\EF$) and $\fPO$ guarantee for divisible mixed manna under additive valuations. As mentioned previously, this result for the divisible case provides an alternate proof of the existential guarantee from \cite{Bogo2017}. 

 To prove the fairness and efficiency guarantee for divisible mixed manna, we will use the following lemma, which shows that the set of $\fPO$ allocations form a closed set.\footnote{Although this lemma seems standard, we could not find a reference, and hence, provide a proof here for completeness.}

\begin{lemma}\label{lemma:fPOIsClosed}
   For any fair division instance $\calI = \angleb{[n], [m], \{v_i\}_{i=1}^n}$, the set of all $\fPO$ fractional allocations, $\mathscr{F} \coloneqq  \left\{\calX \in [0,1]^{n \times m} \mid \calX \text{ is } \allowbreak \fPO \text{ in } \calI \right\}$, is a closed set.
\end{lemma}
\begin{proof}
Write $\calF \subset \mathbb{R}^{nm}$ to denote the set of all fractional allocations in the given instance $\calI$. Note that $\calF$ is a polytope and has finitely many faces.   

Moreover, any fractional allocation $\calX \in \calF$ is $\fPO$ iff there exists a weight vector $\lambda \in \mathbb{R}^n_{>0}$ (i.e., a vector with $\lambda_i >0$ for each $i$) such that $\calX$ maximizes the weighted social welfare with respect to $\lambda$; see \cite[Lemma 2.3]{SS22Sharing} and \cite[Proposition 16.E.2]{mas1995microeconomic}. That is, $\calX \in \argmax_{ \calY \in \calF} \ \sum_{t=1}^m \sum_{i=1}^n \lambda_i v_i(t) Y_{i,t}$.  

Since the weighted welfare maximization problem is a linear program, for any weight vector $\lambda \in \mathbb{R}^n_{>0}$, the set of welfare-maximizing allocations is a face of the polytope $\calF$. Write $\calF_\lambda$ to denote this face, $\calF_\lambda = \argmax_{ \calY \in \calF} \ \sum_{t=1}^m \sum_{i=1}^n \lambda_i v_i(t) Y_{i,t}$. 

The above-mentioned characterization of $\fPO$ implies that the set all $\fPO$ fractional allocations $\mathscr{F} = \bigcup\limits_{\lambda \in \mathbb{R}^n_{>0}} \ \calF_\lambda$. Note that this union is composed of finitely many faces; the polytope $\calF$ itself has finite number of faces. Since every face of $\calF$ is closed, we get that $\mathscr{F}$ is a finite union of closed sets, and hence, is closed as well. The lemma stands proved. 
\end{proof}

The following theorem is the main result of this section. We employ our result for indivisible items, along with limit arguments, to establish this theorem for divisible mixed manna.

\begin{theorem}\label{theomem:divisibleEFnPO}
Every fair division instance $\calD = \angleb{[n], [m], \{v_i\}_{i=1}^n}$ with divisible mixed manna and additive valuations admits an $\EF$ and $\fPO$ allocation $\calX^* \in [0,1]^{n\times m}$.
\end{theorem}
\begin{proof}
Consider the following infinite sequence of instances with indivisible items: For each $k \in \mathbb{N}$, write $\calI_k = \angleb{[n],[mk],\{v^k_i\}_{i=1}^n}$ to denote an instance with $mk$ indivisible items and agents' additive valuations $v^k_i(\cdot)$. For each (divisible) item $t \in [m]$ in the given instance $\calD$, we include $k$ indivisible items in $\calI_k$. In particular, for each $t \in [m]$, the $k$ indivisible items $(t-1)k+1, (t-1)k+2, \ldots, tk$ in $\calI_k$ are obtained by dividing the $t$th item in $\calD$ equally. Here, for each agent $i \in [n]$, in instance $\calI_k$ the values $v^k_i((t-1)k+1) = v^k_i((t-1)k+2) = \ldots = v^k_i(tk) = v_i(t)/k$. 

The following claim establishes useful connections between the instances $\calI_k$ and $\calD$.

   \begin{claim}\label{claim:surjectiveMapping}
      Let $\calF \coloneqq \{\calX \in [0,1]^{n \times m} \mid \sum_{i=1}^n X_{i,t} = 1 \text{ for all } t \in [m]\}$ be the set of fractional allocations in the given instance $\calD$ and $\calF_k = \{\calX \in [0,1]^{n \times mk} \mid \sum_{i=1}^n X_{i,t} = 1 \text{ for all } t \in [mk]\}$ be the set of fractional allocations in instance $\calI_k$. Then, there exists a surjective (onto) map $g : \calF_k \mapsto \calF$ with the property that, for each fractional allocation $\calY \in \calF_k$, every agent receives the same value in $g(\calY) \in \calF$ as in $\calY$, i.e., $v_i \left(g(\calY)_i \right) = v^{k}_i \left(Y_i \right)$ for each $i \in [n]$.  
   \end{claim}
   \begin{proof}
We define $g : \calF_k \mapsto \calF$ as follows: for any fractional allocation $\calY \in \calF_k$, set $g(\calY)_{i,t} = \frac{1}{k} \sum_{\ell = 1}^k Y_{i,(t-1)k + \ell}$, for each $i \in [n]$ and $t \in [m]$. 

First, to establish that $g$ is surjective, fix any $\calZ \in \calF$. We will show that there exists an allocation $\calX \in \calF_k$ such that $\calZ = g(\calX)$. Specifically, set $X_{i,(t-1)k + \ell} \coloneqq Z_{i,t}$, for each $i \in [n]$, $t\in [m]$, and $\ell \in [k]$. Indeed, with this definition, we have $\calX \in \calF_k$. Furthermore, $g(\calX)_{i,t} = \frac{1}{k} \sum_{\ell = 1}^k X_{i,(t-1)k + \ell} = \frac{1}{k} \sum_{\ell = 1}^k Z_{i,t} = Z_{i,t}$ for each $i \in [n]$ and $t \in [m]$. Hence, we have $\calZ = g(\calX)$, i.e., $g$ is surjective.

We next complete the proof of the claim by proving that $g$ preserves agents' values. For each  $\calY \in \calF_k$ and agent $i \in [n]$ it holds that  
   \begin{align*}
      v_i(g(\calY)_i) & = \sum_{t=1}^m v_i(t)  \  g(\calY)_{i,t} = \sum_{t=1}^m \sum_{\ell = 1}^{k} \frac{v_i(t)}{k}   Y_{i,(t-1)k + \ell} \\ & = \sum_{j=1}^n \sum_{\ell = 1}^{k}  v^{k}_i((t-1)k+\ell)  \  Y_{i,(t-1)k + \ell} \tag{$v^{k}_i((t-1)k+\ell) = \frac{v_i(t)}{k}$}\\
      & = v_i^{k}(Y_i).
   \end{align*}
   The claim stands proved. 
   \end{proof}
   For each instance $\calI_k$ in the infinite sequence $\calI_1, \calI_2, \calI_3, \ldots$, we can apply~\Cref{theorem:addingFtoPO} to infer the existence of an allocation $\alloc^k$ that is $\fPO$ and $\IEFo$ in $\calI_k$. For each $k \in \mathbb{N}$, write $\calQ^k \in [0,1]^{n \times mk}$ to denote the fractional allocation representation of $\alloc^k$, i.e., $Q^k_{i,t} = 1$ if $t \in A^k_i$ and, otherwise, $Q^k_{i,t} = 0$. From the sequence of allocations $\left\{ \calA^k \right\}_{k \in \mathbb{N}}$---equivalently, $\left\{ \calQ^k \right\}_{k \in \mathbb{N}}$ ---we construct the sequence of fractional allocations $\left\{ \calX^k \right\}_{k \in \mathbb{N}}$ in the given instance $\calD$ by setting $\calX^k \coloneqq g(\calQ^k)$ for each $k$; here $g$ is the map obtained in \Cref{claim:surjectiveMapping}. 
   
Since the set of fractional allocations in $\calD$ is closed and bounded, the Bolzano-Weierstrass theorem implies that the sequence $\left\{ \calX^k \right\}_{k \in \mathbb{N}}$ has a convergent subsequence $\calX^{a_1}, \calX^{a_2}, \calX^{a_3},\ldots$. Let $\calX^*$ be the limit point of this subsequence. 
     
We will show that the fractional allocation $\calX^* \in [0,1]^{n \times m}$ is $\EF$ and $\fPO$ in the instance $\calD$. The existence of such a fair and efficient allocation establishes the theorem. 

We first prove that $\calX^*$ is $\EF$ in $\calD$. Here, we upper bound the maximum envy in the fractional allocations $\calX^{a_k} \in [0,1]^{n \times m}$ that form the converging subsequence. Recall that $\calX^{a_k}$ is obtained by applying map $g(\cdot)$ to the (fractional representation of) the allocation $\alloc^{a_k}$ that is $\EFEo$ allocation in the instance $\calI_{a_k}$ (\Cref{theorem:addingFtoPO}). Hence, in instance $\calI_{a_k}$, for each agent $i \in [n]$,
there exists a subset $S \subseteq [mk]$ of size at most one ($|S| \leq 1$) such that $v_i^{a_k}(A^{a_k}_i \triangle S) \geq \max_{j\in [n]} v_i^{a_k}(A^{a_k}_j)$. 
That is, $v_i^{a_k}(A^{a_k}_i ) + \max_{s \in [mk]} \ |v_i^{a_k}(s)|  \geq \max_{j\in [n]} v_i^{a_k}(A^{a_k}_j)$. Therefore, the maximum envy under allocation $\alloc^{a_k}$ in instance $\calI_{a_k}$ is upper bounded as follows 
\begin{align}
\max_{i, j \in [n]} \left( v_i^{a_k}(A^{a_k}_j) - v_i^{a_k}(A^{a_k}_i) \right) \leq \max_{i \in [n], \ s \in [mk]} |v_i^{a_k}(s)| = \max_{i \in [n], \ t \in [m]} \frac{1}{a_k} |v_i(t)| \leq \max_{i \in [n], \ t \in [m]} \frac{1}{k} |v_i(t)| \label{ineq:max-envy}
\end{align}
Fractional allocation $\calX^{a_k}$ is obtained by applying map $g(\cdot)$ to $\alloc^{a_k}$. Hence, using \Cref{claim:surjectiveMapping}, we extend inequality (\ref{ineq:max-envy}) to upper bound the maximum envy in $\calX^{a_k}$:
\begin{align}
\max_{i, j \in [n]} \left( v_i(X^{a_k}_j) - v_i(X^{a_k}_i) \right) \leq \frac{1}{k} \left( \max_{i \in [n], \ t \in [m]} |v_i(t)| \right)\label{ineq:max-envy-i} 
\end{align}
Note that the maximum envy is a continuous function of the fractional allocations $\calX \in [0,1]^{n \times m}$. In addition, the right-hand-side of equation (\ref{ineq:max-envy-i}) tends to zero as $k \to \infty$. Hence, via the Order Limit Theorem, we obtain that the maximum envy in $\calX^*$ is at most zero: 
$\max_{i, j \in [n]} \left( v_i(X^*_j) - v_i(X^{*}_i) \right) \leq 0$. Therefore, as stated, $\calX^*$ is $\EF$ in the given instance $\calD$.

To prove that $\calX^*$ is $\fPO$ in the instance $\calD$, we will show that all the fractional allocations $\calX^{a_k} \in [0,1]^{n \times m}$ in the converging subsequence are $\fPO$ in $\calD$. Since the set of $\fPO$ allocations in $\calD$ is a closed set (\Cref{lemma:fPOIsClosed}), we get that the limit point of the subsequence, $\calX^*$, is also $\fPO$. 

Recall that, for each $k \in \mathbb{N}$, the fractional allocation $\calX^{a_k}$ is obtained by applying the map $g(\cdot)$ to the (fractional representation of) the allocation $\alloc^{a_k}$ that is $\fPO$ allocation in the instance $\calI_{a_k}$ (\Cref{theorem:addingFtoPO}). Now, consider any fractional allocation $\calZ$ in $\calD$ -- we will show that $\calZ$ cannot Pareto dominate $\calX^{a_k}$. Given that the map $g$ is surjective (\Cref{claim:surjectiveMapping}), there exists a fractional allocation $\calY$ in $\calI_{a_k}$ such that $\calZ = g(\calY)$. Since $g$ preserves the valuations of all the $n$ agents (\Cref{claim:surjectiveMapping}) and $\calY$ does not Pareto dominate $\calI_{a_k}$, we have that $\calZ$ does not Pareto dominate $\calX^{a_k}$. Hence, the fractional allocations $\calX^{a_k}$, along with the limit $\calX^*$, are $\fPO$. 

Overall, we obtain the existence of an $\EF$ and $\fPO$ allocation of the divisible mixed manna in the given instance $\calD$. This completes the proof of the theorem. 
\end{proof}
\section{Conclusion and Future Work}
The current work nearly resolves, in the positive, the open problem concerning the existence of $\EFo$ and $\PO$ allocations for indivisible mixed manna under additive valuations. In addition to the $\IEFo$ and $\PO$ result for mixed manna, we provide a unified framework for establishing the $\EFo$ and $\PO$ guarantee for indivisible goods and indivisible chores, respectively. Our techniques also encapsulate the divisible-items setting -- we develop an alternate proof of the established $\EF$ and $\PO$ guarantee for divisible mixed manna. 

Resolving the fairness and efficiency question exactly with the $\EFo$ criterion remains an interesting direction for future work. We note that this would require strengthening the underlying KKM instantiation itself. In particular, via examples, one can show that starting from \Cref{theorem:KKM-application} the best one can hope for is an $\IEFo$ (or $\XEFo$) guarantee. Developing pseudo-polynomial time algorithms for finding fair and efficient allocation of indivisible mixed manna (or even indivisible chores) is another interesting direction.

\bibliographystyle{alpha}
\bibliography{references}

\appendix
\section{$\EFo$ and $\PO$ for Indivisible Goods}\label{section:argument-works-for-goods}
This section provides a sketch of how the proof of \Cref{theorem:main_fair_efficient} can be modified to establish the existence of $\PO$ and $\EFo$ allocations of indivisible goods. In particular, we will show the existence of mixed-manna allocations that are $\PO$, and \emph{extrospectively} envy-free up to one item ($\XEFo$); this fairness notion is defined next. 

\begin{definition}
In an instance $\calI = \angleb{[n],[m], \{v_i\}_{i=1}^n}$, an allocation $\alloc = (A_1, \ldots, A_n)$ is said to be extrospectively envy-free up to one item ($\XEFo$) if, for all agents $i,j \in [n]$, there exists a subset $S \subseteq [m]$ of size at most one ($|S| \leq 1$) such that $v_i(A_i) \geq v_i(A_j \triangle S)$.
\end{definition}

To avoid repeating the arguments used in the proof of \Cref{theorem:main_fair_efficient}, we will primarily focus on the modifications required to establish the existence of $\PO$ and $\XEFo$ allocations.

To prove this existential result, some changes are required in the construction of the non-degenerate instance $\widehat{\calI} = \angleb{[n], [\mbar], \{\vbar_i\}_{i=1}^m}$ (see \Cref{subsection:nonDegInstanceDef} for comparison) and in the primal and dual programs (see \Cref{subsection:instace-reduction}). We begin by describing the changes in $\barI$; with a slight abuse of notations we continue to denote the non-degenerate instance by $\barI$. We set $\mbar \coloneqq m + n$, i.e., we add $n$ auxiliary items to the given instance $\calI$. For each $i \in [n]$, the valuations of the new item, $(m + i)$, is set as follows: $\vbar_i(m+i) = \lambda /2$ and $\vbar_j(m+i) = 0$, for all $j \in [n] \setminus \{i\}$. For the given items $t \in [m]$ and any agent $i\in [n]$, if $v_i(t) = 0$, we set $\vbar_i(t) = 0$, i.e., leave the zero values unchanged. Otherwise, if $v_i(t) \neq 0$, then we set $\vbar_i(t) = v_i(t) - \overline{\varepsilon}_{i,t}$. The perturbations $\overline{\varepsilon}_{i,t}$ are  drawn independently from the uniform distribution $\mathrm{Uniform}[0, \overline{\varepsilon}]$, where $\overline{\varepsilon} >0$ is fixed to satisfy

\begin{equation}\label{definition:varepsilon-updated}
\overline{\varepsilon} < \frac{\lambda \omega}{16 m n^2(m+n) \ \max\limits_{i \in [n], t \in [m]} |v_i(t)|}
\end{equation}
Here, $\lambda$ and $\omega$ are defined as in \Cref{subsection:nonDegInstanceDef}. We will now describe the modified linear programs, denoted by $\overline{\textbf{\textrm{LP}}}$ and $\overline{\textbf{\textrm{Dual-LP}}}$. Their construction will utilize the following constant
\begin{equation}\label{definition:value-of-eta-updated}
\overline{\eta} \coloneqq \frac{\lambda}{2n(m+n) \max\limits_{i\in [n], \ t \in [\mbar]}|\vbar_i(t)|}
\end{equation}
The constants $\overline{\varepsilon}$ and $\overline{\eta}$ satisfy the following analogue of \Cref{ineq:to-contradict}

\begin{equation}\label{equation:relationship-between-constants}
\overline{\varepsilon} < \frac{\overline{\eta} \omega}{4mn} 
\end{equation}

\[
\begin{array}{rlll}
\overline{\textbf{\textrm{LP}}}(w,\overline{\eta}): & & \qquad \overline{\textbf{\textrm{Dual-LP}}}(w,\overline{\eta}): \\  \ \ 
\max & \displaystyle \sum_{i \in [n]}\sum_{t \in [\mbar]} \frac{1}{(w_i + \overline{\eta})} \vbar_{i}(t) x_{i,t} & \ \ \ \qquad \min \displaystyle \sum_{t \in [\mbar]} p_t \\  
\text{s.t.} & \displaystyle \sum_{i \in [n]} x_{i, t} = 1 \ \text{ for all } t \in [\mbar] & \ \ \ \qquad \text{s.t.} \ \ \ p_t \geq \frac{1}{w_i + \overline{\eta} }\vbar_{i}(t) \text{ for all } i \in [n] \text{ and } t \in [\mbar]. \\ 
& x_{i,t} \ge 0 \ \ \text{for all } i \in [n] \text{ and } t \in [\mbar]. & &
\end{array}
\]

The following lemma is a counterpart of \Cref{lemma:EFE-non-degenerate}; its proof follows similar arguments and is omitted. 

\begin{restatable}{lemma}{nonDegEFE2}\label{lemma:EFE-non-degenerate-V2}
If an allocation $\alloc = (A_1,\ldots, A_n)$ is $\XEFo$ in the non-degenerate instance $\barI = \angleb{[n],[\mbar], \{\vbar_i\}_{i=1}^n}$, then the allocation $(A_1 \setminus \{(m+i)\}_{i=1}^n, A_2 \setminus \{(m+i)\}_{i=1}^n, \ldots, A_n \setminus \{(m+i)\}_{i=1}^n)$ is $\XEFo$ in the given instance $\calI = \angleb{[n],[m], \{v_i\}_{i=1}^n}$.
\end{restatable}

We will now state an analogue of \Cref{lemma:PO-non-degenerate}. While the proof here follows an approach similar to that of \Cref{lemma:PO-non-degenerate}, key technical aspects differ and, hence, we include the proof below.

\begin{restatable}{lemma}{nonDegPO2}\label{lemma:PO-non-degenerate-V2}
If an allocation $\alloc = (A_1,\ldots, A_n) \in \OPT(\overline{\textbf{\textrm{LP}}}(w,\overline{\eta}))$, for any $w \in \Delta_{n-1}$, then the allocation $(A_1 \setminus \{(m+i)\}_{i=1}^n, A_2 \setminus \{(m+i)\}_{i=1}^n, \ldots, A_n \setminus \{(m+i)\}_{i=1}^n)$ is $\PO$ in the given instance $\calI = \angleb{[n],[m], \{v_i\}_{i=1}^n}$.
\end{restatable}
\begin{proof}
Define an intermediate instance $\calI' = \angleb{[n], [m+n], \{v'_i\}_{i=1}^n}$ wherein $v'_i(t) = v_i(t)$, for all agents $i \in [n]$ and given items $t \in [m]$. In addition, set $v'_i(m+t) = \vbar_i(m+t)$ for each $i, t \in [n]$. Note that instance $\calI'$ is constructed by adding the auxiliary items $m+1, \ldots, m+n$ to $\calI$ and keeping the valuations of the original $m$ items unchanged. Instance $\calI'$ is defined for the purposes of analysis: We will first show that $\alloc$ is $\PO$ in $\calI'$. Then, we will conclude the proof by showing that $(A_1 \setminus \{(m+i)\}_{i=1}^n,\ldots, A_n \setminus \{(m+i)\}_{i=1}^n)$ is $\PO$ in the given instance $\calI$.

Towards a contradiction, we assume that there exists an allocation $\allocb = (B_1,\ldots, B_n)$ that Pareto dominates $\alloc$ in $\calI'$. That is, $v'_i(B_i) \geq v'_i(A_i)$, for all $i \in [n]$, and $v'_a(B_a) > v'_a(A_a)$ for some $a \in [n]$. Without loss of generality, we also assume that $\allocb$ is Pareto efficient in $\calI'$; this ensures that, for each $i \in [n]$, the auxiliary item $(m + i)$ is allocated to agent $i$ in $\allocb$. Additionally, note that each auxiliary item $(m+i)$ also gets allocated to agent $i$ in $\alloc$ since $\alloc \in \OPT(\overline{\textbf{\textrm{LP}}}(w,\overline{\eta}))$. These observations imply that

\begin{align*}
0 & < \sum_{i=1}^n v'_i(B_i) - \sum_{i=1}^n v'_i(A_i)\\
& =  n  \frac{\lambda}{2} + \sum_{i=1}^n v'_i(B_i \setminus \{(m+i)\}_{i=1}^n) -  n  \frac{\lambda}{2} - \sum_{i=1}^n v'_i(A_i\setminus \{(m+i)\}_{i=1}^n) \tag{$v'_i(m+i) = \lambda/2$ for all $i$}\\
& = \sum_{i=1}^n v_i(B_i \setminus \{(m+i)\}_{i=1}^n) - \sum_{i=1}^n v_i(A_i\setminus \{(m+i)\}_{i=1}^n) \label{equation:nonDegPOProof1-updated}\numberthis
\end{align*}
That is, $\sum_{i=1}^n v_i(B_i \setminus \{(m+i)\}_{i=1}^n) - \sum_{i=1}^n v_i(A_i\setminus \{(m+i)\}_{i=1}^n) > 0$. Using this strict inequality and the definition of $\omega$, we obtain 
\[\omega \leq \sum_{i=1}^n v_i(B_i \setminus \{m+i\}_{i=1}^n) - \sum_{i=1}^n v_i(A_i\setminus \{m+i\}_{i=1}^n) = \sum_{i=1}^n v'_i(B_i) - \sum_{i=1}^n v'_i(A_i) \label{equation:nonDegPOProof2-updated} \numberthis \]
 
 Next, we use the welfare optimality of $\alloc$ in $\barI$ (and under the weight vector $w$) to arrive at a contradiction. In particular, $\alloc \in \OPT(\overline{\textbf{\textrm{LP}}}(w,\overline{\eta}))$ implies that $\sum_{i=1}^n (w_i + \overline{\eta})^{-1} \vbar_i(B_i) \leq \sum_{i=1}^n (w_i + \overline{\eta})^{-1} \vbar_i(A_i)$. For each subset $S \subseteq [\mbar]$ and each agent $i$, it holds that $v'_i(S) - \overline{\varepsilon} |S| \leq \vbar_i(S) \leq v'_i(S)$. Therefore, we have $\sum_{i=1}^n (w_i + \overline{\eta})^{-1} (v'_i(B_i) - \overline{\varepsilon} |B_i|) \leq \sum_{i=1}^n (w_i + \overline{\eta})^{-1} v'_i(A_i)$. This inequality reduces to 
 \begin{align}
\sum_{i=1}^n \frac{1}{(w_i + \overline{\eta})} (v'_i(B_i) - v'_i(A_i)) & \leq \overline{\varepsilon} \sum_{i=1}^n \frac{1}{(w_i + \overline{\eta})}  |B_i| \nonumber \\
& \leq \overline{\varepsilon} \sum_{i=1}^n \frac{1}{\overline{\eta}}|B_i|   \tag{$w_i \geq 0$} \\ 
& = \frac{\overline{\varepsilon} (m+n)}{\overline{\eta}} \label{ineq:extra-two-updated}
\end{align}
We can lower bound the left-hand-side of inequality (\ref{ineq:extra-two-updated}) as follows 
\begin{align*}
\sum_{i=1}^n \frac{1}{(w_i + \overline{\eta})} (v'_i(B_i) - v'_i(A_i)) & \geq \frac{1}{2} \sum_{i=1}^n  (v'_i(B_i) - v'_i(A_i)) \tag{$v'_i(B_i) - v'_i(A_i) \geq 0$ and $w_i + \overline{\eta} \leq 2$}\\
& \geq \frac{\omega}{2} \tag{via (\ref{equation:nonDegPOProof2-updated})}
\end{align*}
The last inequality and equation (\ref{ineq:extra-two-updated}) imply $\frac{\omega}{2} \leq \frac{\overline{\varepsilon} (m+n)}{\overline{\eta}}$, or equivalently, $\frac{\overline{\eta} \omega}{2(m+n)} \leq \overline{\varepsilon}$. This bound, however, contradicts the strict inequality (\ref{equation:relationship-between-constants}), which states that $\overline{\varepsilon} < \frac{\overline{\eta} \omega}{4mn}$.\footnote{We have $n,m \geq 2$.} Therefore, by way of contradiction, we obtain that no allocation $\calB$ Pareto dominates $\alloc \in \OPT(\overline{\textbf{\textrm{LP}}}(w,\overline{\eta}))$ in the instance $\calI'$, i.e., $\alloc$ is $\PO$ in $\calI'$. 

Finally, we will use the Pareto optimality of $\alloc$ in $\calI'$ to show that the allocation $(A_1 \setminus \{(m+i)\}_{i=1}^n,\ldots, A_n \setminus \{(m+i)\}_{i=1}^n)$ is $\PO$ in the given instance $\calI = \angleb{[n], [m], \{v_i\}_{i=1}^n}$. Assume, towards a contradiction, that allocation $(D_1, \ldots, D_n)$ Pareto dominates $(A_1 \setminus \{(m+i)\}_{i=1}^n,\ldots, A_n \setminus \{(m+i)\}_{i=1}^n)$ in $\calI$. As observed above, for each $i \in [n]$, the item $(m+i)$ is allocated to agent $i$ in allocation $\alloc$. Hence, our assumption would imply that $(D_1 \cup \{(m+1)\}, \ldots, D_i \cup \{(m+i)\}, \ldots, D_n \cup \{(m+n)\})$ Pareto dominates $(A_1,\ldots, A_i, \ldots, A_n)$ in $\calI'$. Since this contradicts the already established Pareto optimality of $\alloc$ in $\calI'$, we obtain that $(A_1 \setminus \{(m+i)\}_{i=1}^n, \ldots, A_n \setminus \{(m+i)\}_{i=1}^n)$ is $\PO$ in the instance $\calI$. The lemma stands proved.  
\end{proof}

Another key change is how we apply the KKM theorem. Here, we define the sets $\overline{C}_1, \overline{C}_2,\ldots, \overline{C}_n \subseteq \Delta_{n-1}$ considering $\argmin$ over the optimal prices (instead of $\argmax$ as in \Cref{definition:KKM-sets-a}).

\begin{equation}\label{definition:KKM-sets}
  \overline{C}_i \coloneqq \left\{w \in \Delta_{n-1} \mid \text{ there exists } \alloc \in \OPT(\overline{\textbf{\textrm{LP}}}(w,\overline{\eta})) \text{ such that } i \in \argmin_{j \in [n]}{p(A_j)} \text{ for the optimal prices } p \right\}
\end{equation}
Note that each $\overline{C}_i$ is the set of weights $w \in \Delta_{n-1}$ for which there is an optimal allocation $\alloc \in \OPT(\overline{\textbf{\textrm{LP}}}(w,\overline{\eta}))$ wherein $i$'s bundle has the lowest price, under the optimal $p$. This defining condition, that agent $i$ is worst off in terms of optimal bundle prices, is in complete contrast to the one imposed in \Cref{section:setup-KKM} (\Cref{definition:KKM-sets-a}). Nonetheless, we show below that this somewhat counterintuitive condition leads to the desired $\XEFo$ guarantee. 

The following lemma asserts that the sets $\overline{C}_1,\ldots, \overline{C}_n \subseteq \Delta_{n-1}$ are closed; the proof of this lemma is similar to that of \Cref{lemma:KKM-closedness}.

\begin{lemma}[Closedness]\label{lemma:KKM-closedness-V2}
For each $i \in [n]$, the set $\overline{C}_i \subseteq \Delta_{n-1}$ is closed.
\end{lemma}

We will now prove that the sets $\overline{C}_1,\ldots, \overline{C}_n $ satisfy the KKM covering condition; the proof of this lemma is notably different from its counterpart, \Cref{lemma:KKM-covering}. 

\begin{lemma}[KKM Covering]\label{lemma:KKM-covering-V2}
For any $w \in \Delta_{n-1}$ there exists an $i \in \supp(w)$ such that $w \in \overline{C}_i$.
\end{lemma}
\begin{proof}
Fix a weight vector $w \in \Delta_{n-1}$. We consider two cases based on whether $\supp(w) = [n]$ or $\supp(w) \subsetneq [n]$.\\

\noindent
{\it Case {\rm I}: $\supp(w) = [n]$.} Consider any allocation $\alloc \in \OPT(\overline{\textbf{\textrm{LP}}}(w,\overline{\eta}))$ and optimal prices $p = \OPT(\overline{\textbf{\textrm{Dual-LP}}}(w,\overline{\eta}))$. Let $i' \in \argmin_{j \in [n]} p(A_j)$. In the current case $\supp(w) = [n]$ and, hence, $i' \in \supp(w)$. Additionally, we have $w \in \overline{C}_{i'}$, since $i' \in \argmin_{j \in [n]} p(A_j)$. Hence, for vectors $w$ with full support, the KKM covering condition holds. \\

\noindent
{\it Case {\rm II}: $\supp(w) \subsetneq [n]$.} In this case $[n] \setminus \supp(w) \neq \emptyset$. Write optimal prices $p = \OPT(\overline{\textbf{\textrm{Dual-LP}}}(w,\overline{\eta}))$ and consider any allocation $\alloc \in \OPT(\overline{\textbf{\textrm{LP}}}(w,\overline{\eta}))$. We will show that there exists an $i' \in \supp(w)$ with the property that $i' \in \argmin_{j \in [n]} \ p(A_j)$. For such an agent $i'$, we have $w \in C_{i'}$, and, hence, we will obtain KKM covering in the current case as well.

Define $H \coloneqq \{i \in [n] \mid w_i \geq 1/n\}$; $H \neq \emptyset$ since $\sum_{i=1}^n w_i = 1$. Also, the set of agents $[n]$ partitions into three disjoint subsets: $[n] \setminus \supp(w)$, $H$, and $\supp(w) \setminus H$. 
 
 Let $h \in \argmax_{i \in H} w_i$ be an agent with the highest weight value; note that $h \in H \subseteq \supp(w)$. 

To show that there exists an $i' \in \supp(w)$ such that $w \in \overline{C}_{i'}$, we will first show that $\min\limits_{i \in [n] \setminus \supp(w)} p(A_i) \geq p(A_h)$. This will conclude the proof, since we will obtain 
$$\min\limits_{i \in [n] \setminus \supp(w)} p(A_i) \geq p(A_h) \geq \min\limits_{j \in \supp(w)} p(A_j).$$
Here, the final inequality follows from $h \in \supp(w)$. This will in turn imply the existence of an $i' \in \supp(w)$ such that $i' \in \argmin_{j \in [n]} p(A_j)$, i.e., $w \in \overline{C}_{i'}$, completing the proof.

Hence, in the remainder of the proof, we will show that $\min\limits_{i \in [n] \setminus \supp(w)} p(A_i) \geq p(A_h)$. Towards this, we will prove the following two inequalities
\begin{equation}\label{equation:XEFo-covering-1}
\min\limits_{i \in [n] \setminus \supp(w)} p(A_i) \geq \lambda/2\overline{\eta}
\end{equation}
\begin{equation}\label{equation:XEFo-covering-2}
\lambda/2\overline{\eta} \geq p(A_h)
\end{equation}

Together inequalities (\ref{equation:XEFo-covering-1}) and (\ref{equation:XEFo-covering-2}) imply the desired bound $\min\limits_{i \in [n] \setminus \supp(w)} p(A_i) \geq \lambda/2\overline{\eta} \geq p(A_h)$.

\noindent
\emph{Establishing Inequality (\ref{equation:XEFo-covering-1}).} Note that in the instance $\widehat{\calI} = \angleb{[n], [\mbar], \{{\vbar_i}\}_{i=1}^n}$ the set of items $[\mbar]$ can be partitioned into three disjoint sets $G$, $C$, and $G_0$ (\Cref{lemma:three-types}). Consider any agent $i \in [n] \setminus \supp(w)$. By construction, we have $\vbar_i(m+i) = \lambda/2$ and $\vbar_j(m+i) = 0$ for all $j \neq i$. In the allocation $\alloc$, the item $(m+i)$ will be allocated to the agent $i$ since $\alloc$ lies in $\OPT(\overline{\textbf{\textrm{LP}}}(w,\overline{\eta}))$. Also, item $(m+i)$ has price $p_{m+i} = \frac{\lambda}{2} \  \frac{1}{w_i+\overline{\eta}} = \frac{\lambda}{2\overline{\eta}}$. Additionally, we will show that any agent $i\in [n] \setminus \supp(w)$ will not be allocated any chore from $C$ under $\alloc$. Since only chores have negative prices (\Cref{proposition:price-non-negativity}), this will imply that $p(A_i) \geq p_{m+i} = \lambda/2\overline{\eta}$, establishing $(\ref{equation:XEFo-covering-1})$. 

We will now show that no chore $c \in C$ is allocated to the agent $i \in [n] \setminus \supp(w)$. To this end, it is suffices to show that $\vbar_h(c)/(w_h + \overline{\eta}) > \vbar_i(c)/\overline{\eta}$, or equivalently $\overline{\eta} \vbar_h(c) > (w_h + \overline{\eta}) \vbar_i(c)$. To establish this, note that  

\begin{align*}
(w_h + \overline{\eta}) |\vbar_i(c)| & >  |\vbar_i(c)|/n \tag{$w_h \geq 1/n$ and $\overline{\eta} > 0$}\\
 & \geq \lambda/n \tag{$|\vbar_i(c)| = |v_i(c) - \varepsilon_{i,c}| \geq \lambda$}\\
 & >  \overline{\eta} \ \max_{a,b} |\widehat{v}_a(b)|  \tag{\Cref{definition:value-of-eta-updated}}\\
 & \geq  \overline{\eta} \ |\widehat{v}_h(c)|.
\end{align*}
Since $\widehat{v}_h(c), \vbar_i(c) < 0$, we get the desired inequality $\overline{\eta} \vbar_h(c) > (w_h + \overline{\eta}) \vbar_i(c)$.\\

\noindent
\emph{Establishing Inequality (\ref{equation:XEFo-covering-2}).} If an item $t \in [\mbar]$ is allocated to agent $h$, then we have $p_t =  \vbar_h(t)/ (w_h + \overline{\eta}) \leq \max_{a,b} |{\vbar}_a(b)|/(w_h + \overline{\eta}) < \max_{a,b} |{\vbar}_a(b)|/w_h \leq n \max_{a,b} |{\vbar}_a(b)|$, where the final inequality follows from $w_h \geq 1/n$. Therefore, $p(A_h) \leq \sum_{t \in A_h} p_t \leq (m+n) \  n\max_{a,b} |{\vbar}_a(b)|$. Finally, using the definition of $\overline{\eta}$ (\Cref{definition:value-of-eta-updated}), we obtain the inequality (\ref{equation:XEFo-covering-2}): $p(A_h) \leq n(m+n)\max_{a,b} |{\vbar}_a(b)| = \lambda/2\overline{\eta}$.

This completes the proof of the lemma. 
\end{proof}

\Cref{lemma:KKM-closedness-V2,lemma:KKM-covering-V2} together give us the following analogue of \Cref{theorem:KKM-application}.

\begin{theorem}\label{theorem:KKM-application-V2}
   There exists a point $\overline{w}^* \in \Delta_{n-1}$ with the property that, for each $i \in [n]$, we have an allocation $\alloc^{(i)} \in \OPT(\overline{\textbf{\textrm{LP}}}(\overline{w}^*,\overline{\eta}))$ wherein $i \in \argmin_{j \in [n]} p^*(A^{(i)}_j)$; here $p^* = \OPT(\overline{\textbf{\textrm{Dual-LP}}}(\overline{w}^*,\overline{\eta}))$.
\end{theorem}

The subsequent argument to prove the existence of $\PO$ and $\XEFo$ allocation also closely follows \Cref{section:phaseII}. Let  $\overline{w}^* \in \Delta_{n-1}$ be the weight vector whose existence is guaranteed in \Cref{theorem:KKM-application-V2}, and let $p^* = \OPT(\overline{\textbf{\textrm{Dual-LP}}}(\overline{w}^*,\overline{\eta}))$ be the associated optimal prices. 

For each agent $i \in [n]$, write $\overline{L}_i$ to denote the set of items that must be allocated to agent $i$ in any allocation in $\OPT(\overline{\textbf{\textrm{LP}}}(\overline{w}^*,\overline{\eta}))$, i.e., $\overline{L}_i \coloneqq \left\{ s \in [\mbar] \mid \frac{\vbar_i(s)}{(\overline{w}^*_i + \overline{\eta})} >  \frac{\vbar_j(s)}{(\overline{w}^*_j + \overline{\eta})} \text { for all } j \neq i \right\}$. Denote by $\overline{\calL} = (\overline{L}_1, \ldots, \overline{L}_n)$ the partial allocation formed by these bundles and let $\overline{U} \coloneqq [\mbar] \setminus \left(\cup_{j=1}^n \overline{L}_j\right)$.

Next, we consider a bipartite graph $\overline{H} = ([n] \cup \overline{U}, \overline{E})$, with $[n]$ and $\overline{U}$ as the left and right part, respectively. Edge $(i,t) \in \overline{E} \subseteq [n] \times \overline{U}$ is included in the bipartite graph if and only if $p^*_t = \frac{\vbar_i(t)}{\overline{w}^*_i + \overline{\eta}}$. The vertices adjacent to any vertex $x$ in $\overline{H}$ will be denoted by $\overline{\Gamma}(x)$. Arguments similar to the ones used in \Cref{section:phaseII} establish the following lemma for $\overline{H}$.

\begin{lemma}\label{lemma:H-is-forest-V2}
The bipartite graph $\overline{H} = ([n] \cup U, \overline{E})$ is a forest.
\end{lemma}

Define collection of complete allocations $\overline{O}^* \coloneqq \left\{ (B_1, \ldots, B_n) \in \Pi_n([\mbar]) \mid \overline{L}_i \subseteq B_i \subseteq \overline{L}_i \cup \overline{\Gamma}(i) \text{ for each } i \in [n] \right\}$. Analogous to \Cref{lemma:O-star-is-opt}, we have the following result for $\overline{O}^*$. 

\begin{proposition}\label{proposition:O-star-is-opt-V2}
   Any allocation $\calB \in \overline{O}^*$ if and only if $\calB \in \OPT(\overline{\textbf{\textrm{LP}}}(\overline{w}^*,\overline{\eta}))$.
\end{proposition}

\paragraph{Max-Min Allocations.} We now define some key constructs for the rest of the analysis. Note that while the high-level structure of the proof here is similar to \Cref{subsection:min-max-allocations}, multiple constructs differ from the previous ones. In particular, the price threshold $\overline{\tau}$ (defined below) has a max-min formulation in contrast to $\tau$ defined \Cref{eqn:defn-tau}. 
$$\overline{\tau} \coloneqq \max_{(B_1,\ldots, B_n)\in \overline{O}^*} \ \  \min_{i \in [n]} p^*(B_i).$$

Define set $\overline{\calM}^* \coloneqq \{(B_1,\ldots, B_n) \in \overline{O}^* \mid  \min_{j \in [n]} p^*(B_j) = \overline{\tau}\}$ as the collection of all max-min allocations. Additionally, for each agent $i \in [n]$ and subsets $S$, we define 
$$p^-_i(S) \coloneqq \min\limits_{\substack{T \subseteq \Gamma(i), \\ |T| \leq 1}} p^*(S \triangle T).$$

We also define, for each allocation $\allocb = (B_1, \ldots, B_n)$, the potential $\varphi^-(\allocb)$, which counts the number of agents whose bundle prices can be toggled to $\overline{\tau}$ or below. 
$$\varphi^-(\allocb)  \coloneqq \left|\{i \in [n] \mid p_i^-(B_i) \leq \overline{\tau}\} \right|.$$

The following theorem is a key result of this section, and it shows that there exists a  max-min allocation $\calA^*$ wherein the bundle prices of all the agents can be toggled to $\overline{\tau}$ or below.   

\begin{theorem}\label{theorem:leveled-allocation-is-fair-n-efficient-V2}
There exists an allocation $\calA^* \in \overline{\calM}^*$ such that $\varphi^-(\calA^*) = n$. 
\end{theorem}

 Before proving \Cref{theorem:leveled-allocation-is-fair-n-efficient-V2} (in \Cref{appendix:subsection-level} below), we use this result to establish the main result of this section, the existence of $\XEFo$ and $\PO$ allocations.

\begin{theorem}
\label{theorem:OutPO}
Every fair division instance $\calI = \angleb{[n], [m], \{v_i\}_{i=1}^n}$, with indivisible mixed manna and additive valuations, admits an allocation that is both extrospectively envy-free up to one item ($\XEFo$) and Pareto efficient ($\PO$).
\end{theorem}
\begin{proof}
    Consider the weight vector $\overline{w}^* \in \Delta_{n-1}$ whose existence is implied by \Cref{theorem:KKM-application-V2} and let $p^* = \allowbreak \OPT(\overline{\textbf{\textrm{Dual-LP}}}(\overline{w}^*,\overline{\eta}))$. Let $\alloc^* \in \overline{\calM}^*$ be the allocation whose existence is implied by \Cref{theorem:leveled-allocation-is-fair-n-efficient-V2}. 

 First, note that $\alloc^* \in \overline{\calM}^* \subseteq \overline{O}^*$, therefore by \Cref{proposition:O-star-is-opt-V2}, we get $\alloc^* \in \OPT(\overline{\textbf{\textrm{LP}}}(\overline{w}^*,\overline{\eta}))$. 
 
 Next, we will show that $\alloc^*$ is $\XEFo$ for the instance $\overline{\calI}$. By \Cref{theorem:leveled-allocation-is-fair-n-efficient-V2}, we have $p^-_j(A^*_j) \leq \overline{\tau}$ for all $j \in [n]$. Since, $\alloc^* \in \overline{\calM}^*$ we also have $\overline{\tau} = \min_{i \in [n]} \  p^*(A^*_i)$. Hence, for any pair of agents $i,j \in [n]$, we have

    \begin{equation}\label{equation:pEF-to-EF-1-V2}
    \min\limits_{\substack{T \subseteq \Gamma(j), \\ |T| \leq 1}} p^*(A^*_j \triangle T) = p^-_j(A^*_j) \leq \overline{\tau} \leq p^*(A^*_i).
    \end{equation}
Recall that $\alloc^* \in \overline{O}^*$ and, hence, $\alloc^* \in \OPT(\overline{\textbf{\textrm{LP}}}(\overline{w}^*,\overline{\eta}))$ (\Cref{proposition:O-star-is-opt-V2}). Therefore, for each item $t \in A^*_i \subseteq \overline{L}_i \cup \overline{\Gamma}(i)$ it holds that $p^*_t = \vbar_i(t)/(\overline{w}^*_i + \overline{\eta})$. Furthermore, for each item $s \in A^*_j \Delta T$, we have $p^*_s = \max_{a \in [n]} \vbar_a(s)/(\overline{w}^*_a + \overline{\eta}) \geq \vbar_i(s)/(\overline{w}^*_i + \overline{\eta})$. Using these bounds with \Cref{equation:pEF-to-EF-1-V2}, we obtain  
    \begin{equation}\label{equation:pEF-to-EF-2-V2}
        \frac{1}{\overline{w}^*_i + \overline{\eta}} \ \min\limits_{\substack{T \subseteq \Gamma(j), \\ |T| \leq 1}} \vbar_i(A^*_j \triangle T) \leq \min\limits_{\substack{T \subseteq \Gamma(j), \\ |T| \leq 1}} p^*(A^*_j \triangle T) \leq p^*(A^*_i) = \frac{1}{\overline{w}^*_i + \overline{\eta}} \ \vbar_i(A^*_i).
    \end{equation}
Since $(\overline{w}^*_i + \overline{\eta}) > 0$, we get that $\alloc^*$ is $\XEFo$ in the instance $\widehat{\calI}$: For each pair of agents $i,j \in [n]$, there exists a a subset $T \subseteq \overline{\Gamma}(j) \subseteq [\mbar]$ of size at most one ($|T| \leq 1$) such that $\vbar_i(A^*_i) \geq \vbar_i(A^*_j \triangle T)$.

Therefore, we have that the allocation $\alloc^*$ is $\XEFo$ in $\overline{\calI}$ and $\alloc^* \in \OPT(\overline{\textbf{\textrm{LP}}}(\overline{w}^*,\overline{\eta}))$. Finally, \Cref{lemma:EFE-non-degenerate-V2,lemma:PO-non-degenerate-V2} imply that the allocation $(A^*_1 \setminus \{m+1\}, A^*_2 \setminus \{m+1\},\ldots, A^*_n \setminus \{m+1\})$ is $\XEFo$ and $\PO$ in the given instance $\calI$. This completes the proof of the theorem. 
\end{proof}

\subsection{Proof of \Cref{theorem:leveled-allocation-is-fair-n-efficient-V2}}
\label{appendix:subsection-level}
   \begin{algorithm}[h]
   \caption{\textsc{$\XEFo$AugmentTree Algorithm}}\label{algo:augmenting-forest-2}
   
   \SetKwInOut{Input}{Input}
   \SetKwInOut{Output}{Output}
   \SetKwFor{While}{while}{do}{end}
   \SetKwFor{ForAll}{for all}{do}{end}
   \SetKwIF{If}{ElseIf}{Else}{if}{then}{else if}{else}{end}
   
   \SetAlgoNlRelativeSize{-1}  
   \DontPrintSemicolon          

   \Input{Allocation $\calS = (S_1, \dots, S_n) \in \overline{\calM}^*$  (i.e., $\min_i p^*(S_i)=\overline{\tau}$) with $\varphi^-(\calS) < n$.}
   \Output{Allocation $\widehat{\calS} = (\widehat{S}_1, \dots, \widehat{S}_n) \in \overline{\calM}^*$ satisfying $\varphi^-(\widehat{\calS})  > \varphi^-(\calS)$.}
  
Let $r \in [n]$ to be any agent with $p^-_r(S_r) > \overline{\tau}$. Initialize set $\calQ = \{r\}$.\label{line:root-node-in-queue-V2}\;

Write $R$ to denote the tree in bipartite graph (forest) $\overline{H}$ rooted at $r$. For each node $x$ in the tree $R$, let $\pr(x)$ denote the parent of $x$. In addition, $\ch(x) = \Gamma(x) \setminus \{ \pr(x) \}$ are the children of $x$ in the tree. Also, for each agent $i$ in the tree, $\cho(i) = \cup_{t \in \ch(i)} \ \ch(t)$ denotes the agents that are immediate descendants of $i$.\label{line:rooting-the-tree-V2}\; 
Initialize $\widehat{S}_j = S_j$ for all agents $j \in [n]$.\;
   \BlankLine
   \While{$\calQ \neq \emptyset$}{
       Let $i \in \calQ$ be any agent in $\calQ$. Update $\calQ \gets \calQ \setminus \{i\}$. \label{loopInvariant:i-needs-a-fix}\; 
       \BlankLine
       
       Set $X \subseteq \ch(i)$ to be an item subset that satisfies: (i) $p^-_i( \widehat{S}_i \triangle X) \leq \overline{\tau}$,  (ii) $p^*(\widehat{S}_i \triangle X) \geq \overline{\tau}$, and (iii) $p^*(\widehat{S}_i \triangle \{t\}) < p^*( \widehat{S}_i)$ for each $t \in X$. 
       {\color{teal}\{Existence of $X$ is guaranteed via \Cref{lemma:forest-augmentation-existence-V2}.\}}\label{line:definition-of-X-V2}\;
       \BlankLine
       \ForAll{items $t \in X$}{
        If item $t \in \widehat{S}_i$, then set $a$ to be any agent in $\ch(t) \neq \emptyset$. Otherwise, if $t \notin \widehat{S}_i$, then let $a \in \ch(t)$ be the agent such that $t \in \widehat{S}_a$. 
      {\color{teal}{\{Such an agent $a \in \cho(i)$ always exists.\}}}\label{line:agent-a-V2}\;
      Update $\widehat{S}_{a} \gets \widehat{S}_{a} \triangle \{t\}$.\label{line:update-a-V2}\; 
           \BlankLine
      If $p^-_{a}(\widehat{S}_{a}) > \overline{\tau}$, then include $a$ for processing: $\calQ \gets \calQ \cup \{a\}$. {\color{teal}{\{Otherwise, if $p^-_{a}(\widehat{S}_{a}) \leq \overline{\tau}$, then we do not include $a$ in $\calQ$.\}}}\label{line:add-to-Q-V2}\;  
       } 
       \BlankLine
       Update $\widehat{S}_i \gets \widehat{S}_i \triangle X$\label{line:i-bulk-update-V2}\;
   }
   \Return{Allocation $\widehat{\calS} = (\widehat{S}_1,\ldots, \widehat{S}_n)$.} \;  
   \end{algorithm}

The proofs of the following three lemmas (Lemmas \ref{lemma:forest-augmentation-existence-V2} to \ref{lemma:each-agent-once-V2}) follow along the lines of the proofs of Lemmas \ref{lemma:forest-augmentation-existence} to \ref{lemma:each-agent-once}, and hence are omitted.

 \begin{lemma}\label{lemma:forest-augmentation-existence-V2}
 For any agent $i \in [n]$ in the rooted tree, let $S$ be an item subset such that $\overline{L}_i \subseteq S \subseteq \overline{L}_i \cup \overline{\Gamma}(i)$ and $p^-_i(S) > \overline{\tau}$. Then, there exists subset $X \subseteq \ch(i)$ satisfying the following properties
    \begin{enumerate}
       \item[(i)] $p^-_i(S \triangle X) \leq \overline{\tau}$,
       \item[(ii)] $p^*(S \triangle X) \geq \overline{\tau}$, and
       \item[(iii)] $p^*(S \triangle \{t\}) < p^*(S)$ for each $t \in X$.
    \end{enumerate}
 \end{lemma}

 \begin{lemma}
\label{lemma:loop-succeeds-V2}
Each iteration of the while-loop in the \textsc{$\XEFo$AugmentTree Algorithm} executes successfully and maintains an allocation $\widehat{\calS} \in \overline{\calM}^*$. Furthermore, at the end of the iteration (Line~\ref{line:i-bulk-update-V2}), for the agent $i$ selected in that iteration, it holds that $p^-_i(\widehat{S}_i) \leq \overline{\tau}$.
\end{lemma}

\begin{lemma}
\label{lemma:each-agent-once-V2}
Any agent $i \in [n]$ is included at most once in the set $\calQ$ throughout the execution of \textsc{$\XEFo$AugmentTree Algorithm}.
\end{lemma}

The corollary below follows from Lemmas \ref{lemma:loop-succeeds-V2} and \ref{lemma:each-agent-once-V2}.
\begin{corollary}
\label{corollary:termination-V2}
The \textsc{$\XEFo$AugmentTree Algorithm} (\Cref{algo:augmenting-forest-2}) always terminates and returns a complete allocation $\widehat{\calS} = (\widehat{S}_1, \ldots, \widehat{S}_n) \in \calM^*$.
\end{corollary}

We now establish \Cref{theorem:leveled-allocation-is-fair-n-efficient-V2}.

\begin{proof}[Proof of \Cref{theorem:leveled-allocation-is-fair-n-efficient-V2}]
To prove the theorem it suffices to show that, for any input allocation $\calS \in \calM^*$ with $\varphi^-(\calS) < n$, \textsc{$\XEFo$AugmentTree Algorithm} returns another allocation $\widehat{\calS} \in \calM^*$ with a strictly higher $\varphi^-(\cdot)$ value. In particular, by repeatedly applying the algorithm, we obtain the desired allocation $\calA^* \in \overline{\calM}^*$ with $\varphi^-(\calA^*) = n$, thereby establishing the theorem.

\Cref{corollary:termination-V2} ensures that the algorithm successfully finds $\widehat{\calS} \in \overline{\calM}^*$. We now complete the proof by showing that $\varphi^-(\widehat{\calS}) > \varphi^-(\calS)$.

Note that, at the end of the first iteration, the root agent $r$ receives a bundle with $p^-_r(\cdot)$ value at most $\overline{\tau}$ (\Cref{lemma:loop-succeeds-V2}). Since $r$’s bundle is not modified thereafter (\Cref{lemma:each-agent-once-V2}), we have $p^-_r(\widehat{S}_r) \leq \overline{\tau}$ in the returned allocation as well.
 
 Moreover, for any agent $j \in [n]$, if in the input allocation $\calS$ we have $p^-_j(S_j) \leq \overline{\tau}$, then $p^-_j(\widehat{S}_j) \leq \overline{\tau}$ in the returned allocation $\widehat{\calS}$ as well: If $j$’s bundle was not updated during the algorithm, the assertion holds trivially, since $\widehat{S}_j = S_j$. Otherwise, if $j$’s bundle was updated and its $p^-_j(\cdot)$ value became more than $\overline{\tau}$, then, by $j$’s inclusion in $\calQ$ and the subsequent updates, the algorithm ensures that $p^-_j(\widehat{S}_j) \leq \overline{\tau}$ at termination; recall that $\calQ = \emptyset$ when the algorithm terminates. 
 
This observation and the inclusion of $r$ (as an agent which satisfies $p^-_r(\widehat{S}_r) \leq \overline{\tau}$) imply that $\varphi^-(\widehat{\calS}) > \varphi^-(\calS)$. The theorem stands proved. 
\end{proof}
\section{Scope for $\EFo$}
\label{appendix:reach-example}
This section provides an example to show that, starting from \Cref{theorem:KKM-application}, the best one can hope for is an $\IEFo$---and not an $\EFo$---guarantee.

Consider an instance with three agents ($n=3$) and three items, $\{g_1, g_2, g_3\}$. The valuations of the items (all goods) are listed in \Cref{table:values}. 

\begin{table}[h!]
\centering
\begin{tabular}{|c | c | c | c |} 
 \hline
 & $g_1$ & $g_2$  & $g_3$ \\ 
 \hline
$v_1(\cdot)$ & 30 & 1 & 1 \\
\hline
$v_2(\cdot)$ & 30 & 1 & 1 \\
\hline
$v_3(\cdot)$ & 30 & 3 & 3 \\
\hline
\end{tabular}
\caption{Valuations of the three agents for the three goods.}
\label{table:values}
\end{table}
As detailed in \Cref{subsection:nonDegInstanceDef}, we additively perturb the given valuations $v_i(t)$s---by random $\varepsilon_{i,t}$s---to construct the values $\vbar_i(t)$s in non-degenerate instance. In particular, let $\varepsilon_{1,1}$, $\varepsilon_{2,1}$, and $\varepsilon_{3,1}$ be the perturbations for the first good, $g_1$, among the three agents, respectively. Also, let $\eta>0$ be as defined in \Cref{definition:value-of-eta}.

Now, consider an weight vector $w^*=(w^*_1, w^*_2, w^*_3)$ whose nonnegative components sum up to one and satisfy 
\begin{align}
(w^*_1 + \eta) \left(30- \varepsilon_{1,1} \right) = (w^*_2 + \eta) \left(30- \varepsilon_{2,1} \right) = (w^*_3 + \eta) \left(30- \varepsilon_{3,1} \right)  \label{eq:equal-price}
\end{align}
Since $\varepsilon_{i,t}$s and $\eta$ are sufficiently small, there exist such $w^*_1, w^*_2, w^*_3 \approx 1/3$. 

Here, for the first good, the optimal price $p^*(g_1) = (w^*_1 + \eta) \left(30- \varepsilon_{1,1} \right) = (w^*_2 + \eta) \left(30- \varepsilon_{2,1} \right) = (w^*_3 + \eta) \left(30- \varepsilon_{3,1} \right)$. In addition, since $w^*_1, w^*_2, w^*_3 \approx 1/3$, for the remaining goods the prices---$p^*_t = \max_{i \in [3]} \ \left( w^*_i + \eta \right) \vbar_i(t)$---are as follows. 
  \begin{center}
 \begin{tabular}{|c | c | c | c | c |} 
 \hline
 & $g_2$  & $g_3$  \\ 
 \hline
$p^*$ & $\left(w^*_3+ \eta \right) \left(3  - \varepsilon_{3,2} \right)$ & $\left(w^*_3+ \eta \right) \left(3 - \varepsilon_{3,3} \right)$ \\
\hline
\end{tabular}
\end{center}
We next note that the above-mentioned weight vector $w^* \in \Delta_{n-1}$ and optimal prices $p^*$ satisfy \Cref{theorem:KKM-application}. Here, in any optimal allocation $(A^*_1, A^*_2, A^*_3) \in \OPT(\textbf{\textrm{LP}}(w^*,\eta))$, goods $g_2$ and $g_3$ are  necessarily assigned to agent $3$. That is, for any optimal allocation $(A^*_1, A^*_2, A^*_3)$ it holds that $\{ g_2, g_3 \} \subseteq A^*_3$. The good $g_1$ can be allocated to any one of the three agents while maintaining optimality.\footnote{The auxiliary good introduced in the non-degenerate instance has a sufficiently low value and, hence, its allocation does not impact the arguments here.}
  
These observations ensure that $w^*$ and $p^*$ conform to the guarantee in \Cref{theorem:KKM-application}. In particular, each agent $i \in [3]$ can be made price envy-free by assigning $g_1$ to it. Specifically, for each $i \in [3]$, including $g_1$ in $A^{(i)}_i$ gives us an optimal allocation $\alloc^{(i)}$ wherein $i \in \argmax_{k \in [3]} \ p^*(A^{(i)}_k)$. For instance, for $A^{(1)}_1 = \{g_1\}$ along with $A^{(1)}_2 = \emptyset$ and $A^{(1)}_3 = \{g_2, g_3\}$, we have $p^*(A^{(1)}_1) \geq 9 > p^*(A^{(1)}_2)$ and $p^*(A^{(1)}_1) \geq 9 > p^*(A^{(1)}_3)$. 

Also, observe that this allocation $\alloc^{(1)} = (A^{(1)}_1, A^{(1)}_2, A^{(1)}_3)$ is $\IEFo$ (under the given valuations $v_i$s) -- agent $1$ is envy-free under $\alloc^{(1)}$ and agents $2$ and $3$ can (hypothetically) include good $g_1$ into their own bundle to eliminate their existing envy. 

However, in this setting, no allocation in $\OPT(\textbf{\textrm{LP}}(w^*,\eta))$ is $\EFo$: As mentioned previously, under any optimal allocation $\alloc=(A_1, A_2, A_3)$ goods $g_2$ and $g_3$ are necessarily assigned to agent $3$. Now, either the first or the second agent will not receive $g_1$ in $\alloc$. For such an agent $i \in \{1,2\}$, we have $v_i(A_i) = 0 < v_i(A_3)$ and the envy persists even after removal of any good from $A_3$, since $v_i(A_3 \setminus \{g \}) \geq 1$ for each $g \in A_3$. 

Therefore, starting with a weight vector $w^*$ and prices $p^*$ as in \Cref{theorem:KKM-application}, it is not necessary that we can identify an $\EFo$ allocation in $\OPT(\textbf{\textrm{LP}}(w^*,\eta))$.

\end{document}